\newtheorem{remark}{Remark}
\newtheorem{example}{Example}
\journal{??}
\newtheorem{sch}{Scheme}[section]
\theoremstyle{plain}
\newtheorem{thm}{Theorem}[section]
\theoremstyle{remark}
\def\d{\mathrm{d}}
\def\p{\partial}
\def\btau{\bm{ \tau }}
\def\bsig{\bm{ \sigma }}
\def\ol{\overline}
\def\fr{\frac{1}{2}}
\def\half{n+\frac{1}{2}}
\def\d{\mathrm{d}}
\def\bF{\mathbf{F}}
\def\bp{\mathbf{p}}
\def\bn{\mathbf{n}}
\def\bx{\bm{x}}
\def\bv{\mathbf{v}}
\def\bu{\mathbf{u}}
\def\wt{\widetilde}
\def\wh{\widehat}
\def\p{\partial}
\def\b0{\bm{0}}
\def\cA{\mathcal{A}}
\def\cB{\mathcal{B}}
\def\bg{\mathbf{g}}
\def\bD{\mathbf{D}}
\def\bI{\mathbf{I}}
\def\bR{\mathbf{R}}
\def\bV{\mathbf{V}}
\def\bbR{\mathbb{R}}
\def\bR{\mathbf{R}}
\def\dist{\mathrm{dist}}
\def\hf{n + \frac{1}{2}}
\def\drag{\mathrm{drag}}
\def\ex{\mathrm{ex}}
\newcommand{\ben}{\begin{eqnarray}}
\newcommand{\een}{\end{eqnarray}}
\newcommand{\beq}{\begin{equation}}
\newcommand{\eeq}{\end{equation}}
\newcommand{\bea}{\begin{array}}
\newcommand{\eea}{\end{array}}
\newcommand{\bef}{\begin{figure}[H]}
\newcommand{\eef}{\end{figure}}
\numberwithin{equation}{section}
\numberwithin{equation}{section}
\def\balc#1\ealc{\begin{align}\begin{cases}#1\end{cases}\end{align}}
\def\bseq#1\eseq{\begin{subequations}\begin{numcases}{}#1\end{numcases}\end{subequations}}
\begin{document}

\begin{frontmatter}

\title{A hybrid phase field method for fluid-structure interactions in viscous fluids}

\author[mymainaddress,mymainaddress2,mymainaddress3]{Qi Hong}
\author[mysecondaddress]{Qi Wang}
\cortext[mycorrespondingauthor]{Corresponding author}
\ead{qwang@math.sc.edu}
\address[mymainaddress]{College of Science, Nanjing University of Aeronautics and Astronautics, Nanjing 210016, China}
\address[mymainaddress2]{Key Laboratory of Mathematical Modelling and High Performance Computing of Air Vehicles (NUAA), MIIT, Nanjing  211106, China}
\address[mymainaddress3]{{Jiangsu Key Laboratory for Numerical Simulation of Large Scale Complex System}}
\address[mysecondaddress]{Department of Mathematics, University of South Carolina, Columbia, SC 29208, USA}
\begin{abstract}

We present a novel computational modeling framework to numerically investigate fluid-structure interaction in viscous fluids using the phase field embedding method. Each rigid body or elastic structure  immersed in the incompressible viscous fluid matrix, grossly referred to as the particle in this paper,  is identified by a volume preserving phase field. The motion of the particle is driven by the  fluid velocity in the matrix for passive particles or combined with  its self-propelling velocity for active particles. The excluded volume effect between a pair of particles or between a particle and the boundary is modeled by a repulsive potential force. The drag exerted to the fluid by a particle is assumed proportional to its velocity. When the particle is rigid, its state is described by a zero velocity gradient tensor within the nonzero phase field that defines its profile and a constraining stress exists therein. While the particle is elastic, a linear constitutive equation for the elastic stress is provided within the particle domain.  A hybrid, thermodynamically consistent hydrodynamic model valid in the entire computational domain is then derived for the fluid-particle ensemble  using the generalized Onsager principle accounting for both rigid and elastic particles. Structure-preserving numerical algorithms are subsequently developed for the thermodynamically consistent model. Numerical tests in 2D and 3D space are carried out to verify the rate of convergence and numerical examples are given to demonstrate the usefulness of the computational framework for simulating fluid-structure interactions for passive as well as self-propelling active particles in a viscous fluid matrix.
\end{abstract}

\begin{keyword}
Phase field, hydrodynamics, hybrid model, active matter, fluid-structure interaction, structure-preserving algorithm.
\end{keyword}

\end{frontmatter}

\section{Introduction}
\label{sec:Introduction}
Fluid-structure interaction (FSI)  phenomena  exist ubiquitously in nature and arise in many scientific and engineering settings.
Well-known examples include air flows passing air-craft wings,  blood flows through arteries carrying red blood cells, the response of bridges and tall buildings to winds, the vibration of turbine and compressor blades while operating, the oscillation of heat exchangers etc. For interested readers, please refer to \cite{hou2012numerical} for a detailed review. In order to better understand the underlying physical mechanism in these phenomena arising from FSI, one needs to find suitable ways to model and simulate the multiphase materials systems involved in the phenomena.

Modeling of FSI problems is  challenging mathematically and computationally.
Traditionally,  one studies FSI problems using two distinctive material's descriptions for the grossly two phase material system: the Eulerian description for the fluid and the Lagrange description for the solid structure.
The Arbitrary Lagrangian-Eulerian (ALE) method \cite{Hughes1981lagrangian,Sarrate2001Arbitrary,Dettmer2006comp} is a popular approach to dealing with FSI problems, where the computational domain is subdivided into a fluid domain that is described in an Eulerian coordinate system  and a solid structure domain that is usually treated in a Lagrange description.
Both meshes are aligned at the fluid-solid interface which typically moves with the calculated velocity. However, the computational domain have to re-meshed or updated as the solid moves or deforms, which induces a quite expensive computational cost. Another popular class of methods to study FSI is the interface-tracking methods,
in which the fluid, solid and the interface are described by a labeling or phase variable. The most popular interface-tracking methods include the level-set \cite{Legay2006Euler,Cottet2008Euler}, volume-of-fluid \cite{Hirt1981volume,Sugiyama2011full} and front-tracking method \cite{Trygvason2001JCP}. But the implementation of the interface tracking method in three-dimensional space can be very complicated, computational expensive and sometimes computationally prohibitive.

Recently, the phase-field approach \cite{Gomez2008Iso}, i.e., the diffuse interface method, has become a popular choice for modeling multi-phase  materials in materials and life sciences (cf. for instance \cite{Gong_SISC2018,Kou2018Thermo,Lin2006Sim,Liu2017Dynamic,Wang-JCP2011,Xu2019three,Yu2017Numerical}). Although the phase-field models are widely used in studying multi-phase materials systems (e.g., alloys, complex fluids, etc.), their use in fluid-structure interaction problem has not been explored extensively \cite{Sun2014full,mokbel2018a}. Soft bodies in viscous fluid flows such as red blood cells in blood flows, have been modeled and simulated using phase field models in the past \cite{Lin2021}. However, truly rigid bodies in viscous fluid matrix has not been explored rigorously in the context of nonequilibrium thermodynamics.

When a hydrodynamic phase field model is derived by a variational method based on the  Onsager principle for multiphase fluid flows \cite{OnsagerL1, OnsagerL2, Wang2021Book}, the governing hydrodynamic phase field system naturally admits an energy dissipation law and is therefore thermodynamically consistent and mathematically well-posed.  The variational structure and property of the modeling approach has been exploited to develop structure-preserving, energy stable numerical solvers, where the numerical scheme preserving some structural  properties of the model is called the geometric integrator or structure-preserving numerical algorithm.

When the energy dissipation property in the model is preserved at the discrete level while  the equations of the model are discretized.  A numerical scheme that preserves this property is known as an energy stable scheme. There have been quite a few efficient numerical methods developed to obtain energy stable algorithms for solving thermodynamical and hydrodynamical phase-field models, for instances, the fully-implicit structure-preserving schemes \cite{Boyer2011Numer,Xu2019On}, the convex splitting schemes \cite{Guan&Lowengrub&Wang&WiseJCP2014, Shen&Wang&Wise2012, WangDCDS2010, WiseSINUMA2009}, stabilizer technique \cite{Li2016Characterizing, SY_DCDSA, Gao2018Decoupled, Yang09}. In the past a few years, the energy quadratization  method and the subsequent   scalar auxiliary variable method have
fueled the development for energy stable schemes \cite{ ZhaoJ2017,ShenJ2018,Cheng2020new, Lin2020gPAV,Yang2020roadmap,Yang2021CMAME}. However, when some of these methods are applied to solving thermodynamically consistent models,  the resulting schemes warrant  a modified energy dissipation law instead of  the original energy dissipative law. Recently, \cite{Hong2020SVM,Gong2020SVM} proposed a general supplementary variable method outlining a strategy to design structure-preserving schemes for any PDE systems with deduced equations to ensure that the deduced equations are consistent with the discretized equations. It offers  a great deal of flexibilities in the development of structure-preserving numerical algorithms.

In this paper, we present a new modeling framework for developing mathematical models describing fluid structure interactions between rigid particle structures in a viscous fluid matrix. The rigid ``structures" are represented by phase fields which can be either fixed, driven by the flow or self-propelled while immersed in the viscous fluid matrix. The structure can be of an arbitrary morphology in principle. While immersed in the fluid matrix, the structure exerts a drag to the fluid and there exists a repulsive force between each pair of the structures when they move close to each other. The density of the structures can be different and different from the surrounding fluid matrix. For simplicity and illustration purposes, we assume the viscous fluid is described by a Brinkman model and densities of the structures and the fluid are identical. We defer the discussion of the quasi-incompressible model for systems with distinct densities to a sequel because the numerical treatment would be quite different and more challenging for those models. In the current study, two types of efficient and energy-production-rate preserving numerical approximations, including the coupled and decoupled schemes, are proposed to solve the newly developed hybrid, hydrodynamical phase-field model, where the linear-implicit Crank-Nicolson scheme and the staggered-grid finite difference method is utilized to discretize the governing system of equations in time and space, respectively. We remark that additional energy-dissipation rate preserving schemes based on BDF2 and higher order collocation methods can be devised as well. In the end, we further investigate the stability and accuracy of the schemes numerically and active particle interactions for a number of numerical examples in 2D and 3D space.

The paper is organized as follows. In \S2, we present a systematic and detailed derivation of the hybrid, thermodynamically consistent hydrodynamical phase field model for fluid-particle interaction employing the generalized Onsager principle. In \S3, we develop two second-order thermodynamically consistent numerical approximations to this model.  Then, we prove that the schemes preserve the energy dissipation rate. In \S4,  a series of numerical experiments in 2D and 3D space are presented to show the accuracy, efficiency  and usefulness of the newly developed schemes in FSI simulations.
Finally, some concluding remarks are given in \S5.

\section{Model Formulation}

Suppose that $\Omega \in \bbR^d$ with $d = 2, 3$ is a bounded, open and connected domain, whose boundary is piecewise smooth and denoted by $\p \Omega$. We denote the $L^2$ inner product of any two functions $f(\bx)$ and $g(\bx)$ in $L_2(\Omega)$ by  $(f(\bx), g(\bx)) = \int_{\Omega}f(\bx) g(\bx) \d \bx$ and the $L^2$ norm of the function $f(\bx)$ by $\|f\| = \sqrt{ (f(\bx), f(\bx)) }$.

We present two models for rigid and elastic particles immersed in an incompressible viscous fluid matrix. The first is for passive rigid particles and the second for active rigid particles. We then remark that this framework applies to particles made of soft matter as well.

\subsection{Model for passive particles  in an incompressible viscous fluid  matrix}

We consider  a set of passive rigid or elastic  particles immersed in an incompressible  viscous fluid matrix governed by the extended Brinkman equation. We remark that the choice of Brinkman model is for convenience in this paper because we are interested in applications of situations where the fluid matrix is  more viscous. A more general model for viscous fluid governed by the Navier-Stokes model can be derived as well.  The particles modeled can be in any shapes, represented by labeling functions or phase field variables $\phi_i,\; i=1, \cdots, N,$ where $N$ is the number of the particles. We want to represent the mass conservation and the momentum conservation of the fluid-particle ensemble using one set of equations, valid in both the fluid and the particle domain. For simplicity, we assume the density of the fluid matches that of the particle in this paper.
The governing system of equations for the fluid flow and the particle ensemble is given by
\ben
\bea{l}
\nabla \cdot  \bv=0,\\[0.3cm]
\rho \dfrac{\p \bv}{\p t}=\mu \nabla^2 \bv - \nabla p + \nabla \cdot \btau + \bF,
\eea
\een
where $\rho$ is the density of the fluid, $\bv$ is the mass average velocity, $\btau$ is the extra stress including the constraining stress to maintain the particles' rigidity or elasticity,  and $\bF$ is the body force acted on the fluid by the particles as well as external forces.

We model the $i$-th particle as a rigid body with a shape characterized by phase variable $\phi_i(x,t)$, $i=1,\cdots, N$.
The $i$-th phase field is given by
\ben
\phi_i=
\left \{
\bea{ll}
1,& \; \hbox{ in particle $i$,}\\
0, & \; \hbox{ outside the $i$-th particle through a diffuse interface of width $\epsilon$.}
\eea
\right.
\een
We denote  the domain of the fluid and particle ensemble by $\Omega$ and the center of mass of the $i$-th particle by
\ben
\bR_i(t)=\frac{1}{\int_{\Omega} \phi_i \d\bx} \int_{\Omega} \bx \phi_i \d\bx.
\een
We assume $\int_{\Omega} \phi_i \d\bx$ is a constant.  The motion of the center of mass for the $i$-th particle is governed by
\ben
\frac{\d}{\d t} \bR_i(t)=\bV_i(t), \quad i=1, \cdots, N,
\een
where
\ben
\bV_i=\frac{1}{\int_{\Omega} \phi_i \d\bx}\int_{\Omega} \phi_i \bv(x, t) \d\bx.
\een
In the rigid particle, we require the rate of strain   $\bD=\frac{1}{2}(\nabla \bv+\nabla \bv^T)$ to be  zero
\ben
\phi_i \bD = \b0, \quad i=1,\cdots, N. \label{D-eq}
\een
This allows the rigid body to rotate freely but not to deform. If we were to restrict the rigid body not to rotate freely, we have to impose
\ben
\phi_i \nabla \bv = \b0, \quad i=1,\cdots, N.\label{gradv-eq}
\een
\eqref{D-eq} and \eqref{gradv-eq} are additional constraints besides the mass and momentum conservation equation in the model.

The forces exerted by the particles on the fluid include the repulsive  force, the elastic force, the external body force and the drag force, respectively,	
\begin{align}\label{all-force-eq}
	\begin{cases}
		\bF = \sum_{i=1}^N \left[\bF_{i, \ex} + \bF_{i, b} + \bF_{i, \drag}\right]\frac{\phi_i}{\int_{\Omega} \phi_i \d \bx} + \bF_{e},\quad
		\bF_{i, \ex} = -\sum_{j=1}^N \nabla_{\bR_i} V(\bR_i, \bR_j),\\[0.3cm]
		\bF_{i, b} = -\sum_{j=1}^N \nabla_{\bR_i} V_b(\bR_i, \partial \Omega),\quad
		\bF_{i, \drag} = -\alpha_i \bV_i,\\[0.3cm]
		V =
		\begin{cases}	
			\epsilon_2 [(\frac{R_m}{R_{ij}})^{12}-2(\frac{R_m}{R_{ij}})^6], \quad & R_{ij}\leq R_m,\\[0.3cm]
			0, \quad &  R_{ij}>R_m,
		\end{cases}\\[0.3cm]
		V_b =
		\begin{cases}	
			\epsilon_2 [(\frac{R_n}{R_{ib}})^{12}-2(\frac{R_n}{R_{ib}})^6], \quad & R_{ib}\leq R_n,\\[0.3cm]
			0, \quad &  R_{ib}>R_n.
		\end{cases}
	\end{cases}
\end{align}	
Here $\bF_e$ includes both the interfacial force and the external body force, $\alpha_i$ is the drag coefficient on the $i$-th particle, $\epsilon_2$ parameterizes the strength of the excluded volume potential, $R_{i j}=\|\bR_i-\bR_j\|$ is the distance between $\bR_i$ and $\bR_j$, $\bR_{i b}= \dist(\bR_i, \partial \Omega)$, $R_m$ and $R_n$ are prescribed distance parameters. An external body force can be added if necessary.
We note that the drag force and the one for excluded volume only depend on time and concentrated at the center of mass of the particle in this model.

For incompressible fluids and the particle with a matching density, we assume the density  is constant. We will consider the case where the densities are different in subsequent studies.  The transport equation for the $i$-th phase field  variable is given by
\begin{align}\label{phi-eq}
	\frac{\p }{\p t}\phi_{i } + \nabla \cdot (\phi_{i } \bv) = j_i,
\end{align}
where production rate $j_i$ is to be determined below.  For each phase field, we assign a free energy to it denoted as $F_i$ with density $  f_i(\phi_i, \nabla \phi_{ i } , \nabla \nabla \phi_i, \cdots)$.  We define the   total energy of the system as follows
\ben
E_{total}=\int_{\Omega} \left[ \frac{\rho}{2} \|\bv\|^2+ \sum_{i = 1}^n f_i(\phi_i, \nabla \phi_{ i }, \nabla \nabla  \phi_i, \cdots )  \right] \d \bx + \fr \sum_{i \neq j = 1}^{N} V(\bR_i, \bR_j)+\sum_{j=1}^N V_b(\bR_{j},\partial \Omega).
\een	
The time rate of change of the total energy is  calculated as follows
\begin{align}\label{energy-change-eq}
	\dfrac{\d E_{total}}{\d t}
	& = \int_{\Omega} \left[ \rho \bv \cdot \bv_t + \sum_{i = 1}^N  \left( \frac{\p f}{\p \phi_{i}} \phi_{i, t} + \frac{\p f}{\p \nabla \phi_{i }} \cdot \nabla \phi_{i, t }+  \frac{\p f}{\p \nabla \nabla  \phi_{i }}:\nabla \nabla  \phi_{i, t } +\cdots\right) \right]\d \bx \notag\\
	&\quad + (\sum_{i \neq j =1}^{N} \nabla_{\bR_i} V(\bR_i, \bR_j)+\sum_{i=1}^N \nabla_{\bR_i} V_{b}) \cdot \frac{\d \bR_i(t)}{\d t}\notag\\
	&  = \int_{\Omega} \left[ \bv \cdot \nabla \cdot (-p \bI  + b )  + \bv \cdot \bF
	+ \sum_{i = 1}^N  \left(\frac{\p f}{\p \phi_{i}} \phi_{i, t} + \frac{\p f}{\p \nabla \phi_{i }}\cdot  \nabla \phi_{i, t } +\frac{\p f}{\p \nabla \nabla \phi_{i }}:\nabla \nabla \phi_{i, t } +\cdots \right) \right]\d \bx \notag\\
	&\quad + (\sum_{i \neq j =1}^{N} \nabla_{\bR_i} V(\bR_i, \bR_j)+\sum_{i=1}^N \nabla_{\bR_i} V_b) \cdot \frac{\d \bR_i(t)}{\d t} \notag \\
	& = (p, \nabla \cdot \bv) - \int_{\partial \Omega }\bn \cdot (p \bv) \d s - (\btau, \nabla \bv ) + \int_{\p \Omega} \bn \cdot \btau \bv \d s + (\bF, \bv) +  \sum_{i = 1}^N \bigg( \frac{\p f}{\p \phi_{i } } - \nabla \cdot \left( \frac{\p f}{\p \nabla \phi_i}\right) \notag\\
	&\quad+\nabla \nabla: \left( \frac{\p f}{\p \nabla \nabla  \phi_i}\right)+\cdots,\; \phi_{i, t }  \bigg) +
	\sum_{i = 1}^N  \int_{\partial \Omega } \bn \cdot \bigg(  \frac{\p f}{\p \nabla \phi_i} \phi_{i, t}+\frac{\p f}{\p \Delta \phi_i}\cdot \nabla \phi_{i,t}-\nabla \cdot \frac{\p f}{\p \nabla \nabla  \phi_i}\phi_{i,t} \notag\\
	&\quad+\cdots \bigg) \d s + (\sum_{i \neq j =1}^{N} \nabla_{\bR_i} V(\bR_i, \bR_j)+\sum_{i=1}^N \nabla_{\bR_i}V_b) \cdot \frac{\d \bR_i(t)}{\d t} \notag \\
	& = - \int_{\partial \Omega }\bn \cdot (p \bv) \d s - (\btau, \nabla \bv )  + \int_{\p \Omega} \bn \cdot \btau \bv \d s + (\bF, \bv)
	+  \sum_{i = 1}^N \left( \mu_i, \phi_{i, t }  \right) \notag\\
	&\quad+(\sum_{i \neq j =1}^{N} \nabla_{\bR_i} V(\bR_i, \bR_j)+\nabla_{\bR_i}V_b) \cdot \frac{\d \bR_i(t)}{\d t} \notag \\
	&\quad + \sum_{i = 1}^N \int_{\partial \Omega } \bn \cdot \left(  \frac{\p f}{\p \nabla \phi_i} \phi_{i, t}+\frac{\p f}{\p \nabla \nabla  \phi_i}\cdot \nabla \phi_{i,t}-\nabla \cdot \frac{\p f}{\p \nabla \nabla  \phi_i}\phi_{i,t}+\cdots\right) \d s.
\end{align}
Here, $\bn$ is the unit external normal of the boundary and $\mu_i=\frac{\delta f_i}{\delta \phi_i}$ is the chemical potential for phase variable $\phi_i$.
From \eqref{all-force-eq}, one obtains	
\begin{align}\label{F-v-inner-pro-eq}
	(\bF, \bv) + \sum_{i \neq j =1}^{N} \nabla_{\bR_i} V(\bR_i, \bR_j)\cdot  \frac{\d \bR_i(t)}{\d t} = -\sum_{i=1}^N \alpha_i \bV_i^2  + (\bF_e, \bv).
\end{align}
We set $\btau = \sum_{i  }( 2 \eta_i \phi_{ i } \bD  +  \phi_{ i } \bsig_i )$ following the Onsager principle \cite{Wang2021Book}.  It follows from \eqref{phi-eq} that
\begin{align}\label{tau-inn-prod-eq}
	\begin{cases}
		-(\btau,\;  \nabla \bv) = -\sum_{i } (2 \eta_i \phi_{i } \bD, \bD) - \sum_{i }(\bsig_i, \phi_{i }\bD),\\
		(\mu_i, \phi_{i, t} ) = (\mu_i, j_i - \nabla \cdot (\phi_{i } \bv)) = (\mu_i, j_i) + (\phi_i \nabla \mu_i, \bv) - \int_{\p \Omega} \bn \cdot \left( \mu_i \phi_{i } \bv \right) \d s.
	\end{cases}
\end{align}
Substituting \eqref{F-v-inner-pro-eq} and \eqref{tau-inn-prod-eq} into \eqref{energy-change-eq} yields
\begin{align}\label{dE-dt-eq}
	\dfrac{\d E_{total}}{\d t}  =&  -\sum_{i=1}^N (2 \eta_i \phi_{i } \bD, \bD) - \sum_{i=1}^N(\bsig_i, \phi_{i }\bD) + \sum_{i=1}^N (\mu_i, j_i) + \left( \bF_e + \sum_{i=1}^N \phi_{ i } \nabla \mu_i,\; \bv \right)\\
	&-\sum_{i=1}^N \alpha_i  \bV_i ^2 + \int_{\p \Omega} g_1 \d s,
\end{align}
where $g_1$ is given by
\begin{align}
	g_1 = \bn \cdot \left[ -p \bv + \btau \bv + \sum_{i =1}^N \left( \frac{\p f}{\p \nabla \phi_{ i }}  \phi_{i, t }+\frac{\p f}{\p \nabla \nabla  \phi_i}\cdot \nabla \phi_{i, t}-\nabla \cdot \frac{\p f}{\p \nabla \nabla  \phi_i}\phi_{i, t}+\cdots  -  \mu_i \phi_{i } \bv \right) \right].
\end{align}
Next, we apply $\phi_i \bD = 0$ and the Onsager principle to obtain constitutive equations
\begin{align}
	\bF_e  = -\sum\limits_{i = 1}^N \phi_i \nabla \mu_i , \quad
	j_i = \nabla \cdot (M_i \nabla \mu_i),
\end{align}
where $M_i$ is the mobility operator and the external body force is assumed absent.

Then, \eqref{dE-dt-eq} reduces to
\begin{align}
	\dfrac{\d E_{total}}{\d t} = - \sum_{i=1}^{N} \alpha_i \| \bV_i \|^2  -\int_{\Omega}\sum_{i=1}^N [2 \eta_i \phi_i  \bD: \bD+\nabla \mu_i M_i \nabla \mu_i] \d \bx + \int_{\p \Omega} g \d s,
\end{align}
where the surface energy density $g$ is given  by
\begin{align*}
	g = \bn \cdot \left[ -p \bv + \btau \cdot \bv  + \sum_{i = 1}^N \left( \frac{\p f}{\p \nabla \phi_{i } } \phi_{i, t }+\frac{\p f}{\p \nabla \nabla \phi_i}\cdot \nabla \phi_{i,t}-\nabla \cdot \frac{\p f}{\p \nabla \nabla  \phi_i}\phi_{i,t}+\cdots  +  \mu_i M_i \nabla \mu_i - \bv \phi_{i } \mu_i \right) \right].
\end{align*}

If we adopt the adiabatic boundary conditions for $i=1, 2, \cdots, N$ as follows
\begin{align}\label{incomp-BCs}
	\bv=0|_{\p \Omega},\;\;
	\bn\cdot \frac{\partial f_i}{\partial \nabla \phi_i}\bigg|_{\p \Omega} = 0,\;\;
	\bn \cdot \left(\nabla \cdot \frac{\p f}{\p \nabla \nabla \phi_i}\right) = 0,\;\;
	\bn \cdot \frac{\p f}{\p \nabla \nabla \phi_i}  = 0,\;\;
	\bn\cdot M_i \nabla \mu_i|_{\p \Omega} = 0,
\end{align}
or  periodic boundary conditions for all variables,
the energy dissipation rate reduces to
\begin{align}\label{alternative-disp}
	\frac{\d E_{total}}{\d t} &= - \sum_{i=1}^{N} \alpha_i  \bV_i^2  -\int_{\Omega}\sum_{i=1}^N [2 \eta_i \phi_i  \bD : \bD+\nabla \mu_i M_i \nabla \mu_i] \d \bx\\
\end{align}
The energy dissipation rate is non-positive semi-definite provided the mobility operators are  non-negative semi-definite:
\ben
M_i \geq 0,\quad
i=1,\cdots, N.
\een

\begin{remark}
	With the Cahn-Hilliard transport equation of constant mobility $M_i$ for phase variable $\phi_i,\; i=1,\cdots, N$ and  no-flux boundary conditions, $\int_{\Omega} \phi_i \d\bx$ is a constant and
	\begin{align}
		\frac{\d \bR_i(t)}{\d t}&=\frac{1}{\int_{\Omega} \phi_i \d\bx} \int_{\Omega} \bx [-\nabla \cdot (\phi_i \bv)+\nabla \cdot M_i \nabla \mu_i] \d\bx\\
		&=-\frac{1}{\int_{\Omega} \phi_i \d\bx} \int_{\Omega} \nabla \bx \cdot  [-(\phi_i \bv)+ M_i \nabla \mu_i] \d\bx\\
		&=\frac{1}{\int_{\Omega} \phi_i \d\bx} \int_{\Omega}  [\phi_i \bv- M_i \nabla \mu_i] \d\bx\\[0.1cm]
		&=\bV_i(t)
	\end{align}
	with respect to periodic boundary conditions or $\int_{\partial \Omega} \bn M_i \mu_i \d \bx=0$. The later condition is satisfied should the particle is away from the boundary.
\end{remark}

\begin{remark}
	In many applications, one has to impose inflow and outflow boundary conditions, i.e., $\bv\neq0$ in a portion of the boundary $\partial \Omega$. In this case, the incompressibility condition implies
	\ben
	\int_{\partial \Omega} \bn\cdot \bv \d s=0.
	\een
	This indicates that the inflow and outflow rate with respect to the interested domain $\Omega$ must equal. In this case, so long as the particles are not at the inflow and outflow boundary, their phase volume is kept at a constant should the above boundary conditions for the phase fields are held.
\end{remark}

Finally, we summarize the governing system of equations  for the fluid-particle ensemble as follows
\begin{align}\label{incomp-model-eq}
	\begin{cases}
		\rho \bv_t = - \nabla p + \nabla \cdot \btau + \bF, \\
		\btau = \sum_{i=1}^N ( 2 \eta_i \phi_{ i } \bD   + \phi_{i } \bsig_i ), \\
		\nabla \cdot \bv = 0,\\
		\phi_{i, t } + \nabla \cdot(\phi_{i} \bv ) = \nabla \cdot (M_i \nabla \mu_i),\\
		\phi_{ i } \bD = \b0,\;  i=1,\cdots, N,
	\end{cases}
\end{align}
where $\mu_i = \delta F_i / \delta \phi_{ i }$ represents the chemical potential for the $i$-th phase-field  and the total body force $\bF$ is given by \eqref{all-force-eq}.

\subsection{Phase field with a nonlocal inertia perturbation}

In the following, we consider a perturbed transport equation of the phase field given by
\ben\label{pf-perturbed}
(1+sM_i \partial_t \Delta^2)\phi_{i, t } + \nabla \cdot(\phi_{i} \bv ) =j_i=\nabla \cdot (M_i \nabla \mu_i)
\een
where $s>0$ is a parameter and $M_i>0$ is a constant. We set
\ben
\bF_e=-\sum_i \phi_i \nabla \hat{\mu}_i,
\een
where $\hat{\mu}_i$ is defined by
\ben\label{hat-mu-i}
\hat{\mu}_i=\frac{\delta f_i}{\delta \phi_i}-s \Delta \p_t \phi_{i, t}.
\een
Note that
\begin{align*}
	(\mu_i, \phi_{ i, t })
	&= (\hat{\mu}_i, \phi_{i, t}) - s(\nabla \phi_{i, t}, \p_t \nabla \phi_{i, t})   \\
	& = \left(\hat{\mu}_i, \nabla\cdot(M_i \nabla \mu_i) - sM_i \p_t \Delta^2 \phi_{ i, t } - \nabla \cdot(\phi_{ i }\bv) \right) - s(\nabla \phi_{i, t}, \p_t \nabla \phi_{i, t}) \\
	& = (\hat{\mu}_i, \nabla \cdot (M_i \nabla \hat{\mu}_i)) - (\wh{\mu}_i, \nabla\cdot(\phi_i \bv)) - s(\nabla \phi_{i, t}, \p_t \nabla \phi_{i, t})  \\
	& = -(\nabla\hat{\mu}_i,   M_i \nabla \hat{\mu}_i) + (\phi_i \nabla \hat{\mu}_i, \bv) - s(\nabla \phi_{i, t}, \p_t \nabla \phi_{i, t}).
\end{align*}
Thus,  the energy dissipation rate for this modified model is given by
\begin{align}
	\dfrac{\d E_{total}}{\d t} & = -\sum_{i}(2 \eta_i \phi_{ i } \bD, \bD) + \sum_{i}(\mu_i, \phi_{ i, t }) - \sum_{i}\alpha_{i}\bV_i^2 + (\bF_{e}, \bv)\\
	&=
	-\sum_{i=1}^{N} \alpha_i  \bV_i^2  -\int_{\Omega}\sum_{i=1}^N [2 \eta_i \phi_i  \bD : \bD+\nabla \hat{\mu}_i M_i \nabla \hat{\mu}_i +
	s \nabla \phi_{i,t} \p_t \nabla \phi_{i,t} ] \d \bx.
\end{align}

So, if we define a modified energy
\ben
\wh{E}=E_{total} + \sum_{i=1}^N\frac{s}{2} \|\nabla \phi_{i,t}\|^2,
\een
\ben
\frac{\d\wh{E}}{\d t}=- \sum_{i=1}^{N} \alpha_i  \bV_i^2  -\int_{\Omega}\sum_{i=1}^N [2 \eta_i \phi_i  \bD : \bD+\nabla \hat{\mu}_i M_i \nabla \hat{\mu}_i] \d \bx.
\een
The governing system of equations is once again given by \eqref{incomp-model-eq} with $\mu_i$ replaced by $\hat{\mu}_i$ and the phase field equations replaced by \eqref{pf-perturbed}.
This perturbed system serves as the foundation to design the so-called stabilized numerical scheme.
\begin{remark}
	If the particle is not rigid, we would drop the rigidity constraint on the velocity gradient tensor or the rate of strain tensor and assign large viscosity to the particle regions. The model so-derived is  also thermodynamically consistent.
\end{remark}
It is worthy noting that  the constraint of the embedded $i$-th rigid body: $\phi_i {\bf D} = \b0$ does not have a time derivative. This equation determines the constraining stress $\bsig_i$. We next explore an extended system with an added time derivative for this equation that relaxes back to the equation in a limit.

\subsection{Elastic  relaxation }

We relax the rigid body constraint by adding a time derivative for $\bsig_i$ in the constraint for the rigid particles as follows
\begin{align}
	\varepsilon \bsig_{i, t} = \phi_{ i }\bD,
\end{align}
where $\varepsilon$ is a  small positive parameter whose inverse is the elastic modulus.  In the particle region,  we use the same viscosity as that of the fluid and the mobility for each phase is the same constant.
Then, the modified governing equation consists of the following  equations:
\begin{align}\label{relax-model-eq}
	\begin{cases}
		\frac{\d \bR_i}{\d t} = \bV_i(t),\quad  i = 1,\cdots, N,\\
		\rho \bv_t =-\nabla p+\eta \Delta \bv + \nabla \cdot\left( \sum_{i =1}^N\phi_{ i }\bsig_i \right) + \sum_{i=1}^N [\bF_{i, \ex}+\bF_{i, \drag}]\frac{\phi_i}{\int_{\Omega} \phi_i \d\bx} -\sum_{i=1}^N \phi_i \nabla \hat{\mu}_i ,\\
		\nabla \cdot \bv = 0,\\
		(1+sM_i \partial_t \Delta^2)\phi_{i, t} + \nabla \cdot (\phi_i \bv) = M \Delta \hat{\mu}_i,\\
		\varepsilon \bsig_{i, t} - \phi_i \bD = {\bm 0}.
	\end{cases}
\end{align}
We define the free energy for the extended model as follows
\ben\label{relax-energy-eq}
\bea{l}
\wh{E} =\int_{\Omega} \left[ \frac{\rho}{2} |\bv|^2+\sum_{i=1}^N f_i(\phi_i, \nabla \phi_i) + \sum_{i=1}^N \frac{\varepsilon}{2} |\bsig_i|^2 + \sum_{i=1}^{N}\frac{s}{2}|\nabla \phi_{ i, t }|^2 \right] \d \bx \\
~\quad+ \fr \sum_{i \neq j = 1}^{N} V(\bR_i, \bR_j)+\sum_{i=1}^N V_b(\bR_i, \partial \Omega).
\eea
\een

The following theorem assures that the  energy dissipation law of the modified hydrodynamics model  is valid for the   boundary conditions  given above and there is no boundary condition for $\bsig_i$ necessary here.
\begin{thm}
	Model \eqref{relax-model-eq} with  physical boundary conditions \eqref{incomp-BCs} possesses the following energy dissipation law:
	\begin{align}
		\dfrac{\d \wh{E} }{\d t} = - \sum_{i  = 1}^N \left[ (2 \eta_i \phi_{i } \bD, \bD) +  M \|\nabla \hat{\mu}_i \|^2  + \alpha_i \bV_i^2 \right],
	\end{align}
	where the total energy $\wh{E}$ is defined in \eqref{relax-energy-eq}.
\end{thm}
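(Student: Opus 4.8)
The plan is to differentiate the modified energy $\wh E$ in \eqref{relax-energy-eq} term by term, substitute the equations of \eqref{relax-model-eq}, and integrate by parts, following the same route that produced the modified energy law in \S2.2 but with one additional ingredient supplied by the elastic relaxation law $\varepsilon\bsig_{i,t}=\phi_i\bD$. First I would treat the kinetic part: insert the momentum equation into $\rho(\bv,\bv_t)$ and integrate by parts using $\bv|_{\p\Omega}=0$ and $\nabla\cdot\bv=0$. The pressure term drops; the viscous part of the stress $\sum_i 2\eta_i\phi_i\bD$ contracted against $\bv$ yields $-\sum_i(2\eta_i\phi_i\bD,\bD)$ since $\bD$ is symmetric; $\nabla\cdot(\sum_i\phi_i\bsig_i)$ yields $-\sum_i(\phi_i\bsig_i,\bD)$ once we note that $\bsig_i$ stays symmetric (it is driven by the symmetric tensor $\phi_i\bD$ from symmetric initial data); the excluded-volume, wall and drag forces, concentrated through $\phi_i/\int_\Omega\phi_i\,\d\bx$, contract $\phi_i\bv$ into $(\int_\Omega\phi_i\,\d\bx)\,\bV_i$ and produce $\sum_i(\bF_{i,\ex}+\bF_{i,b}+\bF_{i,\drag})\cdot\bV_i$; and the capillary term produces $-\sum_i(\phi_i\nabla\hat\mu_i,\bv)$.

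Next I would process the interfacial and perturbation energies. Differentiating $\int_\Omega f_i(\phi_i,\nabla\phi_i)\,\d\bx$ and integrating by parts (the boundary term killed by $\bn\cdot\frac{\p f_i}{\p\nabla\phi_i}|_{\p\Omega}=0$) gives $(\mu_i,\phi_{i,t})$ with $\mu_i=\delta f_i/\delta\phi_i$; then, exactly as in the computation displayed just before $\wh E$ in \S2.2 --- using $\hat\mu_i=\mu_i-s\Delta\p_t\phi_{i,t}$, the identity $\nabla\cdot(M\nabla\hat\mu_i)=\nabla\cdot(M\nabla\mu_i)-sM\p_t\Delta^2\phi_{i,t}$, the phase-field equation of \eqref{relax-model-eq}, and $\bn\cdot M\nabla\mu_i|_{\p\Omega}=0$ --- one reduces $(\mu_i,\phi_{i,t})$ to $-M\|\nabla\hat\mu_i\|^2+(\phi_i\nabla\hat\mu_i,\bv)-s(\nabla\phi_{i,t},\p_t\nabla\phi_{i,t})$. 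The derivative $\frac{\d}{\d t}\int_\Omega\frac{s}{2}|\nabla\phi_{i,t}|^2\,\d\bx=s(\nabla\phi_{i,t},\p_t\nabla\phi_{i,t})$ then cancels the last contribution, $(\phi_i\nabla\hat\mu_i,\bv)$ cancels the capillary term from the kinetic part, and $\sum_i(\bF_{i,\ex}+\bF_{i,b})\cdot\bV_i$ combines with $\frac{\d}{\d t}[\frac{1}{2}\sum_{i\neq j}V(\bR_i,\bR_j)+\sum_i V_b(\bR_i,\p\Omega)]$ to cancel exactly as in \eqref{F-v-inner-pro-eq}, while $\bF_{i,\drag}\cdot\bV_i=-\alpha_i\bV_i^2$ survives.

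The one genuinely new step is the elastic energy term: by the relaxation law, $\frac{\d}{\d t}\int_\Omega\frac{\varepsilon}{2}|\bsig_i|^2\,\d\bx=\varepsilon(\bsig_i,\bsig_{i,t})=(\bsig_i,\phi_i\bD)=(\phi_i\bsig_i,\bD)$, which exactly cancels the leftover $-\sum_i(\phi_i\bsig_i,\bD)$ from integrating $\nabla\cdot(\sum_i\phi_i\bsig_i)$ against $\bv$. Collecting the survivors yields $\frac{\d\wh E}{\d t}=-\sum_{i=1}^N[(2\eta_i\phi_i\bD,\bD)+M\|\nabla\hat\mu_i\|^2+\alpha_i\bV_i^2]$, as claimed; observe that $\bsig_i$ enters the final balance only algebraically, which is precisely why no boundary condition on $\bsig_i$ is required.

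I expect the main obstacle to be the careful bookkeeping of the boundary terms generated by the repeated integrations by parts --- in particular those coming from the $s$-perturbation and from $\p_t\Delta^2\phi_{i,t}$, which involve traces of $\phi_{i,t}$, $\nabla\phi_{i,t}$ and $\Delta\phi_{i,t}$ on $\p\Omega$ that are not literally among the conditions listed in \eqref{incomp-BCs}; these vanish because the adiabatic (no-flux) boundary conditions are invariant under time differentiation, or trivially under periodic boundary conditions. A secondary technical point is justifying that $\bsig_i$ stays symmetric along the evolution so that $(\phi_i\bsig_i,\nabla\bv)=(\phi_i\bsig_i,\bD)$; this follows from the symmetry of $\phi_i\bD$ together with symmetric initial data for $\bsig_i$.
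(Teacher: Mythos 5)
Your proof is correct and follows essentially the same route the paper relies on (the paper in fact prints no explicit proof of this theorem, leaving the reader to repeat the computation of \S 2.1--2.2): the only genuinely new ingredient is the cancellation $\varepsilon(\bsig_i,\bsig_{i,t})=(\phi_i\bsig_i,\bD)$ against the stress-divergence contribution, which you identify and execute correctly, along with the sensible observation that the time-differentiated no-flux conditions handle the extra boundary traces from the $s$-perturbation. The one caveat is the paper's own internal inconsistency between the viscous term $\eta\Delta\bv$ written in \eqref{relax-model-eq} (which would yield $\eta\|\nabla\bv\|^2$ in the dissipation) and the $(2\eta_i\phi_{i}\bD,\bD)$ appearing in the theorem's statement; you prove the statement as written, which is the right call.
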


\begin{remark}
	Note that when $\varepsilon \rightarrow 0$, this relaxed hydrodynamical model in \eqref{relax-model-eq} reduces to the original model in \eqref{incomp-model-eq}. Therefore, this extended or relaxed  model is indeed a generalization of the original hydrodynamical phase field model for fluid-rigid particle interaction.  In addition, by means of adding time relaxation dynamics in $\bsig_i$, the boundary conditions of the new model \eqref{relax-model-eq} remains unchanged.
\end{remark}

\begin{remark}
	The free energy for the phase field model must be chosen to maintain rigidity of the particle while the velocity is held a constant within the particle. In addition, the viscosity within each particle region can also be made larger to facilitate its rigidity without influencing dynamics of the flow field outside the rigid particles.
\end{remark}

\subsection{Model for active particles   in an  incompressible viscous fluid  matrix}
When the rigid particle is  self-propelling, the system becomes an active matter system. We denote the self-propelling velocity of the $i$-th particle by $\bp_i(t)$. The kinematic equation for the center of mass of the $i$-th particle is modified into
\ben\label{active-speed}
\frac{\d \bR_i}{\d t}=\bV_i(t)+\bp_i,\quad  i=1,\cdots, N.
\een
The drag force has to be changed accordingly so that the force balance equation along with other relevant equations become
\begin{align}
	\begin{cases}
		\rho \bv_t =-\nabla p+\eta \Delta \bv + \nabla \cdot\left( \sum_{i=1  }^N\phi_{ i }\bsig_i \right) +  \sum_{i=1}^N [\bF_{i, \ex}+\bF_{i,b}+\bF_{i, \drag}]\frac{\phi_i}{\int_{\Omega} \phi_i \d\bx} -\sum_{i=1}^N \phi_i \nabla \hat{\mu}_i ,\\[0.2cm]
		\bF_{i, \drag  } = -\alpha_i (\bV_i+\bp_i),\\[0.2cm]
		\phi_{i, t} + \nabla \cdot (\phi_i (\bv+\bp_i)) = M \Delta \hat{\mu}_i.
	\end{cases}
\end{align}
In the active matter system, the total energy may no longer be dissipative. The energy production rate is given by
\ben
\frac{\d \wh{E}}{\d t}  = -\eta \| \nabla \bv \|^2 - \sum_{i  = 1}^N \bigg[  M \|\nabla \hat{\mu}_i \|^2  + \alpha_i \| \bV_i\|^2-(\sum_{j =1, j \neq i}^N \nabla_{\bR_i} V(\bR_i, \bR_j)+\sum_{i}\nabla_{\bR_i}V_b(\bR_i, \partial \Omega)) \bp_i\\
+\alpha_i \bV_i \cdot \bp_i + \hat{\mu}_i \nabla \cdot (\phi_i \bp_i)  \bigg].\notag
\een
In the following numerical implementation, we focus on circular particles in 2D and spherical particles in 3D space. We choose the free energy for the ith particle with a conformational part and a double well bulk part given by
$f_i(\phi_{i }, \nabla \phi_{i}) = \frac{\gamma_1 }{ 2} |\nabla \phi_{i }|^2 + \gamma_2 \phi_i^2(1-\phi_i)^2$, where $\gamma_1$ is the strength of the conformational entropy and  the double-well bulk energy parameterized by $\gamma_2$.

\subsection{Non-dimensionalization}
For system \eqref{incomp-model-eq} with $\hat{\mu}_i$ defined in \eqref{hat-mu-i}, using characteristic scales of length $L_0$ and velocity $V_0$, we nondimensionalize the physical variables and parameters as follows:
\begin{align}\label{Non-dim-eq}
	\begin{split}
		&\bx^{*} = \frac{\bx }{L_0},\quad
		\bv^{*} = \frac{\bv}{V_0}, \quad
		t^* = \frac{tV_0}{L_0},\quad
		\rho^* = \frac{\rho }{\rho_0}, \quad
		p^* = \frac{p}{\rho_0 V_0^2},\\
		&\eta^* = \frac{\eta }{\rho_0 V_0 L_0},\quad
		\bsig_i^* = \frac{\bsig_i}{\rho_0 V_0^2},	\quad
		\epsilon_2^* = \frac{\epsilon_2 }{\rho_0 V^2_0 L_0^2},\quad
		\alpha_i^* = \frac{  \alpha_{i}  }{\rho_0 V_0 L^2_0}, \\
		&\hat{\mu}_i^* = \frac{\hat{\mu}_i}{\rho_0 V_0^2},\quad
		M^* = \frac{M \rho_0 V_0}{L_0},\quad
		\bR_i^* = \frac{\bR_i}{L_0},\quad R^*_m = \frac{R_m}{L_0},
	\end{split}
\end{align}
The governing equations \eqref{incomp-model-eq} in non-dimensional forms are given as follows.
\begin{numcases}{}
	\frac{\d \bR^*_i}{\d t^*} = \bV^*_i(t^*),\quad  i = 1,\cdots, N,\\
	\rho^* \frac{\p \bv^*}{\p t^*} =-\nabla^* p^*+\eta^* \Delta^* \bv^* + \nabla^* \cdot\left( \sum_{i =1}^N\phi_{ i }\bsig^*_i \right) + \sum_{i=1}^N [\bF^*_{i, \ex}+\bF^*_{i,b}+\bF^*_{i, \drag}]\frac{\phi_i}{\int_{\Omega} \phi_i \d\bx}   \notag\\
	\qquad\quad -\sum_{i=1}^N \phi_i \nabla \hat{\mu}^*_i ,\\
	\nabla^* \cdot \bv^* = 0,\\
	\frac{\p \phi_i}{\p t^*} + \nabla^* \cdot (\phi_i \bv^*) = \nabla^* \cdot (M_i^* \nabla^* \hat{\mu}^*_i),\\
	\phi_i \bD^* = \b0,
\end{numcases}
where $\nabla^* = \left(\frac{\p }{\p x^*},   \frac{\p}{\p y^*} \right)^T$ with $\bx^* = (x^*, y^*)^T$, $\bD^* = \frac{ 1}{2} \left(\nabla^* \bv^* + (\nabla^* \bv^*)^T\right)$, and
\begin{align}\label{Non-dim-eq2}
	\begin{split}
		&\bV^*_i = \frac{1}{\int_{\Omega} \phi_i \d\bx}\int_{\Omega} \phi_i \bv^*     \d\bx,\quad
		\bF^*_{i, \ex} = - \sum^N_{j  = 1}\nabla_{R_i^*}V^*(\bR_i^*, \bR_j^*),\quad \bF^*_{i, b} = - \sum^N_{i  = 1}\nabla_{R_i^*}V^*(\bR_i^*, \partial \Omega),\\
		&\bF^*_{i, \drag} = -\alpha^*_{i} \bV_i^*,\quad
		V^* = \begin{cases}
			\epsilon^*_2 [(\frac{R^*_m}{R^*_{ij}})^{12}-2(\frac{R^*_m}{R^*_{ij}})^6], \quad &R^*_{ij}\leq R^*_m, \\
			0, \quad  &R^*_{ij}>R^*_m.
		\end{cases}
		\\
		&V^*_b = \begin{cases}
			\epsilon^*_2 [(\frac{R^*_m}{R^*_{ij}})^{12}-2(\frac{R^*_m}{\dist(R^*_{i}, \partial \Omega)})^6], \quad &\dist(R^*_{i},\partial \Omega)\leq R^*_n, \\
			0, \quad  &\dist(R^*_{i}, \partial \Omega)>R^*_n.
		\end{cases}
	\end{split}
\end{align}
To facilitate discussions in subsequent sections,  hereafter we will drop the superscript $(\bullet)^*$ in the non-dimensionalal forms, with the understanding that all variables and parameters are appropriately nondimensionalized.

\section{Thermodynamically consistent  numerical approximation}\label{sec:SDSG}

The hybrid hydrodynamical model for passive embedded particles is thermodynamically consistent.
We would like to approximate the PDE system using thermodynamically consistent linear schemes following the energy quadratization approach coupled with
the linear stabilization strategy, which has proven to be particularly effective in enhancing stability while keeping the required accuracy of the schemes.

We introduce two new variables $q_2(t) = \sqrt{\fr \sum_{i \neq j = 1}^N V(\bR_i, \bR_j)+\sum_{i=1}^N V_b(\bR_i, \partial \Omega) + C_2}$ and $q_1(\bx, t) = \sqrt{ \sum_{i } g_i(\phi_i) + C_1 }$,
where $C_1$ and $C_2$ are  constants that ensure the radicand always  positive (in this paper, we set $C_1 = C_2 = 1$). Then, the total energy can be rewritten into
\begin{align}\label{EQ-energy-eq}
	\wh{E}(\bv, \phi_{i }, \bsig_i, q_1, q_2) =  \int_{\Omega} \left[ \frac{\rho}{2} |\bv|^2+ q_1^2 + \sum_{i=1}^N \left(\frac{\gamma_1}{2} |\nabla \phi_{i} |^2 + \frac{\varepsilon}{2} |\bsig_i|^2 + \frac{s}{2}|\nabla \phi_{i, t}|^2 \right) \right] \d \bx
	+|q_2|^2 - C_{e},
\end{align}
where $C_{e} = C_1|\Omega| + C_2$ and $|\Omega|$ denotes the area of domain $\Omega$.
By taking time derivatives of   $q_1(\bx, t)$ and $q_2(t)$  with respect to $t$, we rewrite the governing equations in the following system in   ($\bv,  \phi_{i }, \bsig_i, q_1, q_2$):
\begin{subequations}\label{EQ-incmp-model}
	\begin{numcases}{}
		\frac{\d q_2}{\d t} = \fr \sum_{i} \frac{\p q_2}{\p \bR_i} \frac{\d \bR_i }{\d t}, \label{EQ-incmp-model-eq0}\\
		\rho \bv_t =-\nabla p+\eta \Delta \bv + \nabla \cdot\left( \sum_{i =1 }^N\phi_{ i }\bsig_i \right) + \sum_{i=1}^N [-q_2 \frac{\p q_2}{\p \bR_i }+\bF_{i, \drag}]\frac{\phi_i}{\int_{\Omega} \phi_i \d \bx} -\sum_{i=1}^N \phi_i \nabla \hat{\mu}_i ,\label{EQ-incmp-model-eq1}\\
		\nabla \cdot \bv = 0, \label{EQ-incmp-model-eq2}\\
		\phi_{i, t} + \nabla \cdot (\phi_i \bv) = M \Delta \hat{\mu}_i \label{EQ-incmp-model-eq3}\\
		\hat{\mu}_i =   2 q_1 h_i(\phi_{i })  - \gamma_1 \Delta \phi_{i } - s \Delta \p_t\phi_{i,t}, \label{EQ-incmp-model-eq4}\\
		\varepsilon \bsig_{i, t} - \phi_i \bD = {\bm 0},\label{EQ-incmp-model-eq5}\\
		q_{1, t} = \sum_{i=1}^N h_i(\phi_i) \phi_{i, t },\label{EQ-incmp-model-eq6}
	\end{numcases}
\end{subequations}
where $s \geq 0$ is a user supplied constant,  $ h_i(\phi_{i }) = \frac{g_i^{\prime}(\phi_{i } )}{  2 \sqrt{ \sum_{i } g_i(\phi_i) + C_1 }  }$, $\frac{\p q_2}{\p \bR_i} = \frac{\sum\limits_{j} \nabla_{\bR_i} V(\bR_i, \bR_j)+\sum\limits_{i=1}^N \nabla_{\bR_i} V_b(\bR_i, \partial \Omega)}{\sqrt{\fr \sum\limits_{i \neq j = 1}^N V(\bR_i, \bR_j)+\sum\limits_{i=1}^N V_b (\bR_i, \partial \Omega) + C_2}}$ and $\bsig_i = \left[ \begin{matrix}
	{ \sigma_{i}}_{1 1}& { \sigma_{i}}_{1 2}\\
	{ \sigma_{i}}_{1 2}& -{ \sigma_{i}}_{1 1}
\end{matrix}\right]$. The boundary conditions are given in  \eqref{incomp-BCs}  and the
initial conditions  are given by
\begin{align}
	&\bv(\bx, 0) = \bv_0(\bx),\quad
	\phi(\bx, 0) = \phi_0(\bx),\quad
	\bsig_i(\bx, 0) = \bsig_{i, 0}(\bx),\\
	&q_1(\bx, 0) = \sqrt{\sum_{i }g_i(\phi_0) +C_1},\quad
	q_2(0) = \sqrt{\fr \sum_{i \neq j = 1}^N V(\bR_i(0), \bR_j(0))+\sum_{i=1}^N V_b(\bR_i(0),\partial \Omega) + C_2}.
\end{align}
The above initial-boundary value problem is equivalent to \eqref{relax-model-eq}. Taking the inner product of \eqref{EQ-incmp-model-eq0} with $2 q_2$ \eqref{EQ-incmp-model-eq1} with $\bv$, \eqref{EQ-incmp-model-eq3} with $\hat{\mu}_i$, \eqref{EQ-incmp-model-eq4} with $\phi_{ i, t }$ and \eqref{EQ-incmp-model-eq6} with $2 q_1$, respectively, we obtain the following equivalent energy dissipation law
\begin{align}\label{active-energy-dissip}
	\dfrac{\d }{\d t} \wh{E}(\bv, \phi_i, \bsig_i, q_1, q_2) =  -\eta \| \nabla \bv \|^2 -  \sum_{i  = 1}^N \left[   M \|\nabla \hat{\mu}_i \|^2  + \alpha_i \bV_i^2 \right],
\end{align}
where the total energy is defined in \eqref{EQ-energy-eq}. We next focus on this reformulated equivalent system of equations, and present an unconditionally energy stable scheme for approximating this system.

Notice that the time and spatial discretization of the PDE system can be done independently \cite{Hong2020Energy}. To save space and simplify our notation, we only present semi-discrete schemes in time in the following.  The spatial discretization of the semi-discrete system is carried out using the finite difference method on staggered grids. For details on the spatial discretization, interested readers please refer to  \cite{Hong2020Energy}.  The spatial discretization described in \cite{Hong2020Energy} respects summation-by-parts so that the resultant fully discrete schemes are thermodynamically consistent.

\subsection{Numerical scheme and discrete energy stability}

Let $\tau$ be a time step size,  $ t_n = n \tau $ for $0 \leq n \leq N_t$ with $T = N_t \tau$, and $\psi^n$  the numerical approximation to  $\psi(\cdot, t)$ at $t = t_n$ for any function $\psi$.
Next, we  use the second-order Crank-Nicolson method to discretize   system \eqref{EQ-incmp-model} in time,  the resulting temporal semi-discrete scheme reads as follows:
\begin{sch}\label{incomp-2nd-cp-sch}
	\begin{subequations}
		\begin{numcases}{}
			\delta_t^+ q_2^n = \frac{1}{2}\sum_{i} \overline{\frac{\p q_2}{\p \bR_i}}^{\hf}\delta_t^+\bR_i^n,\label{incomp-2nd-cp-sch-eq0}\\
			\rho \delta_t^+ \bv^n = -\nabla p^{\hf} + \eta \Delta \bv^{\hf} + \nabla \cdot \left(\sum_{i=1}^N \ol{\phi}^{\hf}_i \bsig^{\hf}_i\right)  - \sum_{i}\ol{\phi}^{\hf}_i \nabla \mu_i^{\hf}\notag\\
			\qquad\quad+\sum_{i}\left[-q_2^{\half}\ol{\frac{\p q_2}{\p \bR_i}}^{\half} + \bF_{i, \drag}^{\half} \right]\frac{\ol{\phi}_i^{\hf}}{\int_{\Omega}\ol{\phi}^{\hf}_i\d \bx},\label{incomp-2nd-cp-sch-eq1}\\
			\nabla \cdot \bv^{\hf} = 0,\label{incomp-2nd-cp-sch-eq2}\\
			\delta_t^+ \phi_i^n + \nabla \cdot (\ol{\phi}_i^{\half} \bv^{\half}) = M \Delta \hat{\mu}_i^{\hf},\label{incomp-2nd-cp-sch-eq3}\\
			\hat{\mu}_i^{\hf} = 2 q^{\hf}_1 \ol{h}_i^{\hf} - \gamma_1 \Delta \phi_i^{\hf} - S \Delta (\phi_i^{n+1} - 2 \phi_i^n + \phi_i^{n - 1}),\label{incomp-2nd-cp-sch-eq4}\\
			\delta_t^+ q_1^n = \sum_{i} \ol{h}_i^{\hf} \delta_t^+ \phi_i^n,\\
			\varepsilon \delta_t^+\bsig_i^n - \ol{\phi}_i^{\hf} \bD^{\hf} = 0.\label{incomp-2nd-cp-sch-eq5}
		\end{numcases}
	\end{subequations}
	where $S = s/\tau^2$, $\ol{h}^{\half}_i  = h_i(\ol{\phi}_i^{\half})$, $\delta_t^+(\bullet)^n = ((\bullet)^{n+1} - (\bullet)^n)/\tau$ and $\ol{(\bullet)}^{n+\fr} = (3 (\bullet)^n - (\bullet)^{n-1})/2$.
\end{sch}
By taking the continuous $L^2$ inner product of \eqref{incomp-2nd-cp-sch-eq3} with $1$, it yields
\begin{align}
	(\phi^{n+1}_i, 1) = (\phi_i^n, 1) = \cdots =(\phi_i^0, 1),\quad
	\forall\; i \in \{1, 2, \cdots, N\}.
\end{align}
So,   scheme \eqref{incomp-2nd-cp-sch} preserves the volume of the immersed particles at the discrete level. In addition, we have the following theorem for the total energy.
\begin{thm}
	Scheme \eqref{incomp-2nd-cp-sch} is unconditionally energy stable  satisfying the following discrete energy dissipation law
	\begin{align}
		\frac{1}{\tau} (\wh{E}^{n + 1} - \wh{E}^n + \wt{E}^{n})  = - \eta\|\nabla \bv^{\half}\|_h^2-\sum_{i  = 1}^N \left[   M \|\nabla \hat{\mu}^{\half}_i \|^2  + \alpha_i |\bV^{\half}_i|^2 \right],
	\end{align}
	where
	\begin{align*}
		&\wh{E}^n = \frac{\rho}{2} \|\bv^n\|^2 + \|q_1^n\|^2 + \sum_{i = 1}^N \left( \frac{\gamma_1}{2}\|\nabla_h \phi^n_i\|^2  + \frac{\varepsilon}{2}\|\bsig^n_i\|^2 + \frac{S}{2}\|\nabla \phi_i^n - \nabla \phi_i^{n - 1}\|^2 \right) + |q^n_2|^2 - C_{e},\\
		&\wt{E}^{n} =  \frac{S}{2 }\sum_{i=1}^N  \|\nabla \phi_i^{n+1} - 2 \nabla\phi_i^n  + \nabla \phi_i^{n - 1} \|^2.
	\end{align*}
\end{thm}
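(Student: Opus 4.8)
\emph{Proof strategy.} The plan is to mimic, at the discrete level, the continuous derivation of the energy law: test each equation of Scheme~\eqref{incomp-2nd-cp-sch} against its natural midpoint multiplier, add the resulting scalar identities, and watch the coupling terms telescope. Precisely, I would take the $L^2$ inner product of the momentum equation \eqref{incomp-2nd-cp-sch-eq1} with $\bv^{\hf}$, of the transport equation \eqref{incomp-2nd-cp-sch-eq3} with $\hat{\mu}_i^{\hf}$, of the chemical-potential relation \eqref{incomp-2nd-cp-sch-eq4} with $\delta_t^+\phi_i^n$, of the stress relaxation \eqref{incomp-2nd-cp-sch-eq5} with $\bsig_i^{\hf}$, of the $q_1$-equation with $2q_1^{\hf}$, and of the $q_2$-equation \eqref{incomp-2nd-cp-sch-eq0} with $2q_2^{\hf}$, together with the discrete centroid update $\delta_t^+\bR_i^n=\bV_i^{\hf}$. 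All spatial derivatives are shifted by integration by parts, and the adiabatic boundary conditions \eqref{incomp-BCs} (or periodicity) annihilate every boundary integral. Because the staggered-grid operators used for the fully discrete scheme respect summation-by-parts (see \cite{Hong2020Energy}), the same manipulations, and hence the same conclusion, transfer verbatim from the semi-discrete to the fully discrete level.

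The routine ingredients are the identity $(\delta_t^+w^n,w^{\hf})=\tfrac{1}{2\tau}(\|w^{n+1}\|^2-\|w^n\|^2)$ applied to $w\in\{\bv,q_1,q_2,\bsig_i\}$ and, after one integration by parts, to $w=\nabla\phi_i$, which reconstructs the quadratic summands $\tfrac{\rho}{2}\|\bv^n\|^2$, $\|q_1^n\|^2$, $|q_2^n|^2$, $\tfrac{\varepsilon}{2}\|\bsig_i^n\|^2$, $\tfrac{\gamma_1}{2}\|\nabla\phi_i^n\|^2$ in $\wh{E}^n$; the discrete incompressibility \eqref{incomp-2nd-cp-sch-eq2} with $\bv|_{\p\Omega}=\b0$, which kills the pressure term $(\nabla p^{\hf},\bv^{\hf})$; and the dissipative relations $(\eta\Delta\bv^{\hf},\bv^{\hf})=-\eta\|\nabla\bv^{\hf}\|^2$, $(M\Delta\hat{\mu}_i^{\hf},\hat{\mu}_i^{\hf})=-M\|\nabla\hat{\mu}_i^{\hf}\|^2$, which produce the right-hand side. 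Three cancellations do the real work: (i) the phase-field forcing $-\sum_i(\ol{\phi}_i^{\hf}\nabla\hat{\mu}_i^{\hf},\bv^{\hf})$ in \eqref{incomp-2nd-cp-sch-eq1} is rewritten, via \eqref{incomp-2nd-cp-sch-eq3}, as $-\sum_i[(\delta_t^+\phi_i^n,\hat{\mu}_i^{\hf})+M\|\nabla\hat{\mu}_i^{\hf}\|^2]$, after which \eqref{incomp-2nd-cp-sch-eq4} expands $(\delta_t^+\phi_i^n,\hat{\mu}_i^{\hf})$ and the piece $\sum_i(2q_1^{\hf}\ol{h}_i^{\hf},\delta_t^+\phi_i^n)$ collapses, by the $q_1$-equation, to $(2q_1^{\hf},\delta_t^+q_1^n)=\tfrac{1}{\tau}(\|q_1^{n+1}\|^2-\|q_1^n\|^2)$; (ii) the elastic stress $-\sum_i(\ol{\phi}_i^{\hf}\bsig_i^{\hf},\nabla\bv^{\hf})=-\sum_i(\ol{\phi}_i^{\hf}\bsig_i^{\hf},\bD^{\hf})$, valid because $\bsig_i$ is symmetric, matches $\sum_i(\ol{\phi}_i^{\hf}\bD^{\hf},\bsig_i^{\hf})=\varepsilon\sum_i(\delta_t^+\bsig_i^n,\bsig_i^{\hf})$ from \eqref{incomp-2nd-cp-sch-eq5}, leaving only $\sum_i\tfrac{\varepsilon}{2\tau}(\|\bsig_i^{n+1}\|^2-\|\bsig_i^n\|^2)$; (iii) the excluded-volume/boundary force $-\sum_iq_2^{\hf}\,\ol{\frac{\p q_2}{\p\bR_i}}^{\hf}\!\cdot\bV_i^{\hf}$ in \eqref{incomp-2nd-cp-sch-eq1} cancels the right-hand side of \eqref{incomp-2nd-cp-sch-eq0} tested with $2q_2^{\hf}$, leaving $\tfrac{1}{\tau}(|q_2^{n+1}|^2-|q_2^n|^2)$, while the drag force $\bF_{i,\drag}^{\hf}=-\alpha_i\bV_i^{\hf}$ contributes exactly $-\sum_i\alpha_i|\bV_i^{\hf}|^2$.

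The one place that demands care — though it is not a genuine obstacle — is the linear stabilization term $-S\Delta(\phi_i^{n+1}-2\phi_i^n+\phi_i^{n-1})$ in \eqref{incomp-2nd-cp-sch-eq4}: tested against $\delta_t^+\phi_i^n$ and integrated by parts it yields $\tfrac{S}{\tau}\big(\nabla\phi_i^{n+1}-2\nabla\phi_i^n+\nabla\phi_i^{n-1},\ \nabla\phi_i^{n+1}-\nabla\phi_i^n\big)$, and writing $d_i^n:=\nabla\phi_i^{n+1}-\nabla\phi_i^n$ and using the polarization identity $(d_i^n,d_i^n-d_i^{n-1})=\tfrac12(\|d_i^n\|^2-\|d_i^{n-1}\|^2+\|d_i^n-d_i^{n-1}\|^2)$ delivers precisely the telescoping difference $\tfrac{S}{2}(\|\nabla\phi_i^{n+1}-\nabla\phi_i^n\|^2-\|\nabla\phi_i^n-\nabla\phi_i^{n-1}\|^2)$ that completes $\wh{E}^{n+1}-\wh{E}^n$, plus the nonnegative leftover $\tfrac{S}{2}\|\nabla\phi_i^{n+1}-2\nabla\phi_i^n+\nabla\phi_i^{n-1}\|^2$ which, summed over $i$, equals $\wt{E}^n$. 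One should also record that the exact cancellations in (i) and (iii) rely on the energy-quadratization multipliers $\ol{h}_i^{\hf}$ and $\ol{\frac{\p q_2}{\p\bR_i}}^{\hf}$ being the same in \eqref{incomp-2nd-cp-sch-eq4} and the $q_1$-equation, and in \eqref{incomp-2nd-cp-sch-eq0} and \eqref{incomp-2nd-cp-sch-eq1}, respectively — which is exactly how the scheme is constructed. Assembling all the identities, moving the perfect differences to the left and dividing by $\tau$ gives the stated energy law; since its right-hand side is nonpositive and $\wt{E}^n\ge0$, we conclude $\wh{E}^{n+1}\le\wh{E}^n$ for every $\tau>0$, i.e., unconditional energy stability.
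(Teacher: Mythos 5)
Your proposal is correct and follows essentially the same route as the paper's proof: test each equation with its natural Crank--Nicolson midpoint multiplier ($\bv^{\hf}$, $\hat{\mu}_i^{\hf}$, $\delta_t^+\phi_i^n$, $\bsig_i^{\hf}$, $2q_1^{\hf}$, $2q_2^{\hf}$), sum, and use the polarization identity $2(a-b)(a-2b+c)=|a-b|^2-|b-c|^2+|a-2b+c|^2$ to handle the stabilization term. The cancellations you isolate in (i)--(iii) are exactly the ones the paper exploits, so nothing further is needed.
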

\begin{proof}
	Taking the inner  product of \eqref{incomp-2nd-cp-sch-eq0} with $2 q_2^{\hf}$ and \eqref{incomp-2nd-cp-sch-eq1} with $\bv^{\hf}$, respectively, and adding the results, we obtain
	\begin{align}
		\frac{\rho}{2}\delta_t^+\|\bv^n\|^2 + \delta_t^+ (q_2^n)^2 = -\eta\|\nabla \bv^{\hf}\|^2 - \sum_{i}[\frac{\varepsilon}{2}\delta_t^+\|\bsig_i^n\|^2 +\alpha_i  \|\bV_i^{\hf}\|^2 - (\mu_i^{\hf}, \nabla \cdot( \ol{\phi}_i^{\hf} \bv^{\hf}))].
	\end{align}
	Similarly, taking the inner product of \eqref{incomp-2nd-cp-sch-eq3} with $\hat{\mu}_i^{\hf}$, \eqref{incomp-2nd-cp-sch-eq4} with $\delta_t^+ \phi_i^n$ and \eqref{incomp-2nd-cp-sch-eq5} with $2 q_1^{\hf}$, respectively, and applying identity $2(a - b)(a - 2b + c) = |a - b|^2 - |b - c|^2 + |a - 2 b + c|^2$, we have
	\begin{align}
		\delta_t^+ \|q_1^n\|^2 + \sum_{i}\left[\delta_t^+(\frac{\gamma_1}{2}\|\nabla \phi_i^n\|^2+ \frac{S}{2}\|\nabla \phi_i^n - \nabla \phi_i^{n - 1}\|^2 ) + \frac{S}{2\tau} \|\nabla \phi_i^{n+1} - \nabla \phi_i^n + \nabla \phi_i^{n-1}\|^2\right] = -\sum_i \|\nabla \hat{\mu}_i^{\hf}\|^2.
	\end{align}
	Putting the above results together, we arrive at
	\begin{align}
		&\delta_t^+ \left(\frac{\rho}{2} \|\bv^n\|^2 + \|q_1^n\|^2 + \sum_{i = 1}^N \left( \frac{\gamma_1}{2}\|\nabla_h \phi^n_i\|^2  + \frac{\varepsilon}{2}\|\bsig^n_i\|^2 + \frac{S}{2}\|\nabla \phi_i^n - \nabla \phi_i^{n - 1}\|^2 \right) + |q^n_2|^2 \right) \\
		&+ \sum_{i=1}^N \frac{S}{2 \tau} \|\nabla \phi_i^{n+1} - \nabla \phi_i^n + \nabla \phi_i^{n-1}\|^2 = - \eta\|\nabla \bv^{\half}\|_h^2-\sum_{i  = 1}^N \left[   M \|\nabla \hat{\mu}^{\half}_i \|^2  + \alpha_i |\bV^{\half}_i|^2 \right].
	\end{align}
	This implies the desired conclusion and completes the proof.
\end{proof}

\begin{remark}
	It is straightforward to construct another second-order and fully-coupled scheme  by using the second-order backward  differentiation formula (BDF2), which satisfies
	\begin{align}
		\frac{1}{\tau} (\wh{E}_h^{n + 1} - \wh{E}_h^n + \wt{E}_h^{n})  = - \eta\|\nabla_h \bv^{n+1}\|_h^2-\sum_{i  = 1}^N \left[   M \|\nabla_h \hat{\mu}^{n+1}_i \|_h^2  + \alpha_i |\bV^{n+1}_i|^2 \right],
	\end{align}
	where
	\begin{align*}
		\wh{E}^n_h &= \frac{\rho}{2}  \frac{\|\bv^{n}\|_h^2 + \|2\bv^n - \bv^{n-1}\|_h^2}{2} + \frac{\|q_1^n\|_h^2 + \|2q_1^n - q_1^{n-1}\|_h^2}{2} + \sum_{i=1}^N \Big( \frac{\gamma_1}{2} \frac{\|\nabla_h \phi_i^{n}\|_h^2 + \|2 \nabla_h \phi_i^n - \nabla_h \phi_i^{n-1}\|_h^2}{2} \\
		&+ \frac{\varepsilon}{2} \frac{\|\bsig_i^{n}\|_h^2 + \|2\bsig_i^n - \bsig_i^{n-1}\|_h^2}{2} + \frac{S}{2}\|\nabla_h \phi_i^n - \nabla_h \phi_i^{n - 1}\|_h^2  \Big) + \frac{|q_2^n|^2 + |2q_2^n - q_2^{n-1}|^2}{2} - C_{e},\\
		\wt{E}_h^{n} &= \frac{\rho}{4}\|\bv^{n+1} - 2\bv^n + \bv^{n-1}\|_h^2 + \frac{1}{2}\|q_1^{n+1} - 2q_1^{n} + q_1^{n-1}\|^2_h + \sum_{i=1}^N \Big( \frac{\gamma_1}{4}\|\nabla_h\phi_i^{n+1} -2 \nabla_h\phi_i^n + \nabla_h \phi_i^{n-1}\|_h^2\\
		&+\frac{\varepsilon}{4}\|\bsig_{i}^{n+1} - 2\bsig_{i}^n + \bsig_{i}^{n-1}\|_h^2 + S \|\nabla_h \phi_i^{n+1} - 2\nabla_h \phi_i^n + \nabla_h \phi_i^{n-1}\|_h^2 \Big) + \frac{1}{2}|q_2^{n+1} - 2 q_2^n + q_2^{n-1}|^2.
	\end{align*}
	The proof of energy stability is similar to that in the theorem, in which following identities $2a(3a - 4b + c) = |a|^2 + |2a -b |^2 - |b|^2 - |2 b - c|^2 + |a - 2b + c|^2$ and $(a - 2b + c)(3a -4b +c) = |a-b|^2 - |b - c|^2 + 2|a - 2b + c|^2$ are required.
	The details are omitted for simplicity.
\end{remark}
Scheme \eqref{incomp-2nd-cp-sch} is a linear, but fully coupled scheme. We next present a decoupled one.

\subsection{A decoupled, second-order and structure-preserving numerical scheme }

In this subsection, we develop a second-order, linear, fully decoupled thermodynamically consistent numerical scheme for this model.
Firstly, we introduce
$q_2(t) = \sqrt{\fr \sum_{i \neq j = 1}^N V(\bR_i, \bR_j) +\sum_{i=1}^N V_b(\bR_i, \partial \Omega)+ C_2}$ and $r(t) = \sqrt{\int_{\Omega} \sum_{i} g_i(\phi_{ i }) \d \bx + C_1}$.
where $C_1$ and $C_2$ are constants that guarantee the radicand always positive.  Subsequently, we introduce a variable $s_2(t)$ such that  the original system can be recast into the following equivalent formulation.	
\begin{subequations}\label{decoupled-model}
	\begin{numcases}{}
		\frac{\d q_2}{\d t} = \frac{s_2}{2} \sum_{i} \frac{\p q_2}{\p \bR_i} \frac{\d \bR_i }{\d t}, \label{decoupled-model-eq0}\\
		\rho \wt{\bv}_t =-\nabla p+\eta \Delta  \wt{\bv} + s_2 \nabla \cdot\left( \sum_{i =1 }^N\phi_{ i }\bsig_i \right) + \sum_{i=1}^N [-s_2\frac{q_2  \p q_2}{\p \bR_i }+\bF_{i, \drag}]\frac{\phi_i}{\int_{\Omega} \phi_i \d \bx} \notag\\
		\qquad - s_2\sum_{i=1}^N \phi_i \nabla \hat{\mu}_i ,\label{decoupled-model-eq1}\\
		\bv = \wt{\bv} - \varepsilon_2 \nabla p_t, \label{decoupled-model-eq2}\\
		\nabla \cdot \bv = 0, \label{decoupled-model-eq3}\\
		\phi_{i, t} + s_2\nabla \cdot (\phi_i \wt\bv) = M \Delta \hat{\mu}_i \label{decoupled-model-eq4}\\
		\hat{\mu}_i =   2 r H_i(\phi_{i })  - \gamma_1 \Delta \phi_{i } - s \Delta \p_t\phi_{i,t}, \label{decoupled-model-eq5}\\
		r_{ t} = \sum_{i=1}^N H_i(\phi_i) \phi_{i, t },\label{decoupled-model-eq6}\\
		\varepsilon \bsig_{i, t} - s_2 \phi_i \bD = {\bm 0},\label{decoupled-model-eq7}\\
		\frac{\d s_2}{\d t} = \sum_{i=1}^N \bigg[-q_2 \frac{\p q_2}{\p \bR_i} \bV_i + \big(-\nabla \cdot(\phi_i \bsig_{i}) + q_2 \frac{\p q_2}{\p \bR_i} \frac{\phi_{ i }}{\int_{\Omega}\phi_{ i }\d \bx} + \phi_i \nabla \hat{\mu}, \wt\bv\big) \notag \\
		\qquad+ (\nabla\cdot (\phi_i \bv),\hat{\mu}_i) - (\phi_{ i }\bD, \bsig_{i})\bigg],\label{decoupled-model-eq8}
	\end{numcases}
	where $H_i(\phi_i) = \frac{g^{\prime}_i(\phi_{ i })}{\sqrt{\int_{\Omega}\sum_{i} g_i(\phi_i)\d \bx + C_1}}$. The boundary conditions are given in  \eqref{incomp-BCs} or periodic, the
	initial conditions  are the same as in the previous case and
	\begin{align}
		r(0) = \sqrt{\int_{\Omega} \sum_{i} g_i(\phi^0_{ i }) \d \bx + C_1},\quad s_2(0) = 1.
	\end{align}
\end{subequations}
\begin{remark}
	When   $s_2 \equiv 1$ ,  system \eqref{decoupled-model} reduces to the original system.
\end{remark}

An unconditionally energy stable scheme for  equation \eqref{decoupled-model} is given in the following.
\begin{sch}\label{decoupled-2nd}
	\begin{subequations}\label{decoupled-2nd-eq}
		\begin{numcases}{}
			\delta_t^+ q_2^n = \frac{s^{n+\fr}_2}{2} \sum_{i}\ol{\frac{\p q_2}{\p \bR_i}}^{\half} \ol{\bV_i}^{\half}, \label{decoupled-2nd-eq0}\\
			\rho \frac{\wt{\bv}^{n+1} - \bv^n}{\tau} = - \nabla p^n + \eta \Delta \wt{\bv}^{n+\fr} + s_2^{\half} \nabla \cdot (\sum_{i} \ol{\phi}_i^{\half} \ol\bsig_i^{\half}) +  \notag \\
			s_2^{\half} \sum_{i} -\ol{q}^{\half}_2\ol{\frac{\p q_2}{\p \bR_i}}^{\half} \frac{\ol{\phi}_i^{\half}}{\int_{\Omega} \ol{\phi}_i^{\half} \d \bx} + \sum_{i}  \ol{\bF}^{\half}_{i, \drag} \frac{\ol{\phi}_i^{\half}}{\int_{\Omega} \ol{\phi}_i^{\half} \d \bx} - s_2^{\half} \sum_{i} \ol{\phi}_i^{\half }\nabla \ol{\hat{\mu}}_i^{\half}, \label{decoupled-2nd-eq1}
		\end{numcases}
	\end{subequations}
\end{sch}

\begin{subequations}
	\begin{numcases}{}
		\bv^{n+1} = \wt{\bv}^{n+1} - \varepsilon_2 (\nabla p^{n+1} - \nabla p^n)/\tau, \label{decoupled-2nd-eq8}\\
		\nabla \cdot \bv = 0, \label{decoupled-2nd-eq2}\\
		\delta_t^+ \phi_{ i }^n + s_2^{\half} \nabla \cdot (\ol {\phi}_i^{\half} \ol{\bv}^{\half}) = M \Delta \hat{\mu}^{\half}_i, \label{decoupled-2nd-eq3}\\
		\hat{\mu}^{\half}_i = r^{\half} \ol{H}_i - \gamma_1 \Delta \phi_{i }^{\half} - S \Delta (\phi_i^{n + 1} - 2\phi_i^n + \phi_i^{n - 1}), \label{decoupled-2nd-eq4}\\
		\delta_t^+ r^n = \fr \sum_{i  } (\ol{H}_i , \delta_t^+ \phi^n_{ i }), \label{decoupled-2nd-eq5}\\
		\varepsilon \delta_t^+ \bsig_i^n - s_2^{\half} \ol{\phi}_i^{\half} \ol\bD^{\half} = 0, \label{decoupled-2nd-eq6}\\
		\delta^+_t s_2^n = \sum_{i} \bigg[-q^{\half }_2 \ol{\frac{\p q_2}{\p \bR_i}}^{\half} \ol{\bV_i}^{\half} -(\nabla \cdot (\ol{\phi}_i \ol{\bsig}_i^{\half}), \wt\bv^{\half}) +\notag\\
		(\ol{q}_2^{\half}\ol{\frac{\p q_2}{\p \bR_i}}^{\half} \frac{\ol{\phi}_i^{\half}}{\int_{\Omega} \ol{\phi}_i^{\half}\d \bx}, \wt{\bv}^{\half} ) + (\ol{\phi}_i^{\half} \nabla \ol{\hat{\mu}}_i^{\half} , \wt{\bv}^{\half}) +\notag\\
		(\nabla \cdot (\ol\phi_i^{\half} \ol{\bv}^{\half}), \hat{\mu}_i^{\half}) - (\ol{\phi}_i \ol\bD^{\half}, \bsig_i^{\half})
		\bigg],\label{decoupled-2nd-eq7}
	\end{numcases}
\end{subequations}
where $\ol{H}_i = H_i(\ol{\phi}^{\half}_i)$, $\wt{\bv}^{\half} = (\wt{\bv}^{n+1} + {\bv}^n)/2$, $S = s / \tau^2$, $\varepsilon_2 = \tau^2 / (2 \rho)$.  This scheme allows $\bv, \phi_i, p, \bsig_i$ fully decoupled.

\begin{thm}
	Scheme \eqref{decoupled-2nd} is unconditionally energy stable satisfying the following discrete energy dissipation law
	\begin{align}
		\frac{1}{\tau} \left( \wh{E}^{n+1} - \wh{E}^n + \wt{E}^n \right) + \eta\|\nabla \wt\bv^{\half}\|^2 + \sum_{i}\left[ M \|\nabla \hat{\mu}_i^{\half}\|^2 + \alpha_i (\wt{\bV}^{\half}_i)^2\right] = 0,
	\end{align}
	where $\wt{\bV}_i^{\half} = \frac{1}{\int_{\Omega} \ol{\phi}^{\half}_i \d \bx} \int_{\Omega} \ol{\phi}^{\half}_i \wt{\bv}^{\half}\d \bx$ and
	\begin{align}
		\wh{E}^{n} &= \frac{\rho}{2}\|\bv^n\|^2 + \frac{\tau^2}{8 \rho} \|\nabla p^n\|^2 + \sum_{i} ( \frac{\gamma_1}{2}\|\nabla \phi_i^n\|^2 + \frac{\varepsilon}{2}\|\bsig_i^n\|^2 + \frac{S}{2}\|\nabla \phi_i^n - \nabla \phi_i^{n-1}\|^2 ) \label{DSAV-energy-eq}\\
		&+ |r^n|^2+|q^n_2|^2 + \frac{|s_2^n|^2}{2} - C_{s},\quad C_{s} = C_1 + C_2,\\
		\wt{E}^n &= \frac{S}{2} \sum_{i} \|\nabla \phi_i^{n+1} - 2\nabla \phi_i^n + \nabla \phi_i^{n-1}\|^2.
	\end{align}
\end{thm}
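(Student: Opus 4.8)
The plan is to reproduce the energy-quadratization bookkeeping used for Scheme~\ref{incomp-2nd-cp-sch}, now additionally tracking the scalar $s_2$ and the pressure-correction splitting. First I would pair each equation of Scheme~\ref{decoupled-2nd} with its natural multiplier, take the $L^2(\Omega)$ inner product, and sum over $i$: equation \eqref{decoupled-2nd-eq1} with $\wt\bv^{\half}$, \eqref{decoupled-2nd-eq3} with $\hat\mu_i^{\half}$, \eqref{decoupled-2nd-eq4} with $\delta_t^+\phi_i^n$, \eqref{decoupled-2nd-eq5} with $2r^{\half}$, \eqref{decoupled-2nd-eq6} with $\bsig_i^{\half}$, \eqref{decoupled-2nd-eq0} with $2q_2^{\half}$, and \eqref{decoupled-2nd-eq7} with $s_2^{\half}$. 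Under the boundary conditions \eqref{incomp-BCs} every surface term produced by integration by parts drops out, and the diffusion/drag terms give $-\eta\|\nabla\wt\bv^{\half}\|^2$, $-M\|\nabla\hat\mu_i^{\half}\|^2$ and $-\alpha_i|\wt\bV_i^{\half}|^2$. The time-difference terms yield $\delta_t^+$ of $|q_2^n|^2$, $|r^n|^2$, $\tfrac{\varepsilon}{2}\|\bsig_i^n\|^2$, $\tfrac12|s_2^n|^2$ and, via $-\gamma_1\Delta\phi_i^{\half}$ tested against $\delta_t^+\phi_i^n$, of $\tfrac{\gamma_1}{2}\|\nabla\phi_i^n\|^2$; moreover, combining $\delta_t^+r^n=\tfrac12\sum_i(\ol H_i,\delta_t^+\phi_i^n)$ with the $r^{\half}\ol H_i$ term in \eqref{decoupled-2nd-eq4} turns the two $r$-coupling contributions into $\delta_t^+|r^n|^2$.

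Next I would dispatch the stabilization term. Testing $-S\Delta(\phi_i^{n+1}-2\phi_i^n+\phi_i^{n-1})$ against $\delta_t^+\phi_i^n$ and applying the identity $2(a-b)(a-2b+c)=|a-b|^2-|b-c|^2+|a-2b+c|^2$ with $a=\nabla\phi_i^{n+1}$, $b=\nabla\phi_i^n$, $c=\nabla\phi_i^{n-1}$ produces exactly $\delta_t^+\big(\tfrac{S}{2}\|\nabla\phi_i^n-\nabla\phi_i^{n-1}\|^2\big)+\tfrac{1}{\tau}\cdot\tfrac{S}{2}\|\nabla\phi_i^{n+1}-2\nabla\phi_i^n+\nabla\phi_i^{n-1}\|^2$, i.e.\ the $S$-term of $\wh E^n$ together with the non-negative production $\tfrac{1}{\tau}\wt E^n$.

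The heart of the argument is the cancellation of all coupling terms, which is precisely what \eqref{decoupled-2nd-eq7} is designed to enforce. I would gather every inner product carrying the prefactor $s_2^{\half}$: from \eqref{decoupled-2nd-eq1} tested with $\wt\bv^{\half}$ one gets $s_2^{\half}\sum_i\big[(\nabla\cdot(\ol\phi_i^{\half}\ol\bsig_i^{\half}),\wt\bv^{\half})-\ol q_2^{\half}\,\ol{\frac{\p q_2}{\p \bR_i}}^{\half}\cdot\wt\bV_i^{\half}-(\ol\phi_i^{\half}\nabla\ol{\hat\mu}_i^{\half},\wt\bv^{\half})\big]$; from \eqref{decoupled-2nd-eq0} tested with $2q_2^{\half}$ one gets $s_2^{\half}q_2^{\half}\sum_i\ol{\frac{\p q_2}{\p \bR_i}}^{\half}\cdot\ol{\bV_i}^{\half}$; from \eqref{decoupled-2nd-eq3} tested with $\hat\mu_i^{\half}$ one gets $-s_2^{\half}\sum_i(\nabla\cdot(\ol\phi_i^{\half}\ol\bv^{\half}),\hat\mu_i^{\half})$; from \eqref{decoupled-2nd-eq6} tested with $\bsig_i^{\half}$ one gets $s_2^{\half}\sum_i(\ol\phi_i^{\half}\ol\bD^{\half},\bsig_i^{\half})$; and from \eqref{decoupled-2nd-eq7} tested with $s_2^{\half}$ one gets exactly $s_2^{\half}$ times the negative of the sum of these four groups. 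Their total is therefore $0$, and what remains is $\tfrac{1}{\tau}\big(\wh E^{n+1}-\wh E^n+\wt E^n\big)+\eta\|\nabla\wt\bv^{\half}\|^2+\sum_i\big[M\|\nabla\hat\mu_i^{\half}\|^2+\alpha_i|\wt\bV_i^{\half}|^2\big]=0$, the asserted identity.

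The one genuinely delicate step is the pressure-correction bookkeeping responsible for the terms $\delta_t^+\big(\tfrac{\rho}{2}\|\bv^n\|^2\big)+\delta_t^+\big(\tfrac{\tau^2}{8\rho}\|\nabla p^n\|^2\big)$. Testing \eqref{decoupled-2nd-eq1} with $\wt\bv^{\half}$ literally produces $\tfrac{\rho}{2\tau}(\|\wt\bv^{n+1}\|^2-\|\bv^n\|^2)$ on the left and a raw term $(-\nabla p^n,\wt\bv^{\half})$ on the right, so I would: (i) use \eqref{decoupled-2nd-eq8}, \eqref{decoupled-2nd-eq2} and $\varepsilon_2=\tau^2/(2\rho)$ to write $\wt\bv^{n+1}=\bv^{n+1}+\tfrac{\tau}{2\rho}\nabla(p^{n+1}-p^n)$, whence, by $\nabla\cdot\bv^{n+1}=0$ with $\bv^{n+1}\cdot\bn=0$ on $\p\Omega$, $\|\wt\bv^{n+1}\|^2=\|\bv^{n+1}\|^2+\tfrac{\tau^2}{4\rho^2}\|\nabla(p^{n+1}-p^n)\|^2$; (ii) use $\nabla\cdot\wt\bv^{n+1}=\tfrac{\tau}{2\rho}\Delta(p^{n+1}-p^n)$ together with the induced condition $\bn\cdot\nabla(p^{n+1}-p^n)=0$ (which follows from $\bn\cdot\bv^{n+1}=\bn\cdot\wt\bv^{n+1}=0$) to get $(-\nabla p^n,\wt\bv^{\half})=-\tfrac{\tau}{4\rho}(\nabla p^n,\nabla(p^{n+1}-p^n))$; and (iii) combine these via $\tfrac{\tau}{8\rho}\|\nabla(p^{n+1}-p^n)\|^2+\tfrac{\tau}{4\rho}(\nabla p^n,\nabla(p^{n+1}-p^n))=\tfrac{\tau}{8\rho}\big(\|\nabla p^{n+1}\|^2-\|\nabla p^n\|^2\big)=\delta_t^+\big(\tfrac{\tau^2}{8\rho}\|\nabla p^n\|^2\big)$. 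Aligning these three identities is the crux; everything else is the routine quadratic-form and extrapolation algebra already carried out for Scheme~\ref{incomp-2nd-cp-sch}.
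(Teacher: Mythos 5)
Your proposal is correct and follows essentially the same route as the paper's proof: the identical multipliers for each equation ($2q_2^{\half}$, $\wt{\bv}^{\half}$, $\hat{\mu}_i^{\half}$, $\pm\delta_t^+\phi_i^n$, $2r^{\half}$, $\bsig_i^{\half}$, $s_2^{\half}$), the same two quadratic identities, and the same cancellation of all $s_2^{\half}$-weighted coupling terms against the $s_2$-equation. Your steps (i)--(iii) for the pressure correction are simply the expanded form of the paper's instruction to test the projection equation with $\bv^{n+1}$ and $\frac{\tau}{2\rho}\nabla p^n$, so nothing further is needed.
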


\begin{proof}
	Multiplying \eqref{decoupled-2nd-eq0}  with $2 q_2^{\half}$ and summing over $i$, computing the discrete inner product of \eqref{decoupled-2nd-eq1}, \eqref{decoupled-2nd-eq3}, \eqref{decoupled-2nd-eq4}, \eqref{decoupled-2nd-eq5} and  \eqref{decoupled-2nd-eq6}
	with $\wt{\bv}^{\half}$,  $\hat{\mu}_{ i }^{n+\fr}$, $-\delta_t^+ \phi_i^n$, $2 r^{\half}$, $\bsig_i^{\half}$, respectively, multiplying \eqref{decoupled-2nd-eq7} with $s_2^{\half}$,
	taking the discrete inner product of \eqref{decoupled-2nd-eq8} with $\bv^{n + 1}$ and $\frac{\tau}{2 \rho}\nabla p^n$, respectively, and putting the above results together, we obtain the desired result with the aide of identities $2(a - b) (a - 2 b + c) = |a - b|^2 - |b - c|^2 + |a - 2 b + c|^2$ and $2a(a - b) = |a|^2 - |b|^2 + |a- b|^2$.
\end{proof}

\begin{remark}
	We can develop a second-order and decoupled time-marching scheme using BDF2 and choosing $\varepsilon_2 = 2\tau^2/(3\rho)$ in \eqref{decoupled-model-eq2}. To save space, we only present the corresponding semi-discrete energy dissipation law next
	\begin{align}
		\frac{1}{\tau} (\wh{E}^{n + 1} - \wh{E}^n + \wt{E}^{n})  = - \eta\|\nabla \wt{\bv}^{n+1}\|^2-\sum_{i  = 1}^N \left[   M \|\nabla \hat{\mu}^{n+1}_i \|^2  + \alpha_i |\wt{\bV}^{n+1}_i|^2 \right],
	\end{align}
	where $\wt{\bV}_i^{n+1} = \frac{1}{\int_{\Omega} \ol{\phi}^{n+1}_i \d \bx} \int_{\Omega} \ol{\phi}_i^{n+1} \wt{\bv}^{n+1}\d \bx$ with $\ol{\phi}_i^{n+1} = 2\phi_i^n - \phi_i^{n-1}$, and
	\begin{align*}
		\wh{E}^n &= \frac{\rho}{2}  \frac{\|\bv^{n}\|^2 + \|2\bv^n - \bv^{n-1}\|^2}{2} +\frac{\tau^2}{3 \rho}\|\nabla p^{n}\|^2 + \sum_{i=1}^N \Big( \frac{\gamma_1}{2} \frac{\|\nabla_h \phi_i^{n}\|^2 + \|2 \nabla \phi_i^n - \nabla \phi_i^{n-1}\|^2}{2}\\
		&+ \frac{\varepsilon}{2} \frac{\|\bsig_i^{n}\|_h^2 + \|2\bsig_i^n - \bsig_i^{n-1}\|^2}{2} + \frac{S}{2}\|\nabla \phi_i^n - \nabla \phi_i^{n - 1}\|^2  \Big) + \frac{|r^n|^2 + |2r^n - r^{n-1}|^2}{2}\\
		&+ \frac{\|q_2^n\|_h^2 + \|2q_2^n - q_2^{n-1}\|^2}{2}  + \fr \frac{|s_2^{n}|^2 + |2 s_2^{n} - s_2^{n-1}|^2}{2} - C_s,\\
		\wt{E}^{n} &= \frac{\rho}{4}\|\bv^{n+1} - 2\bv^n + \bv^{n-1}\|^2 + \frac{3 \rho}{4} \| \bv^{n+1} - \wt{\bv}^{n+1}\|^2   + \sum_{i=1}^N \Big( \frac{\gamma_1}{4}\|\nabla \phi_i^{n+1} -2 \nabla \phi_i^n + \nabla \phi_i^{n-1}\|^2\\
		&+\frac{\varepsilon}{4}\|\bsig_{i}^{n+1} - 2\bsig_{i}^n + \bsig_{i}^{n-1}\|_h^2 + S \|\nabla \phi_i^{n+1} - 2\nabla \phi_i^n + \nabla \phi_i^{n-1}\|^2 \Big) + \fr (|r^{n+1} - 2r^n + r^{n-1}|^2) \\
		&+ \frac{1}{2}|q_2^{n+1} - 2 q_2^n + q_2^{n-1}|^2 + \frac{1}{4}|s_2^{n+1} - 2 s_2^{n} + s_2^{n-1}|^2.
	\end{align*}
	This shows that the decoupled scheme based on BDF2 is also energy stable.
\end{remark}

Next, we  discuss how to implement decoupled scheme \eqref{decoupled-2nd} efficiently. Combining equations \eqref{decoupled-2nd-eq3}-\eqref{decoupled-2nd-eq5} yields
\ben\label{phi-i-leq}
\bea{l}
r^{\half} = \fr \sum_{i} (\ol{H}_i, \phi_i^{\half }) + e^n,\quad e^n = r^n - \fr \sum_{i} (\ol{H}_i, \phi_i^{n}),\\[0.2cm]
\phi_i^{\half } - \frac{\cA^{-1}d_i^n}{2}\sum_{i}(\ol{H}_i, \phi_i^{\half}) = \cA^{-1}(f_i)^n_1 + s_2^{n+\fr} \cA^{-1} (f_i)^n_2,\\
\eea
\een
where $\cA^{-1}$ represents the inverse of $\cA$ with $\cA = 2 / \tau + (\gamma_1 + 2 S)M \Delta^2$, $d_i^n = M \Delta \ol{H}_i$, $(f_i)^n_1 = \frac{2}{\tau}\phi_i^n + d_i^n e^n + SM \Delta^2(3\phi_i^n - \phi_i^{n-1})$ and $(f_i)^n_2 = - \nabla \cdot (\ol{\phi}_i^{\half} \ol{\bv}^{\half})$. Taking advantage of the fact that $s_2^{\half}$ is a scalar variable, we introduce the pairs of field functions $((\phi_i)^{\half}_{1}, (\phi_i)^{\half}_{2})$, $((\hat\mu_i)^{\half}_{1}, (\hat\mu_i)^{\half}_{2})$ and $(r_1^{\half}, r_2^{\half})$ as the solutions to the following system:
\balc\label{phi-mu-eq}
(\phi_{i})_1^{\half } - \frac{\cA^{-1} d_i^n}{2}\sum_{i=1}^N(\ol{H}_i, (\phi_{i})_1^{\half }) = \cA^{-1}(f_i)^n_1,\\[0.1cm]
(\phi_{i})_2^{\half } - \frac{\cA^{-1}d_i^n}{2}\sum_{i=1}^N(\ol{H}_i, (\phi_{i})_2^{\half }) = \cA^{-1}(f_i)^n_2,\\[0.1cm]
(\hat{\mu}_{i})^{\half}_1 = \ol{H}_i r_1^{\half} - (\gamma_1 + 2S)\Delta (\phi_{i})_1^{\half }  + S \Delta (3 \phi_i^n - \phi_{ i }^n),\\[0.1cm]
(\hat{\mu}_{i})^{\half}_2 = \ol{H}_i r_2^{\half}  - (\gamma_1 + 2 S)\Delta (\phi_{i})_2^{\half },\\[0.1cm]
r_1^{\half} = e^n + \fr \sum_{i=1}^N (\ol{H}_i, \phi^{\half}_{i, 1}),\\[0.1cm]
r_2^{\half} = \fr \sum_{i=1}^N (\ol{H}_i, \phi^{\half}_{i, 2}).
\ealc
Given $s_2^{\half}$, $\phi_i^{\half} = (\phi_i)_1^{\half} + s_2^{\half }(\phi_i)_{ 2}^{\half} $, ${\hat\mu}_i^{\half} = (\hat \mu_i)^{\half}_{1} + s_2^{\half} (\hat\mu_{i})_2^{\half}$ and $r^{\half} = r_1^{\half} + s_2^{\half} r_2^{\half}$ satisfy the equations consisting of \eqref{decoupled-2nd-eq3}, \eqref{decoupled-2nd-eq4} and \eqref{decoupled-2nd-eq5}, respectively,   the boundary conditions defined in  \eqref{incomp-BCs} or periodic boundary conditions.

To solve for $(\phi_i)^{n+\frac{1}{2}}_j, j=1,2$, we show the linear system for $(\ol{H}_i, (\phi_i)_1^{\half})$  is uniquely solvable. Following the same argument,  the solvability of $(\ol{H}_i, (\phi_i)_2^{\half})$ can be established.  Define
\ben
\beta_{i} = \fr (\cA^{-1} M\Delta \ol{H}_i, \ol{H}_i).
\een
By \eqref{phi-mu-eq} and some simple calculations, the determinant of the coefficient matrix of the linear system for $(\ol{H}_i, (\phi_i)_1^{\half})$ is $1 - \sum_{i = 1}^N \beta_i$. Due to  $\cA^{-1} \Delta$ is a negative semi-definite operator, it is easy to deduce that $\beta_i \leq 0$, which implies the linear system for $(\ol{H}_i, (\phi_i)_1^{\half})$ is uniquely solvable.  Subsequently, plugging $(\ol{H}_i, (\phi_i)_1^{\half})$ and $(\ol{H}_i, (\phi_i)_2^{\half})$ into the first two equations of \eqref{phi-mu-eq}, we obtain $(\phi_{ i })^{\half}_1$ and $(\phi_{ i })^{\half}_2$ uniquely.

Next, in order to solve \eqref{decoupled-2nd-eq1} for $\wt{\bv}^{\half}$, we define two field functions $\wt{\bv}_1^{\half}$ and $\wt{\bv}_2^{\half}$ as the solutions to the following equations:
\balc\label{wt-v-eq}
\wt\bv_1^{\half} + \sum_{i=1}^N \cB^{-1}\xi_i (\ol\phi_i, \wt{\bv}^{\half}_1) = \cB^{-1}\bg_1^n, \\
\wt\bv_2^{\half} + \sum_{i=1}^N \cB^{-1} \xi_i (\ol\phi_i, \wt{\bv}^{\half}_2) = \cB^{-1}\bg_2^n,
\ealc
where $\cB^{-1}$ denotes the inverse of $\cB$ with $\cB = \frac{2 \rho}{\tau} - \eta \Delta$, $\xi_i = \frac{ \alpha_i \ol{\phi}_i}{(\int_{\Omega}\ol \phi_i \d \bx)^2}$, $\bg_1^n = \frac{2 \rho}{\tau} \bv^n - \nabla p^n$ and
\ben
\bg_2^n = \sum_{i=1}^N \big[ \nabla \cdot ( \ol{\phi}_i^{\half} \ol\bsig_i^{\half})    -\ol{q}^{\half}_2\ol{\frac{\p q_2}{\p \bR_i}}^{\half} \frac{\ol{\phi}_i^{\half}}{\int_{\Omega} \ol{\phi}_i^{\half} \d \bx} - \ol{\phi}_i^{\half }\nabla \ol{\hat{\mu}}_i^{\half} \big].
\een
Then, using the fact that $s_2^{\half}$ is a scalar, the solution to \eqref{decoupled-2nd-eq1} can be written as
\ben
\wt{\bv}^{\half} = \wt{\bv}_1^{\half}  + s_2^{\half} \wt{\bv}_2^{\half},\quad
\wt{\bv}_1|_{\p \Omega} = \wt{\bv}_2|_{\p \Omega}= 0\; \text{or}\; periodic.
\een
We note that procedures of computing $\wt{\bv}_1^{\half}$ and $\wt{\bv}_2^{\half}$ are similar to $(\phi_i)_1^{\half}$ and $(\phi_i)_2^{\half}$.
Similarly, for variables $\bsig_i^{\half}$ and $q_2^{\half}$ in system \eqref{decoupled-2nd-eq},
we use  $s_2^{\half}$ to split them in the following forms:
\balc\label{four-vari-eq}
\bsig_i^{\half} = ({\bsig_{i}})_1^{\half} + s_2^{\half} ({\bsig_{i}})_2^{\half},\\
q_2^{\half} = ({q_2})_1^{\half} + s_2^{\half} ({q_2})_2^{\half},
\ealc
where
\ben\label{q2-bsig-eq}
\bea{l}
(q_2)_1^{\half} = q_2^n,\quad
(q_2)_2^{\half} = \frac{\tau}{4} \sum_{i} \ol{\frac{\p q_2}{\p \bR_i}}^{\half} \ol{\bV}_i^{\half},\\[0.3cm]
(\bsig_{i})_1^{\half} = {\bsig_{i}}^n, \quad
(\bsig_{i})_2^{\half} = \frac{\tau}{2 \varepsilon} \ol{\phi}^{\half}_i \ol\bD^{\half}.
\eea
\een
Now, we are ready to determine $s_2^{\half}$.  Using the existing results above, we compute $s_2^{\half}$ from \eqref{decoupled-2nd-eq7} as follows
\ben\label{s2-eq}
\left(\frac{2}{\tau} - \theta_2\right) s_2^{\half} = \frac{2}{\tau} s_2^n + \theta_1,
\een
where
\ben
\bea{c}
\theta_k = \sum_{i} \Big[-(q_2)^{\half }_{k} \ol{\frac{\p q_2}{\p \bR_i}}^{\half} \ol{\bV_i}^{\half} -(\nabla \cdot (\ol{\phi}_i \ol{\bsig}_i^{\half}), \wt\bv_k^{\half}) + (\ol{q}_2^{\half}\ol{\frac{\p q_2}{\p \bR_i}}^{\half} \frac{\ol{\phi}_i^{\half}}{\int_{\Omega} \ol{\phi}_i^{\half}\d \bx}, \wt{\bv}_k^{\half} )\notag\\
+ (\ol{\phi}_i^{\half} \nabla \ol{\hat{\mu}}_i^{\half} , \wt{\bv}_k^{\half}) + (\nabla \cdot (\ol\phi_i^{\half} \ol{\bv}^{\half}), (\hat{\mu}_{i})_k^{\half}) - (\ol{\phi}_i \ol\bD^{\half}, (\bsig_{i})_k^{\half})
\Big],\quad
k = 1, 2.
\eea
\een
Once $s_2^{\half}$ is known, $\bv^{n+1}$ and $p^{n+1}$ are updated by \eqref{decoupled-2nd-eq8} and \eqref{decoupled-2nd-eq2}, respectively.
We summarize the   solution procedure in an algorithm given next.\\[0.1cm]

\begin{algorithm}[H]
	\caption{An efficient fully-decoupled, second-order and structure-preserving scheme}
	\textbf{Input}{ ($\bv^{n-1}$, $\phi_i^{n-1}$, $p^{n-1}$, $r^{n-1}$, $\bsig_{i}^{n-1}$, $q_2^{n-1}$) and ($\bv^{n}$, $\phi_i^n$, $p^n$, $r^n$, $\bsig_{i}^n$, $q_2^n$)}\\[0.2cm]
	\textbf{Output}($\bv^{n+1}$, $\phi_i^{n+1}$, $p^{n+1}$, $r^{n+1}$, $\bsig_{i}^{n+1}$, $q_2^n$)
	
	\hspace{0.01in}{}\\
	{\bf begin}\\[0.1cm]
	Solve $(\phi_i)^{\half}_k$ and $(\hat{\mu}_i)^{\half}_k$, $(k = 1, 2)$ by \eqref{phi-mu-eq}.
	
	\hspace{0.01in}{}\\
	Compute $\wt{\bv}^{\half}_k$, $(k = 1, 2)$ by \eqref{wt-v-eq}.
	
	\hspace{0.01in}{}\\
	Solve \eqref{q2-bsig-eq} for $(q_2)^{\half}_{k}$ and $(\bsig_{i})^{\half}_{k}$, $(k = 1, 2)$.
	
	\hspace{0.01in}{}\\
	Compute $s_2^{n+\fr}$ using \eqref{s2-eq}.
	
	\hspace{0.01in}{}\\
	Update $\bv^{n+1}$ and $p^{n+1}$ by \eqref{decoupled-2nd-eq8} and \eqref{decoupled-2nd-eq2}.\\[0.15cm]
	{\bf end}
\end{algorithm}

\begin{remark}
	Multiplying the first equation  \eqref{q2-bsig-eq} with $\frac{4}{\tau}(q_2)_2^{\half}$, taking the discrete inner product of the last equation of \eqref{q2-bsig-eq} with $\frac{2 \varepsilon}{\tau} (\bsig_{i})^{\half}_2$, \eqref{wt-v-eq} with $\wt{\bv}_2^{\half}$, \eqref{phi-mu-eq} with $(\hat{\mu}_{ i })^{\half}_2$, we deduce  $\theta_2 \leq 0$, yielding $\frac{2}{\tau} - \theta_2 > 0$. Thus, \eqref{s2-eq} is uniquely solvable.  In addition,
	the resulting linear system with only constant coefficient matrix can be solved at each time step using  fast algorithms  efficiently.
\end{remark}

\begin{remark}
	For the hydrodynamical model for active particles, the two newly developed schemes for passive particles can be readily extended to the ones that preserve the discrete energy dissipation rate. Although the energy may not be dissipative in the active matter system, the numerical schemes respect the energy production rate of the continuous model. In addition, it is easy to  prove theoretically that the discrete system is uniquely solvable with the solvability condition $(p, 1) = 0$. Due to space limitation, we omit the details of the proof and interested reader please refer  to \cite{Gong_SISC2018} for relevant details and the references therein.
\end{remark}

\begin{figure}[H]
	\centering
	\subfigure[Accuracy test for the coupled scheme \eqref{incomp-2nd-cp-sch}]{
		\includegraphics[width=0.45\textwidth,height=0.225\textwidth]{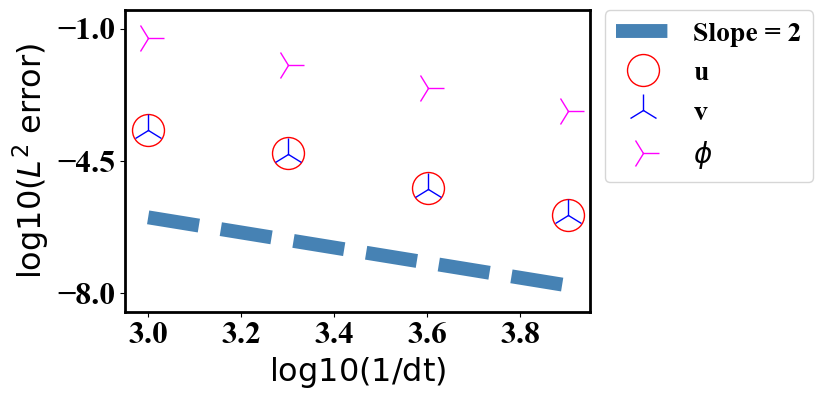}
		\includegraphics[width=0.45\textwidth,height=0.225\textwidth]{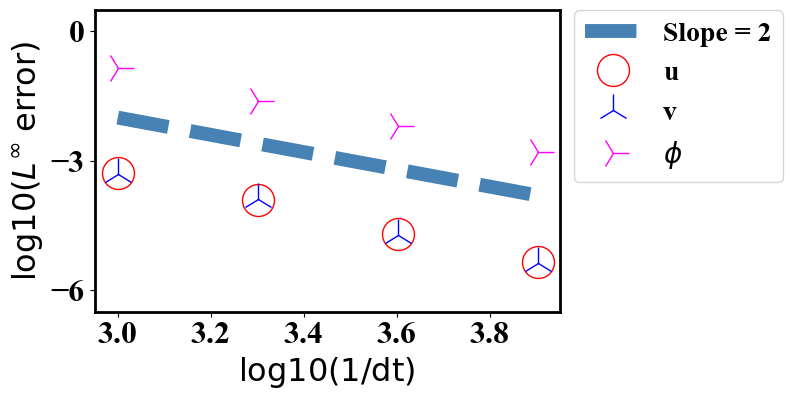}
	}
	\subfigure[Accuracy test for the decoupled scheme \eqref{decoupled-2nd}]{
		\includegraphics[width=0.45\textwidth,height=0.225\textwidth]{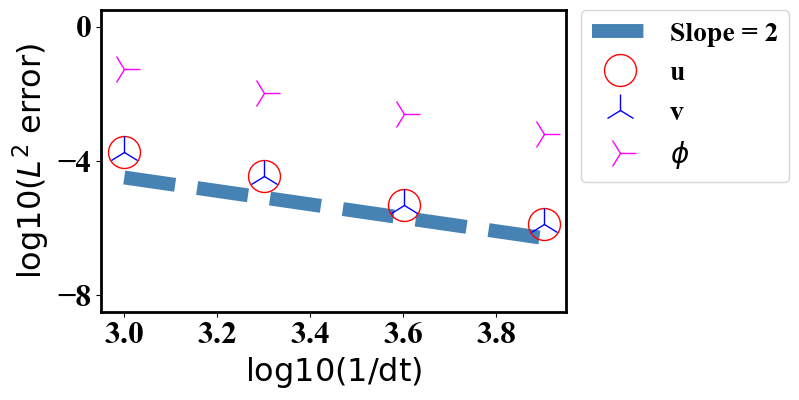}
		\includegraphics[width=0.45\textwidth,height=0.225\textwidth]{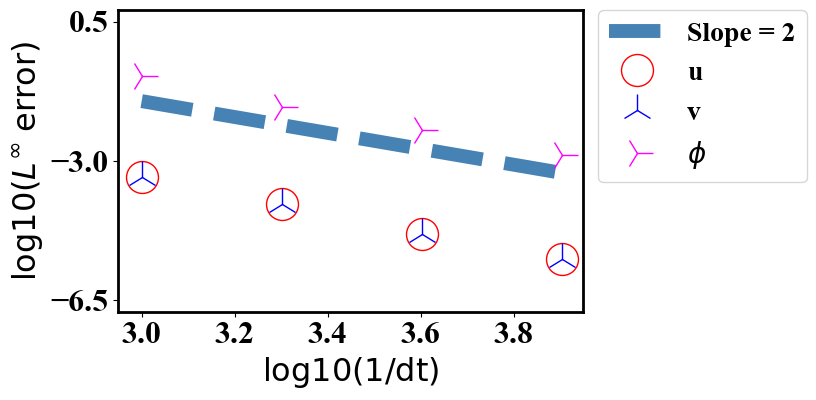}
	}
	\caption{{\bf Example \ref{eg:test-order-1}:} Mesh refinement test for the fully discrete coupled and decoupled scheme, respectively. Second order convergence rates for all variables in both time and space are confirmed. } \label{fig:test-order1}
\end{figure}

\section{Results and Discussion}
In this section, we first conduct the mesh refinement test in time and space to verify accuracy of the proposed schemes. Then we present several numerical examples of the fluid structure interaction to demonstrate the efficiency and usefulness  of  the new model and the developed structure-preserving numerical schemes. In all examples,  the double-well bulk energy for multiple particles is defined as $\sum_{i=1}^{k}g_i(\phi_{ i }) = \sum_{i=1}^{k}\gamma_2 \phi_i^2(1 - \phi_{ i })^2 + \Lambda \Pi_{i=1}^k\phi_i^2, k\geq 2$, where $\Lambda$ is a non-negative constant, and $C_1=C_2=1$.

\begin{example}[Convergence rate]\label{eg:test-order-1}
	We use one particle immersed in viscous fluid matrix in this example to demonstrate that the convergence rate of the proposed scheme is  second-order accurate both in time and space.  We use the following initial conditions
	\begin{align}
		\begin{cases}
			\bu(x, y, 0)  = \b0,\quad \\
			\phi_1(x, y, 0)  = \fr \left( 1 + \tanh \frac{r_1 - \sqrt{(x - x_1)^2 + (y - y_1)^2}}{\sqrt{2} \epsilon} \right),
		\end{cases}
	\end{align}
	where $\bu = (u, v)^T$.
	
	We use computational domain  $\Omega=[0, 1] \times [0, 1]$.
	The parameter values in the model and the initial condition of the particle are chosen as $(r_1, x_1, y_1) = (0.3, 0.5, 0.5)$, $\rho = 0.1$, $\eta = 10$, $\epsilon = 0.025$, $\gamma_1 = 0.25$, $\gamma_2 = 25$, $\alpha_1  = 50$, $S = \epsilon$, $M = 10^{-4}$ and   $\varepsilon = 0.001$, where $r_1$ is the radius of the particle.
	We choose  $N \times N$ spatial meshes, with $N = 8, 16, 32, 64, 128$, and  time step $ \tau =  10^{-3} \times \frac{1}{2^{k-1}}$, $k= 1, 2, \cdots 5$, respectively. The discrete $L^2$ and $L^{\infty}$
	errors  between the solution of the coarse mesh and that of the adjacent finer  mesh are calculated at $t = 0.1$. The numerical errors in mesh refinement tests are plotted in Figure \ref{fig:test-order1}, where we clearly observe that the expected second order accuracy in both time and space  for all variables are attained.
\end{example}

\begin{example}[Sensitivity test for the stabilizing parameter]\label{eg:ST}
	\begin{figure}[H]
		\centering
		\subfigure[The modified energy curve for $S=\epsilon$ with various time steps.]{
			\includegraphics[width=0.40\textwidth,height=0.25\textwidth]{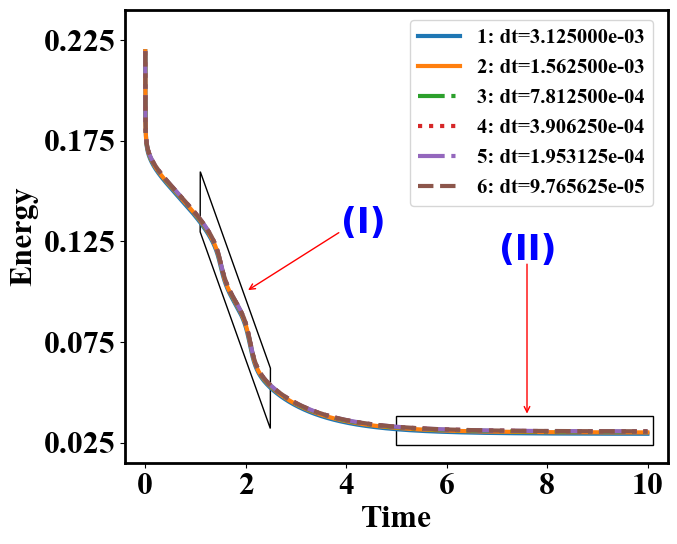}
		}\qquad
		\subfigure[The modified energy curve for $S = 0$ with various time steps.]{
			\includegraphics[width=0.40\textwidth,height=0.25\textwidth]{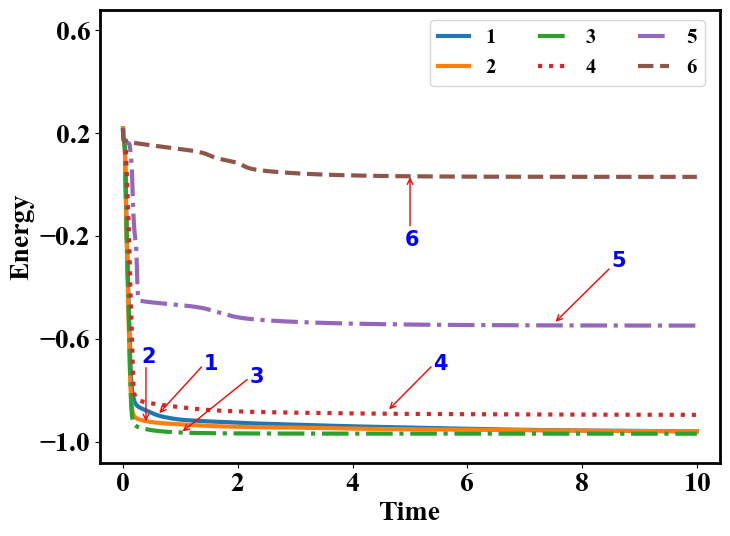}
		}\\
		\subfigure[Zoom-in  view of (a) for $1\leq t \leq 3$.]{
			\includegraphics[width=0.40\textwidth,height=0.25\textwidth]{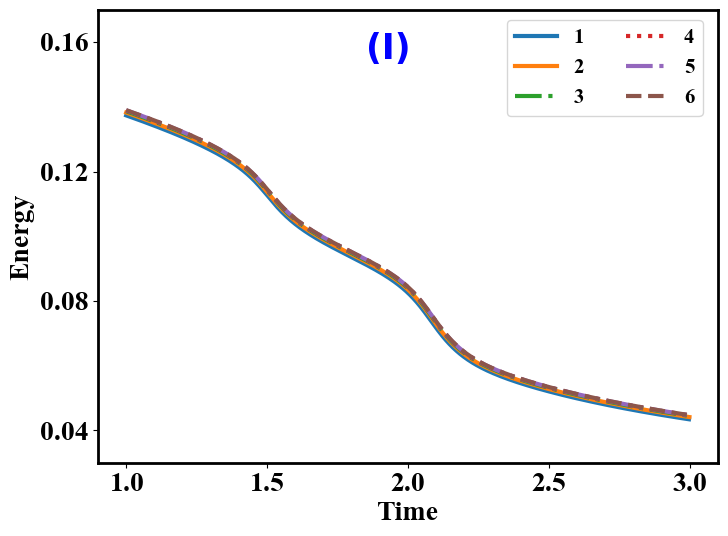}
		}\qquad
		\subfigure[Zoom-in view of (a) for $5 \leq t \leq 10$.]{
			\includegraphics[width=0.40\textwidth,height=0.25\textwidth]{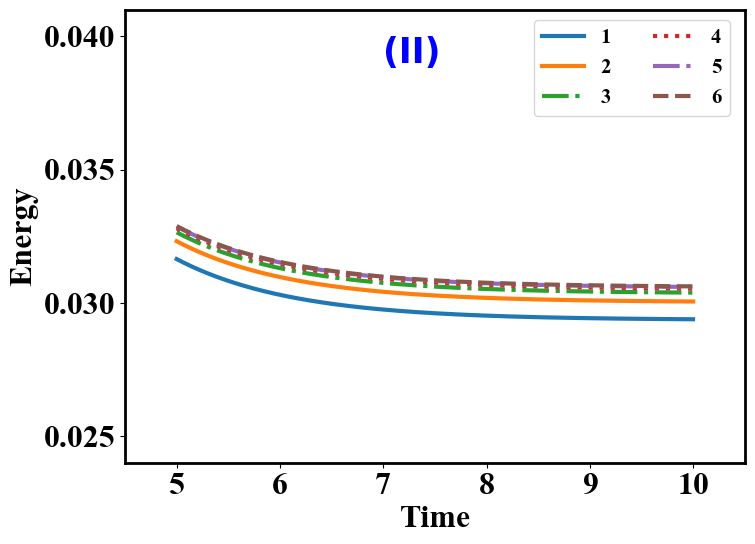}
		}
		\caption{{\bf Example \ref{eg:ST}:} The comparison of time evolution of the total energy with different time steps computed using coupled scheme \eqref{incomp-2nd-cp-sch} with $S = \epsilon$ and $S = 0$, respectively, where the mobility parameter is $M=10^{-3}$.
			The energy curves computed using $S = \epsilon$ 	agree  well when using relative larger time steps; however, the corresponding energy curves computed using $S = 0$ can be erroneous when using larger time steps.
		} \label{fig2:coupled-TS}
	\end{figure}
	In this test, we perform numerical simulations to study the sensitivity of numerical results to   stabilizing parameter $S$ in the schemes. We use computational domain $\Omega=[0, 1] \times [0, 1]$, three particles represented by $\phi_i, i=1,2,3,$ and the following initial conditions:
	\begin{align}
		\begin{cases}
			\bu(x, y, 0) = \b0,\\
			\phi_{ i }(x, y, 0) = 0.5\left( 1 + \tanh \frac{r_i - \sqrt{(x - x_i)^2 + (y - y_i)^2}}{\sqrt{2} \epsilon }\right),\quad i = 1, 2, 3,
		\end{cases}
	\end{align}
	where $(r_1, x_1, y_1) = (0.08, 0.2, 0.8)$, $(r_2, x_2, y_2) = (0.08, 0.5, 0.5)$, $(r_3, x_3, y_3) = (0.08, 0.8, 0.2)$ and $\epsilon = 0.01$. We set $M = 10^{-3}$, $S = \epsilon$, $\Lambda = 10$. The other model parameter values read as follows
	\begin{align}
		\begin{cases}
			\rho = 0.01,\;  \eta = 1,\;  \alpha_{1 } = 6 \pi \eta r_1 ,\;
			\alpha_{2} = 6 \pi \eta r_2,\; \alpha_{3} = 6 \pi \eta r_3,\\
			\varepsilon = 0.001,\; \gamma = 1,\; \gamma_1 = \gamma \epsilon,\; \gamma_2 = \frac{\gamma}{4\epsilon},\;   R_m = 0.4,\;  \epsilon_2 = 0.001.
		\end{cases}
	\end{align}

	\begin{figure}[H]
		\centering
		\subfigure[Energy curve with $S = \epsilon$.]{
			\includegraphics[width=0.40\textwidth,height=0.25\textwidth]{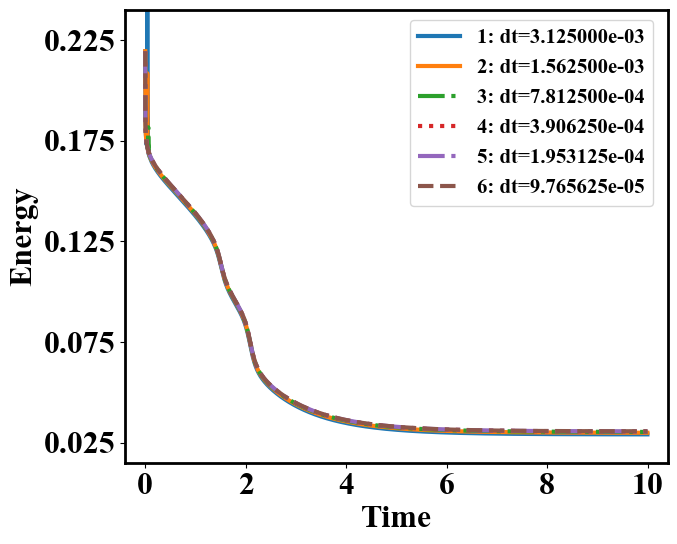}
			\includegraphics[width=0.40\textwidth,height=0.25\textwidth]{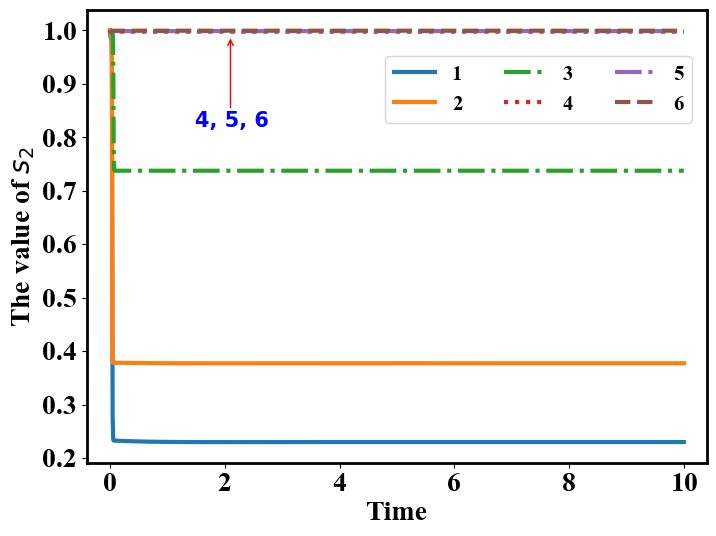}
		}
		\subfigure[Energy curve with $S = 0$.]{
			\includegraphics[width=0.40\textwidth,height=0.25\textwidth]{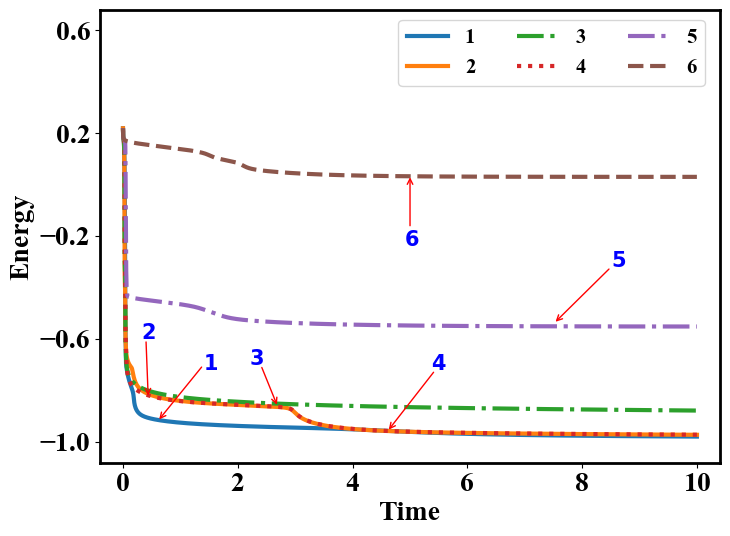}
			\includegraphics[width=0.40\textwidth,height=0.25\textwidth]{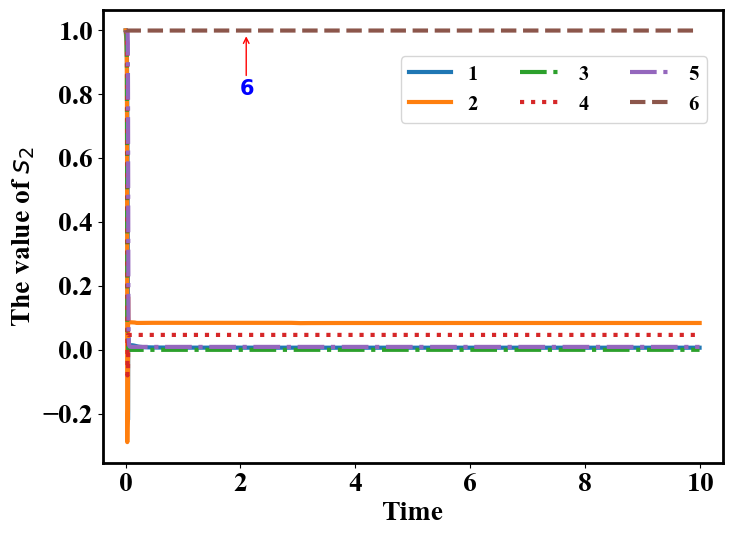}
		}
		\caption{{\bf Example \ref{eg:ST}:} Comparison of the modified energy \eqref{DSAV-energy-eq} without the  $\fr (s_2^n)^2$ term computed using the decoupled scheme for various time steps.
		} \label{fig3:decoupled-TS}
	\end{figure}
	
	\begin{figure}[H]
		\centering
		\includegraphics[width=0.46\textwidth,height=0.25\textwidth]{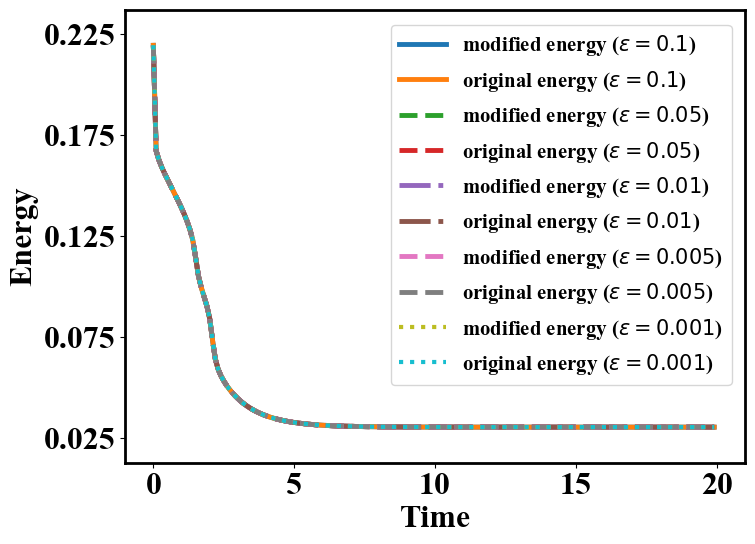}
		\caption{{\bf Example \ref{eg:ST}:} Time evolution of the original energy and modified energy  for different  $\varepsilon=0.1, 0.05, 0.01, 0.005, 0.001$. They are indistinguishable.
		} \label{fig3:TS-relax}
	\end{figure}
	
	Here, we  adopt the physical boundary condition and discretize the equations in space using $128 \times 128$ grids.  To show sensitivity of the energy calculated with respect to different $S$ and  make a comparison between the coupled and decoupled scheme, we plot curves of the total energy computed using $S = 0$ and $S =  \epsilon$ in different time steps in  Figure \ref{fig2:coupled-TS} and Figure \ref{fig3:decoupled-TS}, respectively. We summarize the results as follows.
	\begin{enumerate}
		\item
		Both the coupled and decoupled scheme with $S = 0$ can yield incorrect energy profiles for $\tau > 9.765625 \times 10^{-5}$.
		\item
		All the energy curves calculated using the coupled scheme with $S = \epsilon$ and properly selected time steps demonstrate monotonic decay, demonstrating that stabilizing methods indeed ``stabilize" the schemes at relatively large time steps. The decoupled scheme with $S = \epsilon$ is numerically ``stable" until $\tau \approx 3.125 \times 10^{-3}$.
		Although the energy curve of the decoupled scheme with $S = \epsilon$ for time step $\tau = 1.5625 \times 10^{-3}$, $\tau = 7.8125 \times 10^{-4}$ decays in time, the value of $s_2$ in each case is far away from $1$, indicating numerical errors in these results. The magnitude of $s_2$ is close to $1$, the exact solution, for $S = \varepsilon$ only when  $\tau < 3.90625  \times 10^{-4}$.  So, we conclude that the decoupling strategy makes the scheme less ``stable" and less accurate in comparison with the fully coupled scheme.
	\end{enumerate}
	This example demonstrates  that although both schemes are unconditionally energy stable theoretically and the decoupled one is easier to implement than the coupled one in practice, both schemes can produce inaccurate numerical results at relative large time steps; in comparison, the fully coupled scheme seems to be more ``stable" and accurate than the decoupled one in this test.
	
	In the following, we test the sensitivity of the stabilized scheme to parameter $S$. In order to obtain good accuracy, we take time step $\tau  = 10^{-4}$ to
	perform the following simulations using $S=\varepsilon=0.1, 0.05, 0.01, 0.005, 0.001$, respectively.
	In Figure \ref{fig3:TS-relax},  we compare evolution of the original discrete energy defined without energy quadratization and the modified discrete energy defined by \eqref{DSAV-energy-eq} without the $s_2^2/2$ term for $\varepsilon=0.1, 0.05, 0.01, 0.005, 0.001$, respectively. In all cases, the curves agree very well and show monotonic decay to the steady state in long time. This shows that the value of the scheme is not very sensitive to the magnitude of stabilizing parameter $S$ so long as it is positive.

	In the following, we also test the sensitivity with respect to different values of elastic relaxation parameter $\varepsilon$. In order to obtain accurate results, we take time step $\tau  = 10^{-4}$  to
	conduct the following simulations with a fixed stabilizing parameter $S = \epsilon$. In Figure \ref{fig3:TS-relax},  we compare time evolution of the original discrete energy defined without energy quadratization and the modified discrete energy defined by \eqref{DSAV-energy-eq} without the $s_2^2/2$ term for $\varepsilon=0.1, 0.05, 0.01, 0.005, 0.001$, respectively. In all cases, the energy results agree very well and exhibit monotonic decay to the steady state in long time. This shows that the proposed scheme is not very sensitive to elastic  parameter $\varepsilon$.
\end{example}
\begin{figure}[H]
	\centering
	\subfigure[At  $t=0, 2, 3, 4, 5$ (from left to right).]{
		\begin{minipage}[]{0.8\linewidth}
			\includegraphics[width=0.18\textwidth,height=0.18\textwidth]{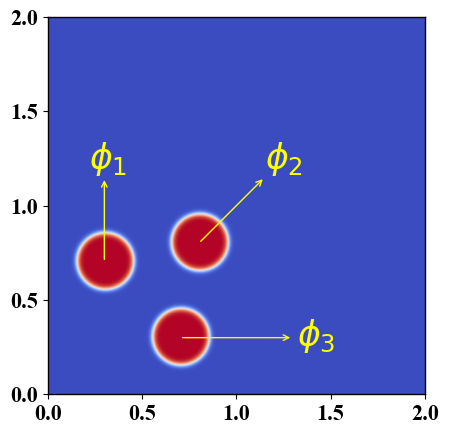}
			\includegraphics[width=0.18\textwidth,height=0.18\textwidth]{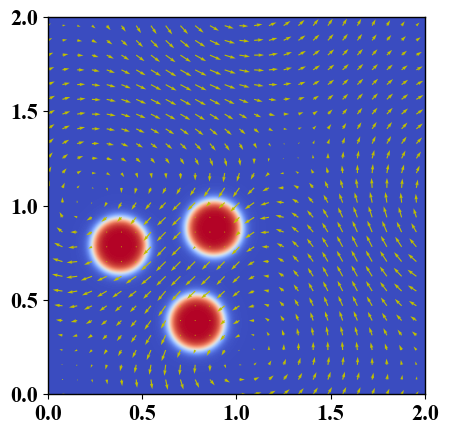}
			\includegraphics[width=0.18\textwidth,height=0.18\textwidth]{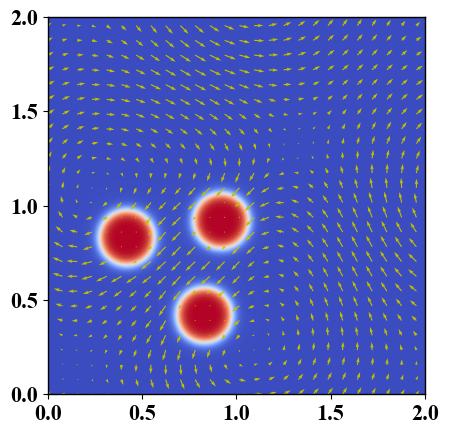}
			\includegraphics[width=0.18\textwidth,height=0.18\textwidth]{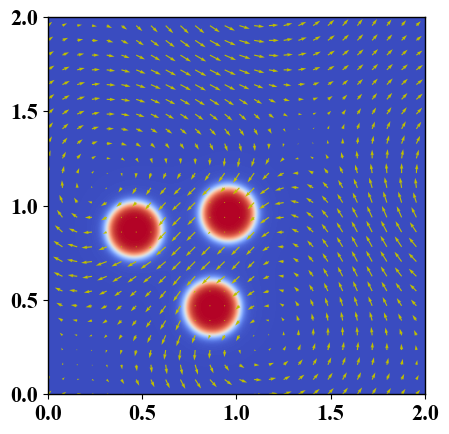}
			\includegraphics[width=0.18\textwidth,height=0.18\textwidth]{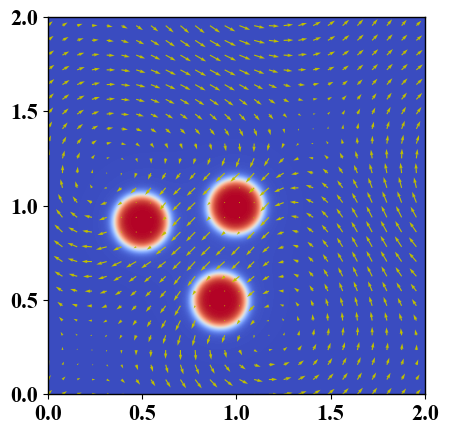}
		\end{minipage}
	}
	\subfigure[At  $t=6, 7, 8, 9, 10$ (from left to right).]{
		\begin{minipage}[]{0.8\linewidth}
			\includegraphics[width=0.18\textwidth,height=0.18\textwidth]{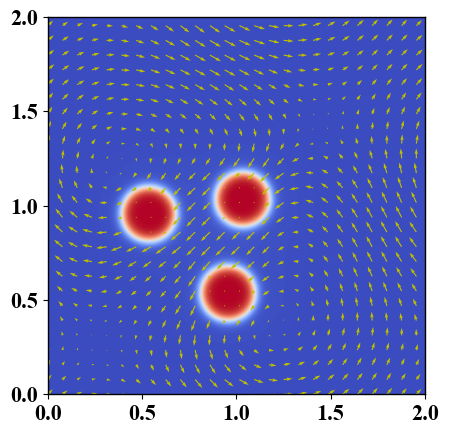}
			\includegraphics[width=0.18\textwidth,height=0.18\textwidth]{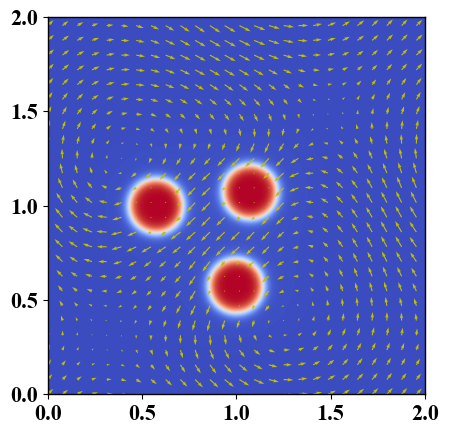}
			\includegraphics[width=0.18\textwidth,height=0.18\textwidth]{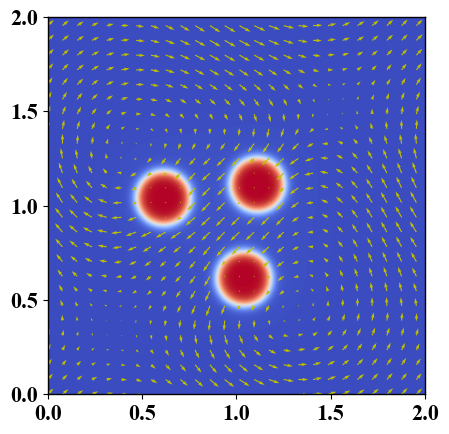}
			\includegraphics[width=0.18\textwidth,height=0.18\textwidth]{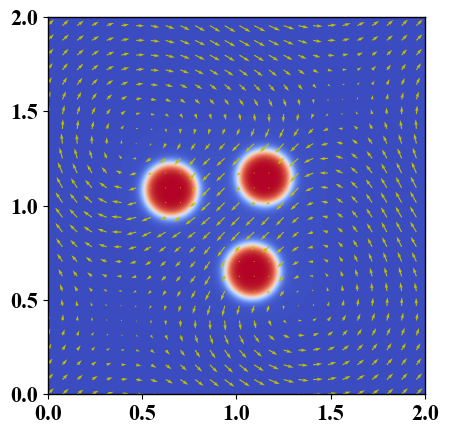}
			\includegraphics[width=0.18\textwidth,height=0.18\textwidth]{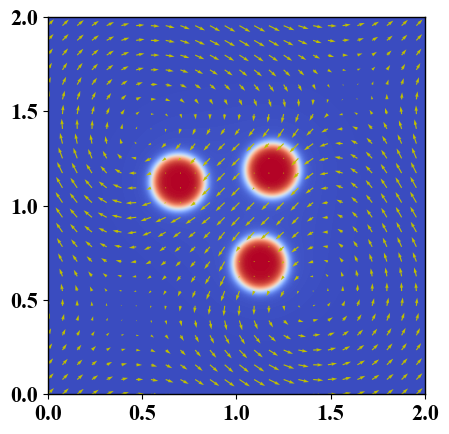}
		\end{minipage}
	}
	\subfigure[At  $t=11, 12, 13, 14, 15$ (from left to right).]{
		\begin{minipage}[]{0.8\linewidth}
			\includegraphics[width=0.18\textwidth,height=0.18\textwidth]{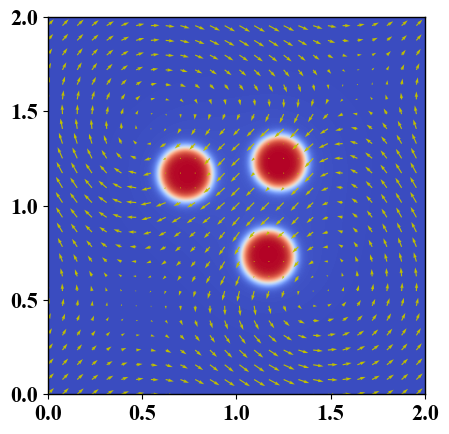}
			\includegraphics[width=0.18\textwidth,height=0.18\textwidth]{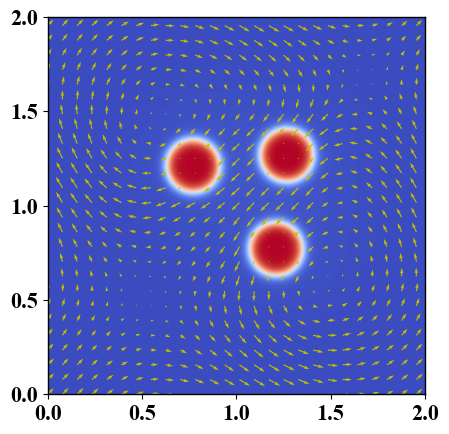}
			\includegraphics[width=0.18\textwidth,height=0.18\textwidth]{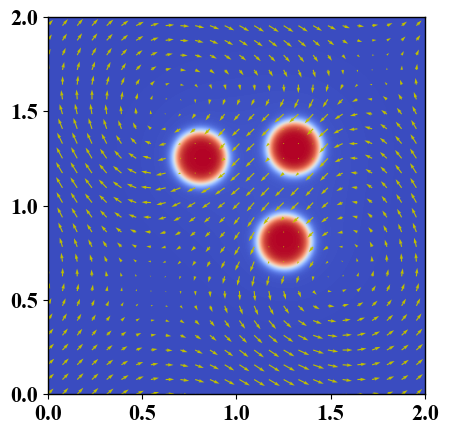}
			\includegraphics[width=0.18\textwidth,height=0.18\textwidth]{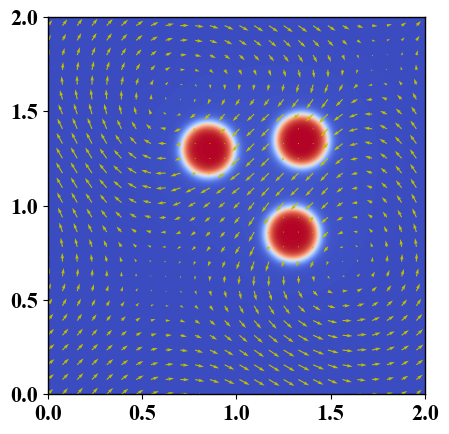}
			\includegraphics[width=0.18\textwidth,height=0.18\textwidth]{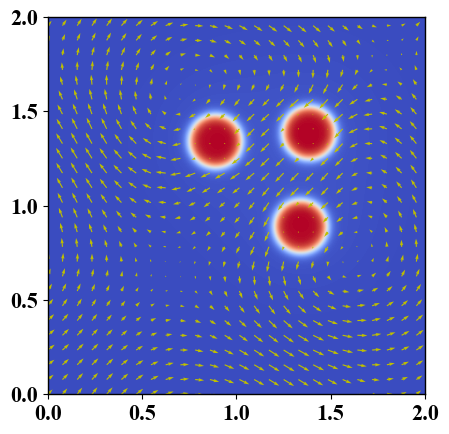}
		\end{minipage}
	}	
	\subfigure[At  $t=16, 17, 18, 19, 20$ (from left to right).]{
		\begin{minipage}[]{0.8\linewidth}
			\includegraphics[width=0.18\textwidth,height=0.18\textwidth]{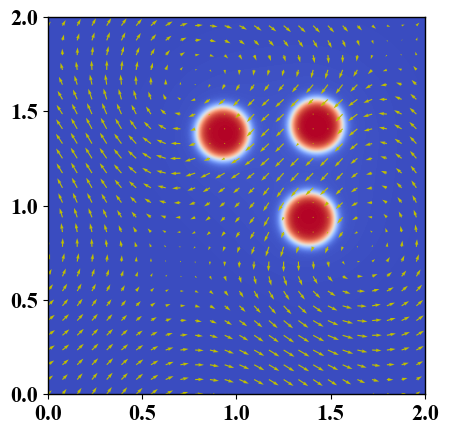}
			\includegraphics[width=0.18\textwidth,height=0.18\textwidth]{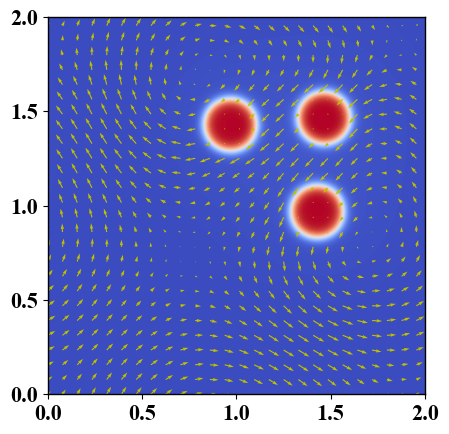}
			\includegraphics[width=0.18\textwidth,height=0.18\textwidth]{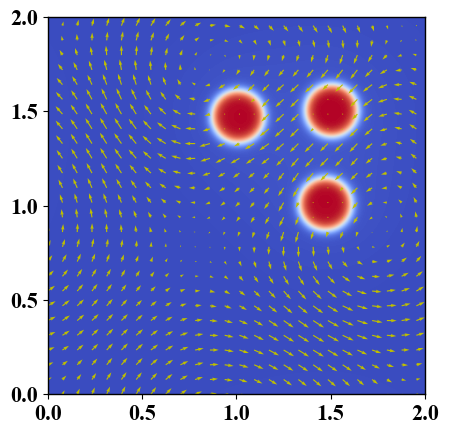}
			\includegraphics[width=0.18\textwidth,height=0.18\textwidth]{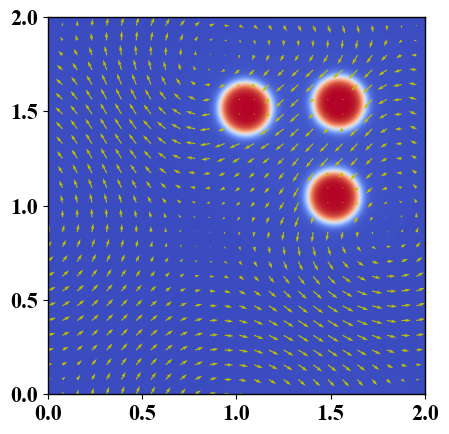}
			\includegraphics[width=0.18\textwidth,height=0.18\textwidth]{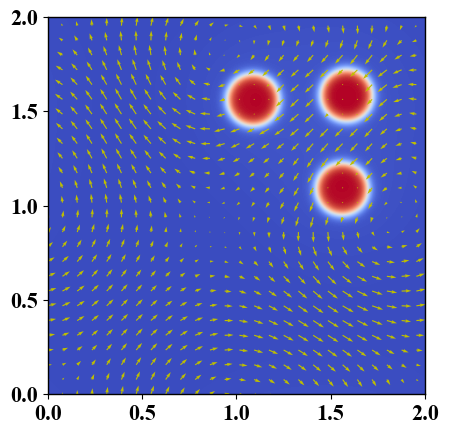}
		\end{minipage}
	}
	\caption{{\bf Example \ref{eg:3-active-particles} (Case I):} 2D dynamics of three active particles in run motion only, where several snapshots are taken from $t = 0$ to $20$. The mobility parameter is set at $M = 10^{-4}$ and the time step $\tau = 10^{-4}$. The arrows represent the velocity direction at grid points.
	}\label{fig:active-case1-pcolor}
\end{figure}
\begin{example}[Run and tumble motion of three active particles ]\label{eg:3-active-particles}
	In this example, we simulate run-and-tumble motion  of active particles, where they move at constant self-propelled speed for awhile and then change directions randomly in  time.  The kinematic velocity of the $i$-th active particle's center of mass  is defined  in \eqref{active-speed}.
	We use 2D computational domain $\Omega = [0, 2] \times [0, 2]$ with periodic
	boundary conditions and  $256 \times 256$ equal distanced meshes in space.
	The initial velocity of the particles is set to zero.
	The initial values of the phase field functions  are given by
	\begin{align}
		\phi_i(x, y, t)|_{t = 0} = 0.5 \left( 1 + \tanh \frac{r_i - \sqrt{(x - x_i)^2 + (y - y_i)^2}}{\sqrt{2} \epsilon} \right),\quad i =1, 2, 3,
	\end{align}
	where $\epsilon = 0.01$. We set $ M = 10^{-4}$, $S = 10 \epsilon$ and $\tau = 10^{-4}$. If not explicitly specified, the other parameter values are given as follows
	\begin{align}
		\begin{cases}
			\rho = 0.01,\;  \eta = 1,\;  \alpha_{1 } = 6 \pi r_1 \eta,\;
			\alpha_{2} = 6 \pi r_2 \eta,\; \alpha_{3} = 6 \pi r_3 \eta,\; \Lambda = 10,\\
			\varepsilon = 10^{-3},\; \gamma = 1,\; \gamma_1 = \gamma \epsilon,\; \gamma_2 = \frac{\gamma}{4\epsilon},\;    R_m = 0.45,\;  \epsilon_2 = 0.001.
		\end{cases}
	\end{align}

	\noindent {\bf Case I: Run motion.} Firstly, we consider the run motion of three active particles  with the same velocity initially.
	In this case, we set $ (r_1, x_1, y_1) = (0.15, 0.3, 0.7)$, $(r_2, x_2, y_2) = (0.15, 0.8, 0.8)$, $(r_3, x_3, y_3)$ = $(0.15, 0.7, 0.3)$ and the active velocity is given by $\bp_i = (0.05, 0.05)$, $i=1, 2, 3$. The velocity and vorticity field are shown in Figure \ref{fig:active-case1-pcolor} and Figure \ref{fig:active-case1-vorticity}, respectively. We observe that the three active particles are initially located at the southwest corner, then they self-propel themselves toward the northeast. During the process, we observe that the moving direction of the particles is opposite to that of the flow field around them, and each particle maintains their shapes, which confirms that rigidity constraint \eqref{D-eq}  is very well maintained. In addition,  we also see that
	a weak fluid flow is induced in the four areas that located in the front, back, left and right of the moving direction of the particles. In Figure \ref{fig:invariant-case1}, we plot  evolution of the modified energy and the volume of the three particles. We observe that the energy   decays monotonically during this process. Figure \ref{fig:invariant-case1} (b) shows that the  total volume is preserved extremely well, implying that the each material phase  is conserved accurately.    Figure \ref{fig:invariant-case1} (c) demonstrates that $s_2$ is always close to $1$ in the simulation, confirming  the accuracy of the simulation.
	\begin{figure}[H]
		\centering
		\subfigure[Snapshots of vorticity at $t=0.1, 2, 3, 4, 5$.]{
			\begin{minipage}[]{0.8\linewidth}	
				\includegraphics[width=0.17\textwidth,height=0.146\textwidth]{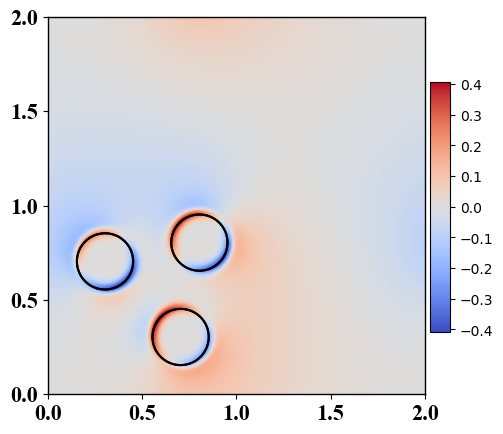}
				\includegraphics[width=0.17\textwidth,height=0.146\textwidth]{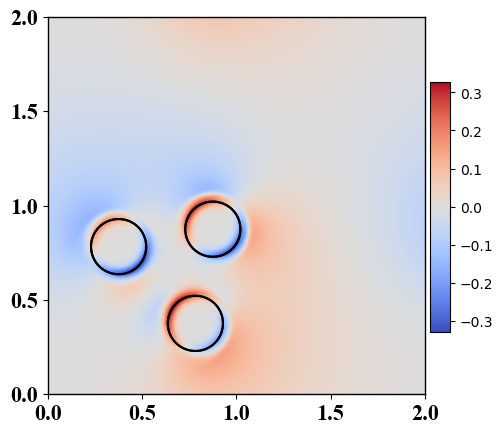}
				\includegraphics[width=0.17\textwidth,height=0.146\textwidth]{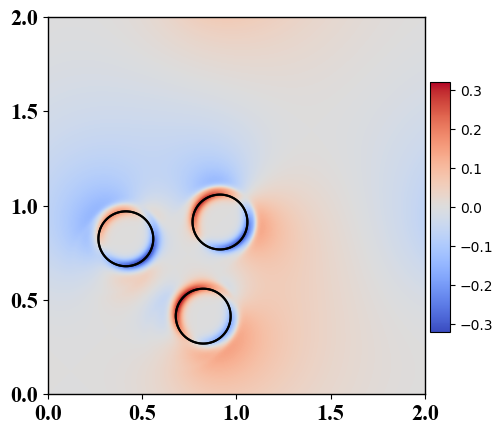}
				\includegraphics[width=0.17\textwidth,height=0.146\textwidth]{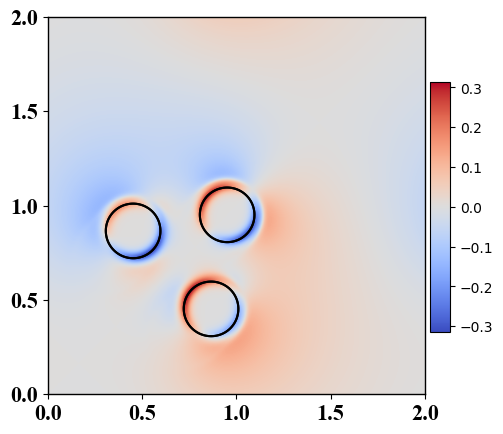}
				\includegraphics[width=0.17\textwidth,height=0.146\textwidth]{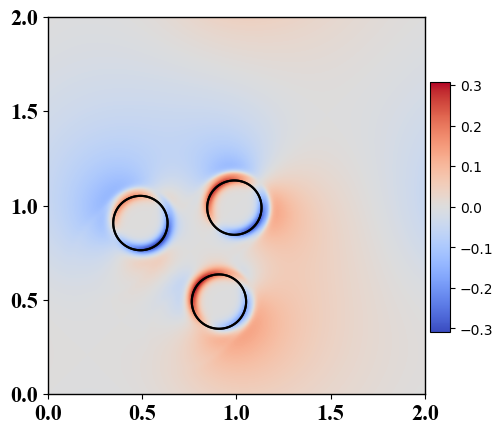}
			\end{minipage}
		}
		\subfigure[Snapshots of vorticity at $t=6, 7, 8, 9, 10$.]{
			\begin{minipage}[]{0.8\linewidth}	
				\includegraphics[width=0.17\textwidth,height=0.146\textwidth]{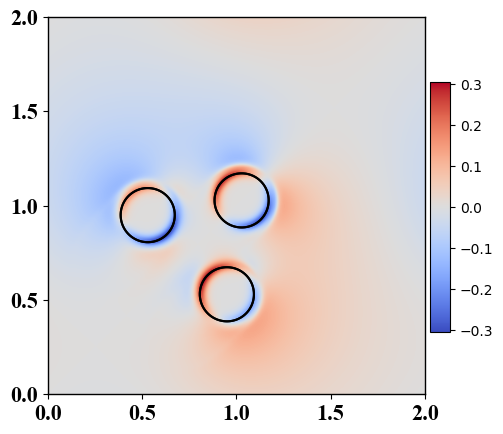}
				\includegraphics[width=0.17\textwidth,height=0.146\textwidth]{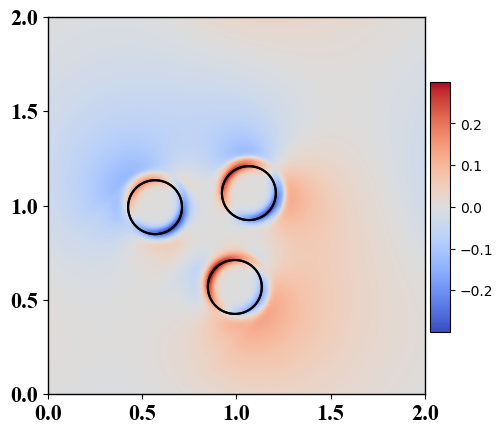}
				\includegraphics[width=0.17\textwidth,height=0.146\textwidth]{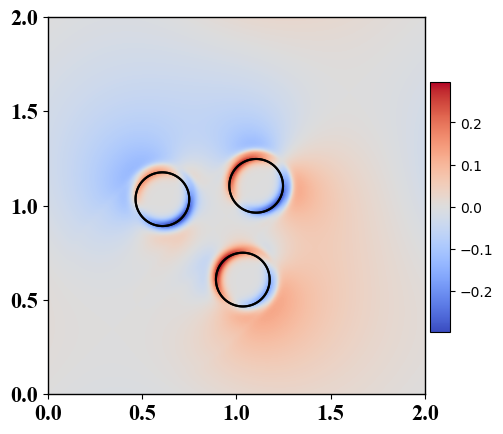}
				\includegraphics[width=0.17\textwidth,height=0.146\textwidth]{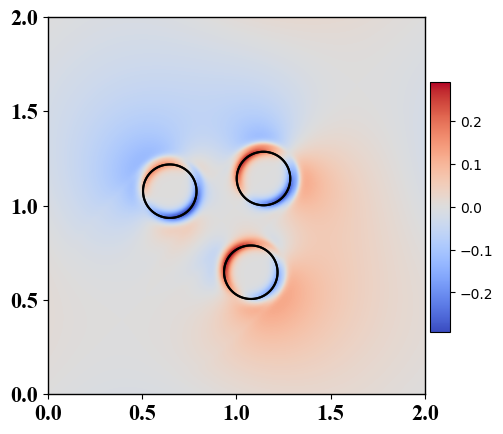}
				\includegraphics[width=0.17\textwidth,height=0.146\textwidth]{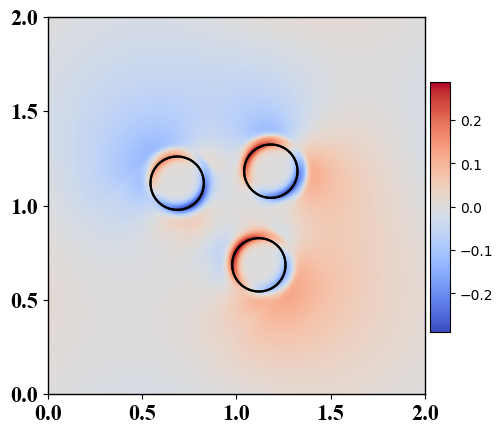}
			\end{minipage}	
		}
		\subfigure[Snapshots of vorticity at $t=11, 12, 13, 14, 15$.]{
			\begin{minipage}[]{0.8\linewidth}		
				\includegraphics[width=0.17\textwidth,height=0.146\textwidth]{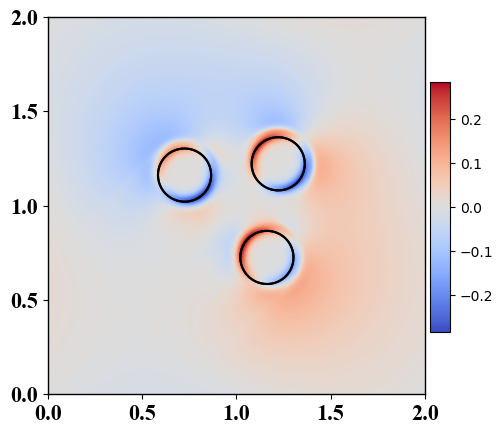}
				\includegraphics[width=0.17\textwidth,height=0.146\textwidth]{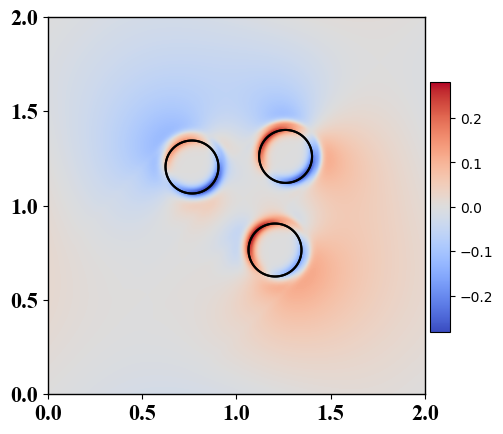}
				\includegraphics[width=0.17\textwidth,height=0.146\textwidth]{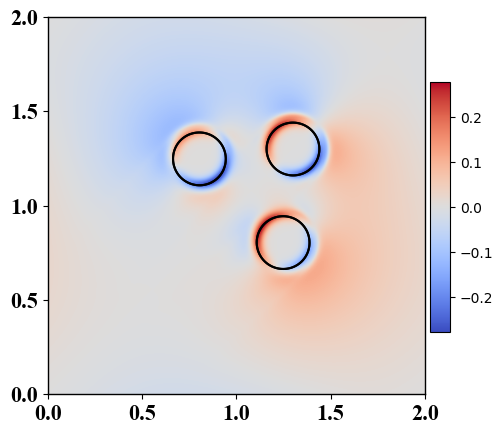}
				\includegraphics[width=0.17\textwidth,height=0.146\textwidth]{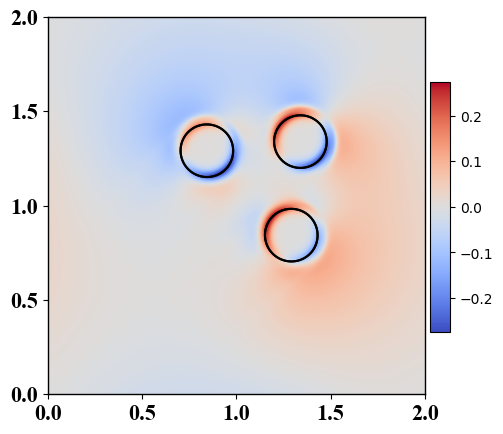}
				\includegraphics[width=0.17\textwidth,height=0.146\textwidth]{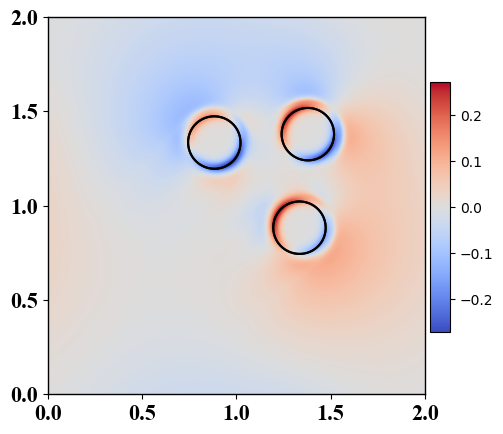}
			\end{minipage}
		}
		\subfigure[Snapshots of vorticity at $t=16, 17, 18, 19, 20$.]{
			\begin{minipage}[]{0.8\linewidth}	
				\includegraphics[width=0.17\textwidth,height=0.146\textwidth]{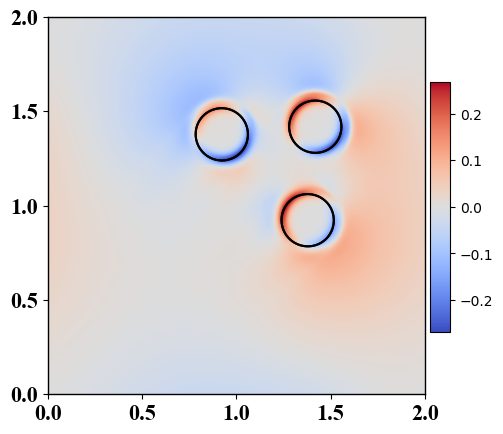}
				\includegraphics[width=0.17\textwidth,height=0.146\textwidth]{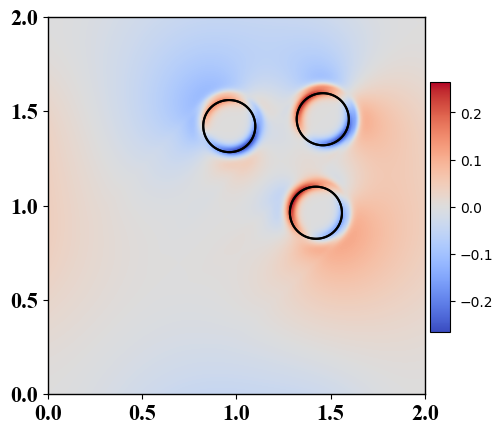}
				\includegraphics[width=0.17\textwidth,height=0.146\textwidth]{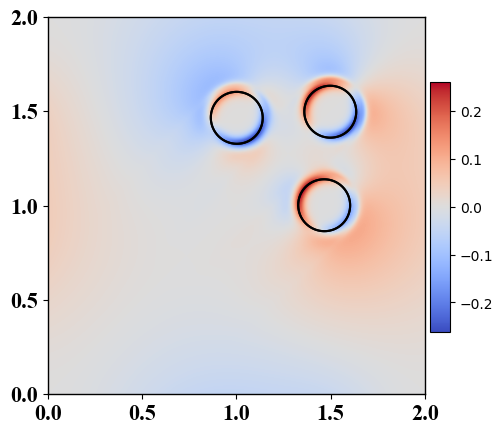}
				\includegraphics[width=0.17\textwidth,height=0.146\textwidth]{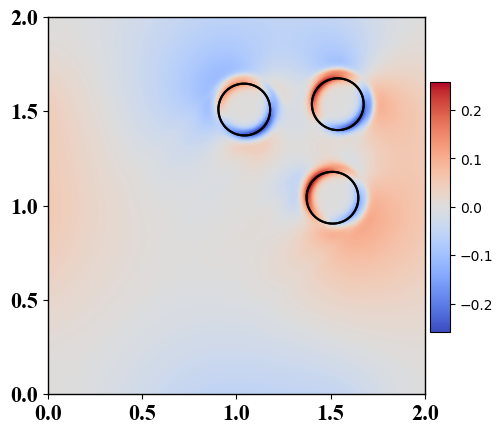}
				\includegraphics[width=0.17\textwidth,height=0.146\textwidth]{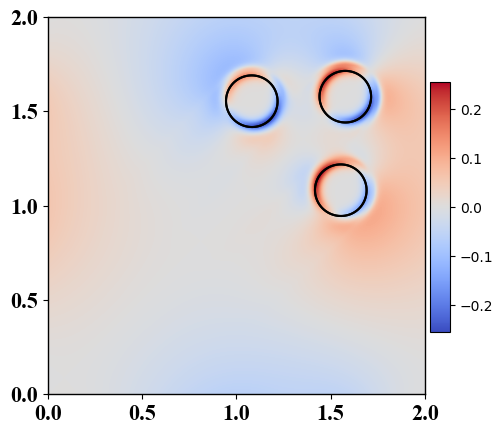}
			\end{minipage}
		}
		\caption{{\bf Example \ref{eg:3-active-particles} (Case I):} 2D snapshots of the vorticity field around the three active particles in the run motion, at selected moments between $t=0.1$ and $20$ with $M = 10^{-4}$ and $\tau = 10^{-4}$. The contour of $\phi_{ i } = 0.5$ is shown in black.} \label{fig:active-case1-vorticity}
	\end{figure}
	
	\begin{figure}[H]
		\centering
		\subfigure[Energy plot vs. time.]{
			\includegraphics[width=0.30\textwidth,height=0.25\textwidth]{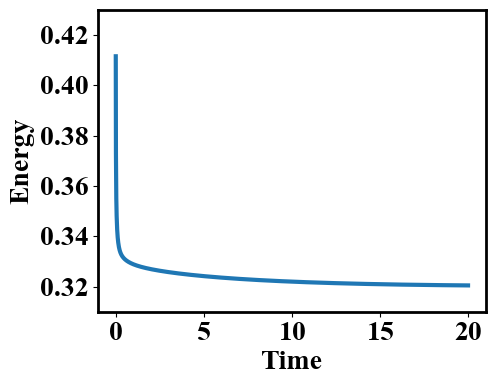}
		}
		\subfigure[Total volume change vs time.]{
			\includegraphics[width=0.30\textwidth,height=0.25\textwidth]{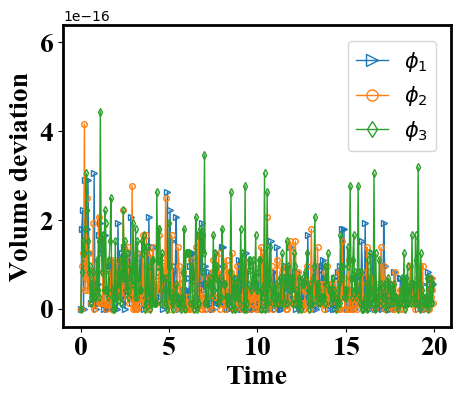}
		}
		\subfigure[ $s_2$ vs time.]{
			\includegraphics[width=0.30\textwidth,height=0.25\textwidth]{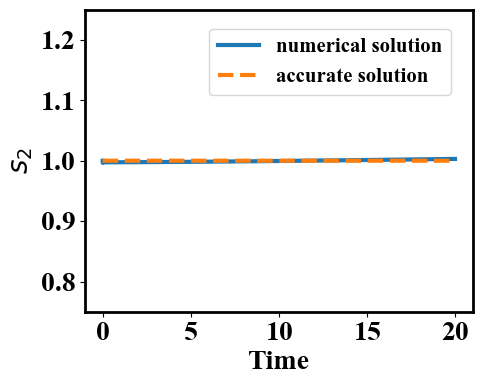}
		}
		\caption{{\bf Example \ref{eg:3-active-particles}:} $(a)$. Time evolution of the modified energy   and volume in the run motion of three active particles. $(b)$. This subfigure shows volume conservation for the three active particles. $(c)$. The numerical solution of $s_2(t)$ is very close to $1$.} \label{fig:invariant-case1}
	\end{figure}
%
%
	\begin{figure}[H]
		\centering
		\subfigure[At   $t=0, 2, 3, 4, 5$ (from left to right).]{
			\begin{minipage}[]{0.8\linewidth}	
				\includegraphics[width=0.18\textwidth,height=0.18\textwidth]{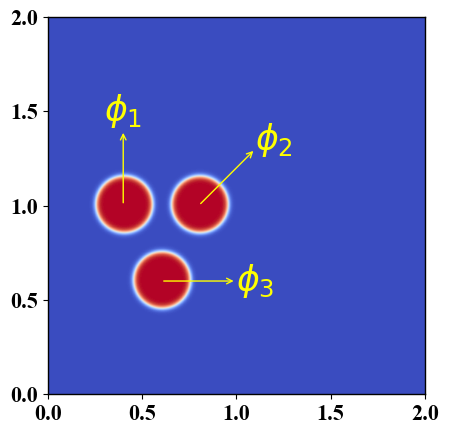}
				\includegraphics[width=0.18\textwidth,height=0.18\textwidth]{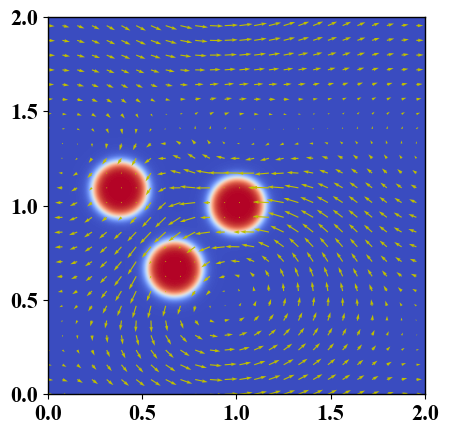}
				\includegraphics[width=0.18\textwidth,height=0.18\textwidth]{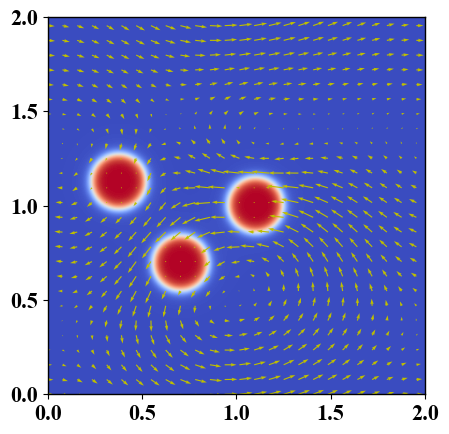}
				\includegraphics[width=0.18\textwidth,height=0.18\textwidth]{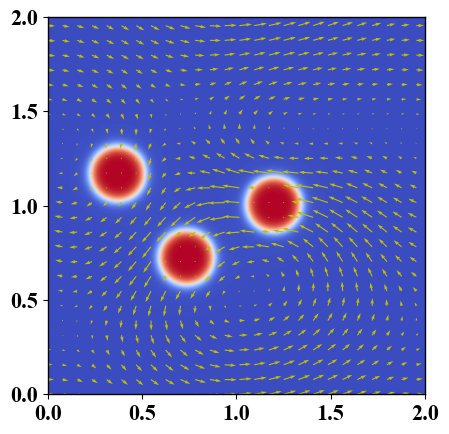}
				\includegraphics[width=0.18\textwidth,height=0.18\textwidth]{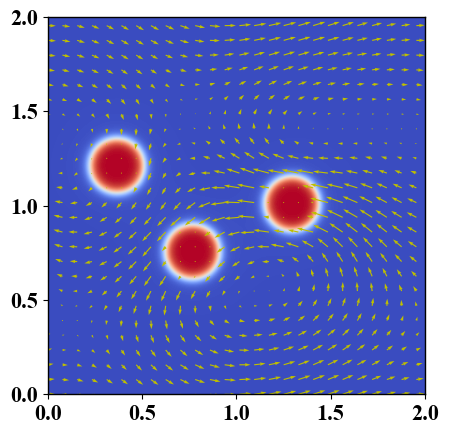}
			\end{minipage}
		}
		\subfigure[At   $t=6, 7, 8, 9, 10$ (from left to right).]{
			\begin{minipage}[]{0.8\linewidth}		
				\includegraphics[width=0.18\textwidth,height=0.18\textwidth]{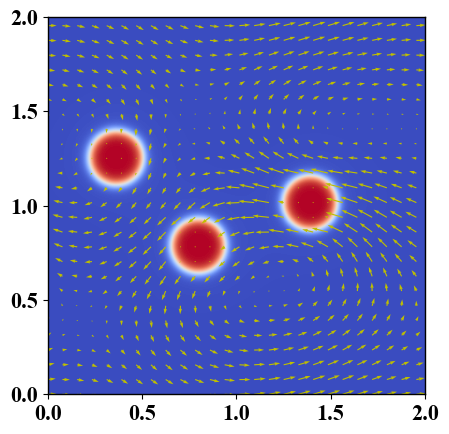}
				\includegraphics[width=0.18\textwidth,height=0.18\textwidth]{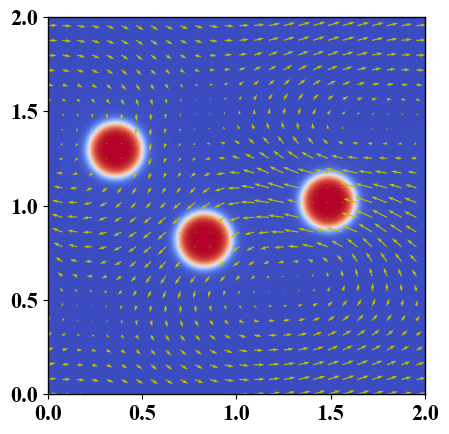}
				\includegraphics[width=0.18\textwidth,height=0.18\textwidth]{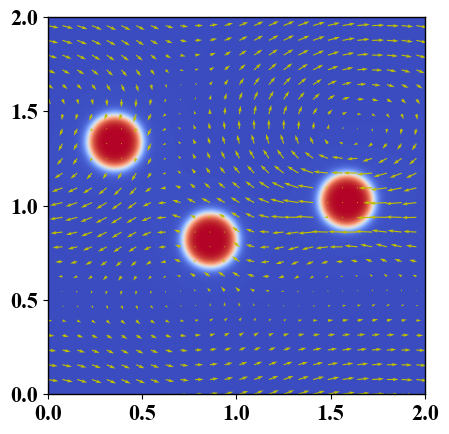}
				\includegraphics[width=0.18\textwidth,height=0.18\textwidth]{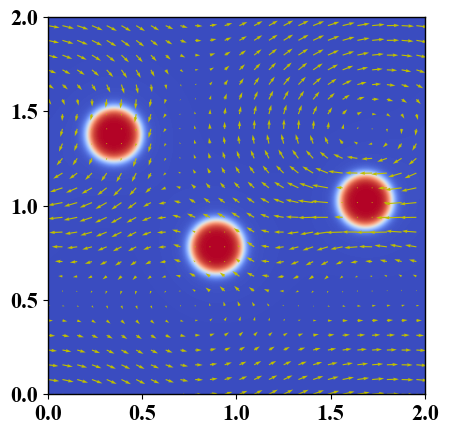}
				\includegraphics[width=0.18\textwidth,height=0.18\textwidth]{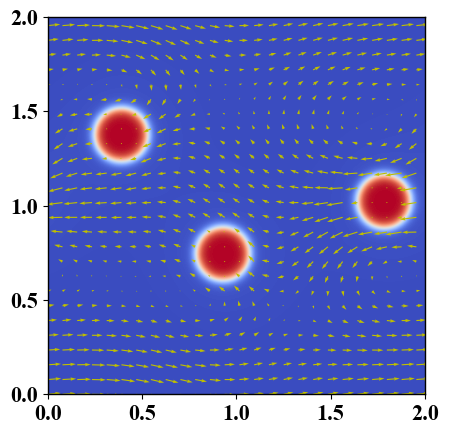}
			\end{minipage}	
		}
		\subfigure[At   $t=11, 12, 13, 14, 15$ (from left to right).]{
			\begin{minipage}[]{0.8\linewidth}		
				\includegraphics[width=0.18\textwidth,height=0.18\textwidth]{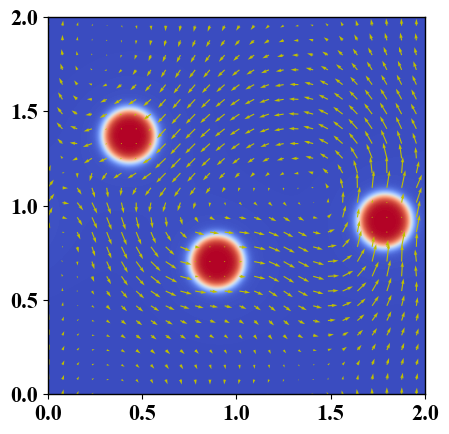}
				\includegraphics[width=0.18\textwidth,height=0.18\textwidth]{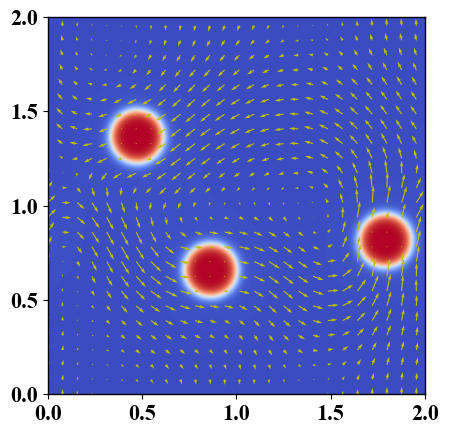}
				\includegraphics[width=0.18\textwidth,height=0.18\textwidth]{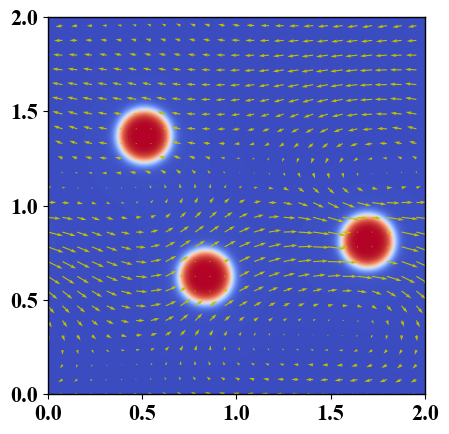}
				\includegraphics[width=0.18\textwidth,height=0.18\textwidth]{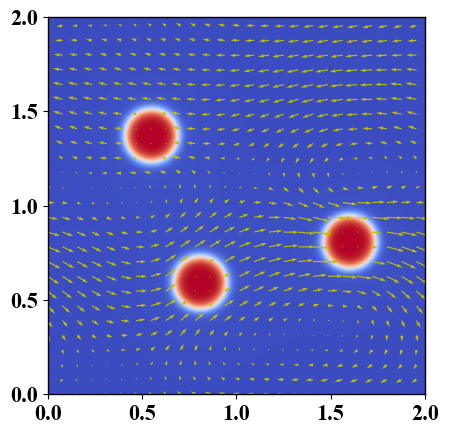}
				\includegraphics[width=0.18\textwidth,height=0.18\textwidth]{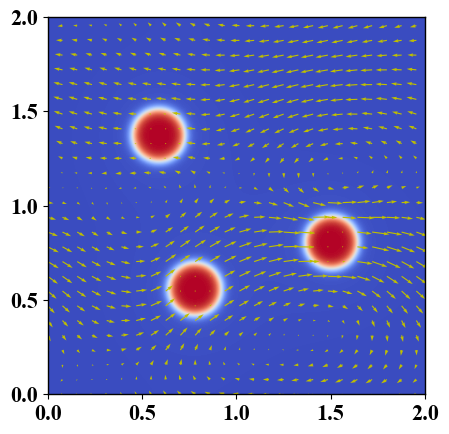}
			\end{minipage}	
		}
		\subfigure[At   $t=16, 17, 18, 19, 20$ (from left to right).]{
			\begin{minipage}[]{0.8\linewidth}		
				\includegraphics[width=0.18\textwidth,height=0.18\textwidth]{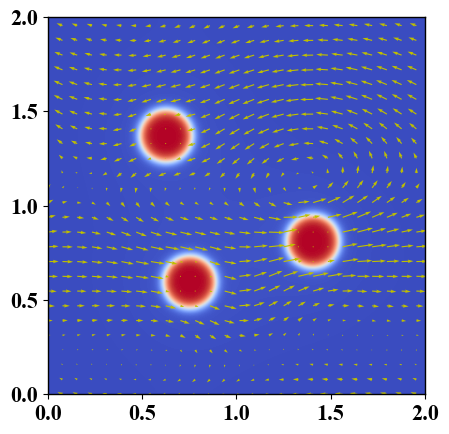}
				\includegraphics[width=0.18\textwidth,height=0.18\textwidth]{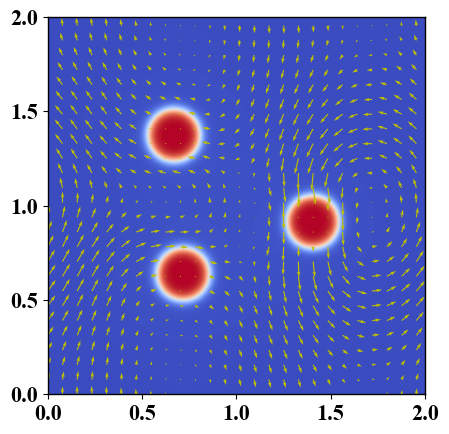}
				\includegraphics[width=0.18\textwidth,height=0.18\textwidth]{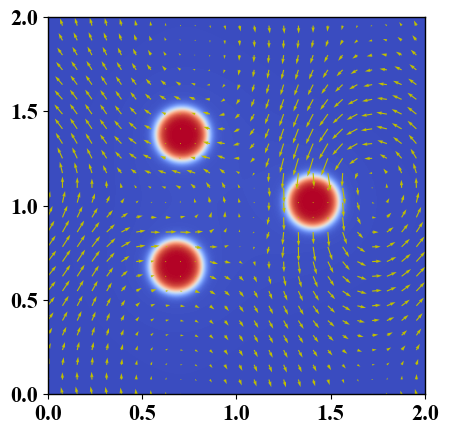}
				\includegraphics[width=0.18\textwidth,height=0.18\textwidth]{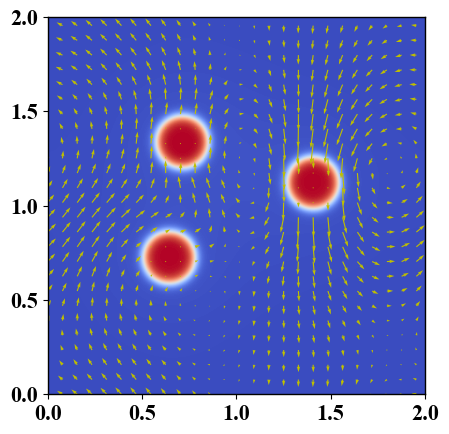}
				\includegraphics[width=0.18\textwidth,height=0.18\textwidth]{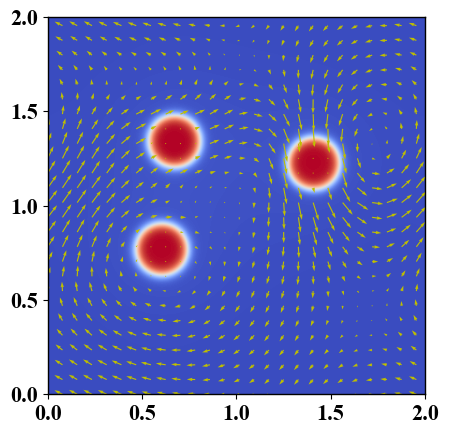}
			\end{minipage}	
		}
		\caption{{\bf Example \ref{eg:3-active-particles} (Case II):} 2D dynamical evolution of particle  profile $\phi_i$ $(i=1, 2, 3)$ for three active particles, where some snapshots are taken from $t = 0$ to $20$. The mobility parameter is $M = 10^{-4}$ and time step $\tau = 10^{-4}$, where the arrows represent the velocity director.
		} \label{fig1:active-case2-pcolor}
	\end{figure}
	\noindent {\bf Case II: run and tumble motion.} Next, we simulate the run and tumble motion of three active particles. We set $ (r_1, x_1, y_1)$ = $(0.15, 0.40, 1.00)$, $(r_2, x_2, y_2) = (0.15, 0.80, 1.00)$, $(r_3, x_3, y_3)$ = $(0.15, 0.60, 0.60)$,  denote the self-propelling velocity of the particles as $\bp_i(t)$, $i=1, 2, 3$ and sample $\bp_i(t)$ with respect to the Poisson distribution in time to determine when it tumbles and their dirctions, where the three active speeds are $\|\bp_1\| = 0.05$, $\|\bp_2\| = 0.12$ and $\|\bp_3\| = 0.06$, respectively.  In Figure \ref{fig1:active-case2-pcolor} and Figure \ref{fig1:active-case2}, we depict the velocity  and vorticity field, respectively around the three self-propelling active particles.  From the numerical experiment, we observe that the active particles run and tumble in time, inducing a changing flow field around them. Moreover, a much more heterogeneous flow field is induced in this case than in the run motion alone and  vortices are formed near the particles. 
	Once again, the particle shapes are maintained very well during the numerical simulation.
	Time evolution of the modified free energy for this case is plotted in Figure \ref{fig:invariant-tumble}, which shows the energy decays with time monotonically. Figure \ref{fig:invariant-tumble} shows that our numerical scheme preserves the total volume very well and the value of $s_2$ is maintained very close to $1$.

	\begin{figure}[H]
		\centering
		\subfigure[Snapshots of vorticity at $t=0.1, 2, 3, 4, 5$.]{
			\begin{minipage}[]{0.8\linewidth}		
				\includegraphics[width=0.17\textwidth,height=0.146\textwidth]{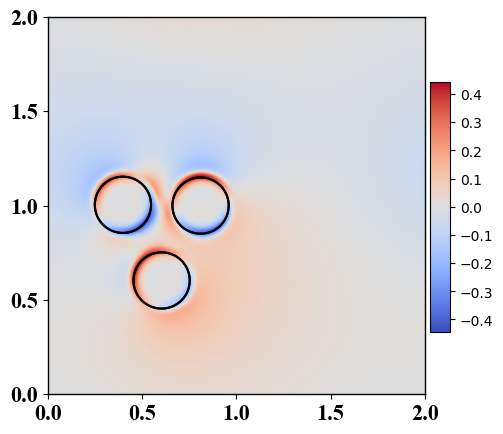}
				\includegraphics[width=0.17\textwidth,height=0.146\textwidth]{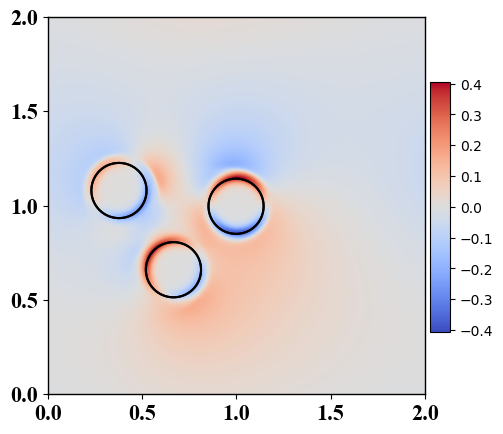}
				\includegraphics[width=0.17\textwidth,height=0.146\textwidth]{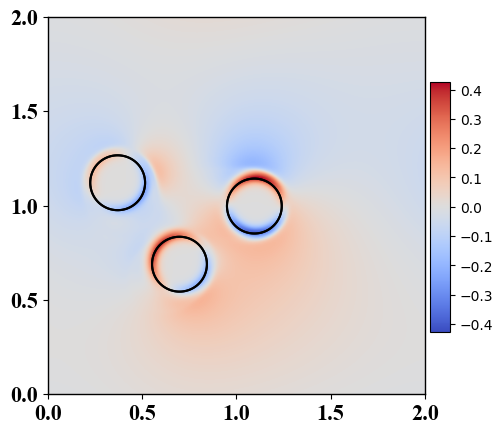}
				\includegraphics[width=0.17\textwidth,height=0.146\textwidth]{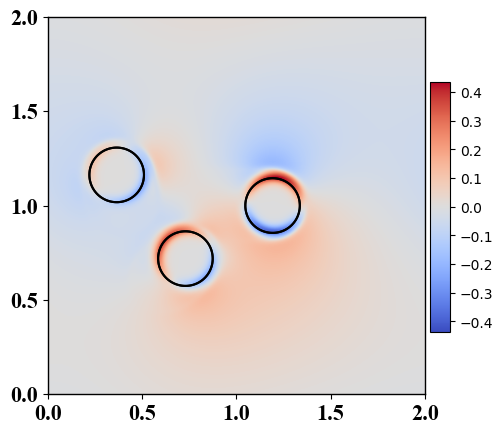}
				\includegraphics[width=0.17\textwidth,height=0.146\textwidth]{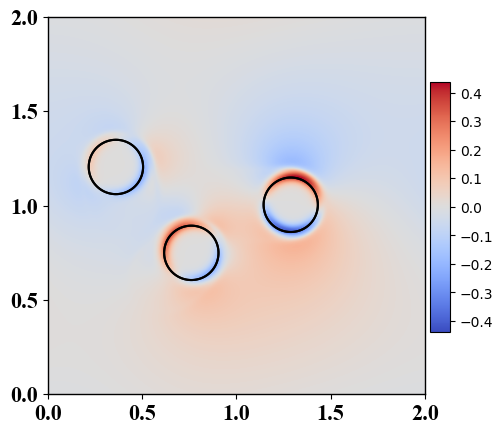}
			\end{minipage}	
		}
		\subfigure[Snapshots of vorticity at $t=6, 7, 8, 9, 10$.]{
			\begin{minipage}[]{0.8\linewidth}	
				\includegraphics[width=0.17\textwidth,height=0.146\textwidth]{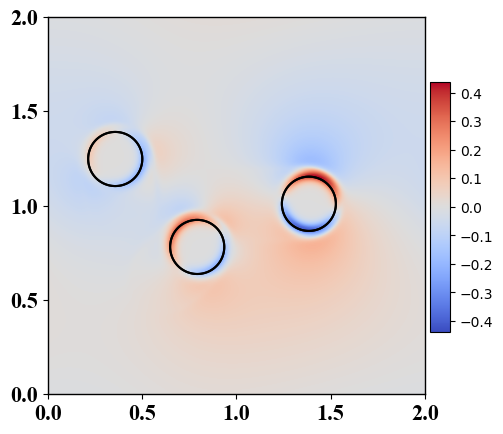}
				\includegraphics[width=0.17\textwidth,height=0.146\textwidth]{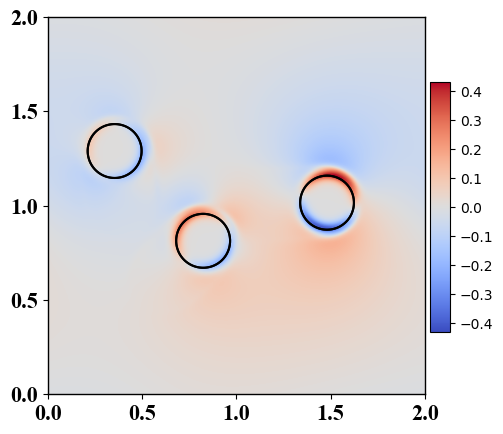}
				\includegraphics[width=0.17\textwidth,height=0.146\textwidth]{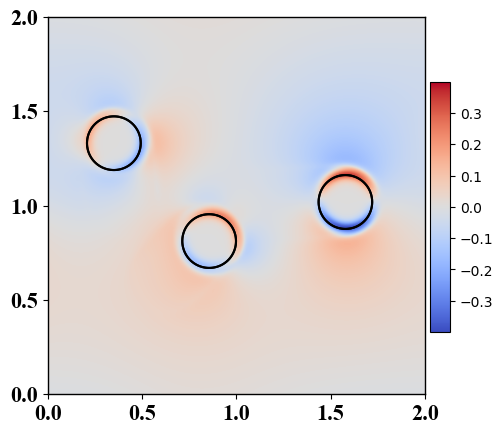}
				\includegraphics[width=0.17\textwidth,height=0.146\textwidth]{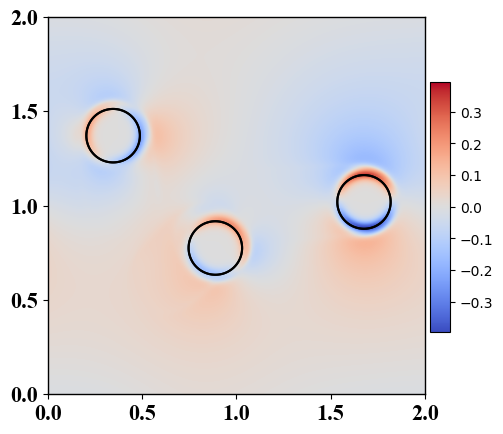}
				\includegraphics[width=0.17\textwidth,height=0.146\textwidth]{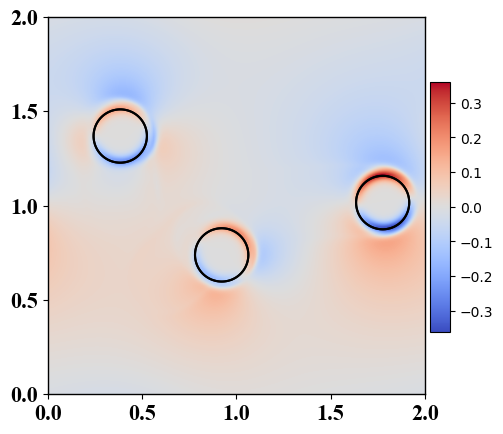}
			\end{minipage}
		}
		\subfigure[Snapshots of vorticity at $t=11, 12, 13, 14, 15$.]{
			\begin{minipage}[]{0.8\linewidth}		
				\includegraphics[width=0.17\textwidth,height=0.146\textwidth]{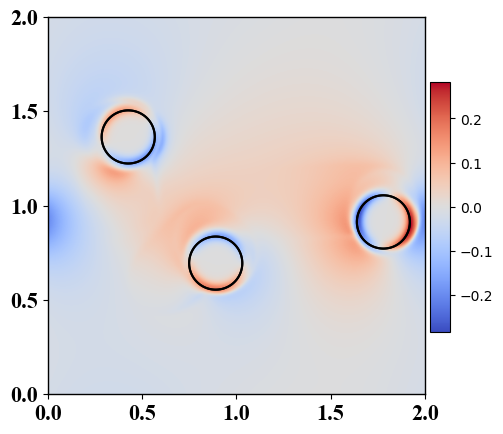}
				\includegraphics[width=0.17\textwidth,height=0.146\textwidth]{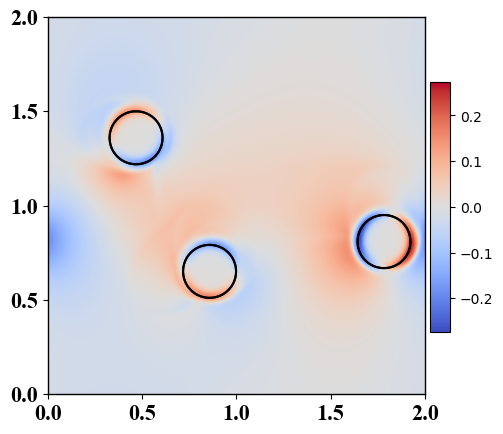}
				\includegraphics[width=0.17\textwidth,height=0.146\textwidth]{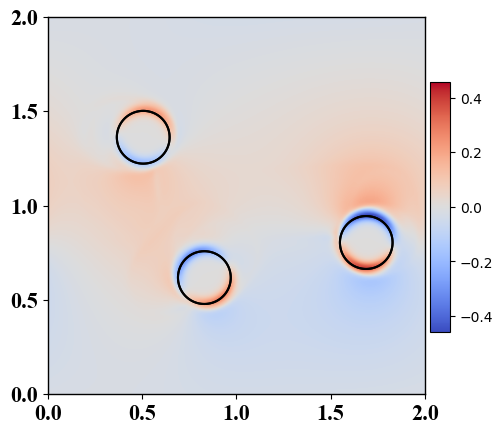}
				\includegraphics[width=0.17\textwidth,height=0.146\textwidth]{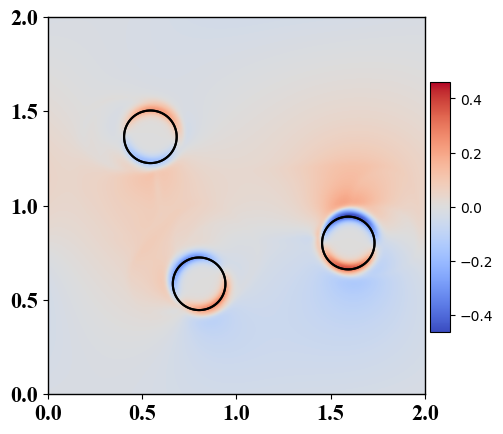}
				\includegraphics[width=0.17\textwidth,height=0.146\textwidth]{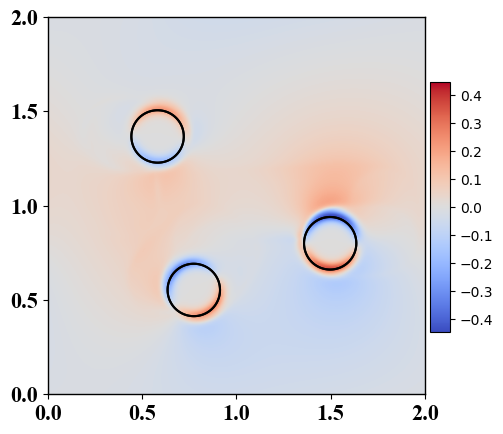}
			\end{minipage}
		}
		\subfigure[Snapshots of vorticity at $t=16, 17, 18, 19, 20$.]{
			\begin{minipage}[]{0.8\linewidth}	
				\includegraphics[width=0.17\textwidth,height=0.146\textwidth]{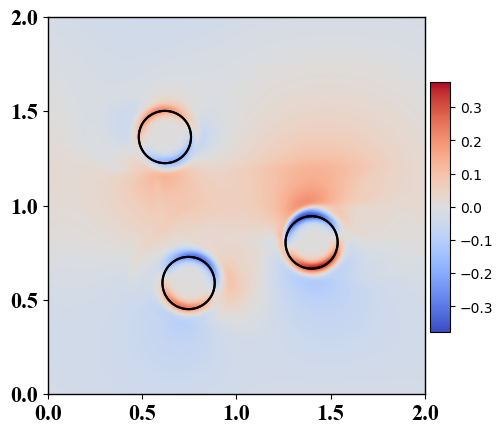}
				\includegraphics[width=0.17\textwidth,height=0.146\textwidth]{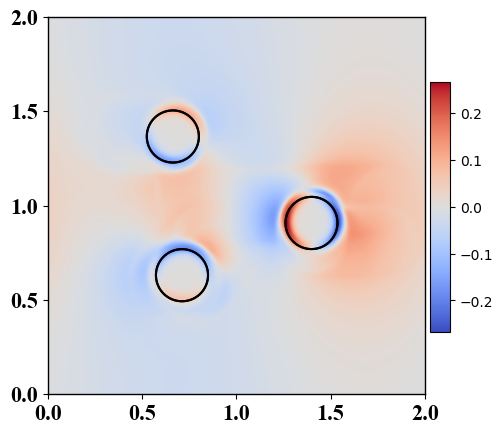}
				\includegraphics[width=0.17\textwidth,height=0.146\textwidth]{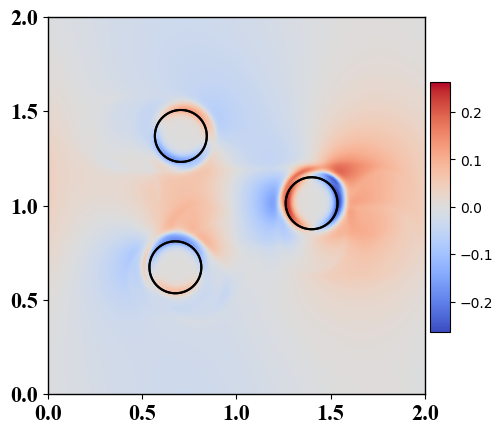}
				\includegraphics[width=0.17\textwidth,height=0.146\textwidth]{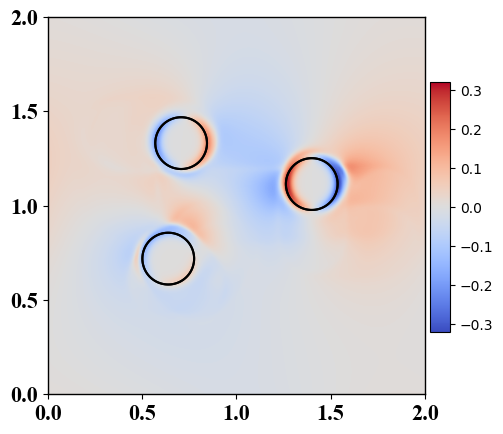}
				\includegraphics[width=0.17\textwidth,height=0.146\textwidth]{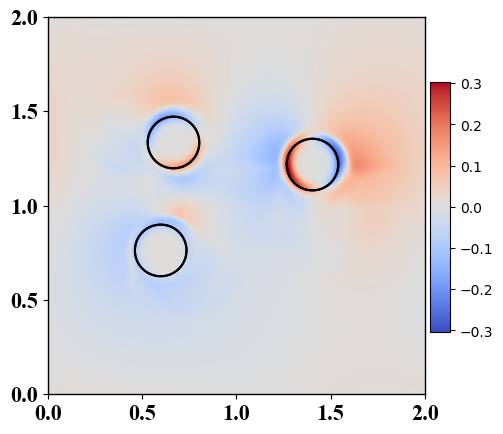}
			\end{minipage}	
		}
		\caption{{\bf Example \ref{eg:3-active-particles} (Case II):} 2D snapshots of the vorticity field around the three active particles at a sequence of time slots from $t=0.1$ to $20$ with $M = 10^{-4}$ and $\tau = 10^{-4}$. The contour of $\phi_{ i } = 0.5$ is shown in black.} \label{fig1:active-case2}
	\end{figure}
	
	\begin{figure}[H]
		\centering
		\subfigure[Energy plot vs. time.]{
			\includegraphics[width=0.30\textwidth,height=0.25\textwidth]{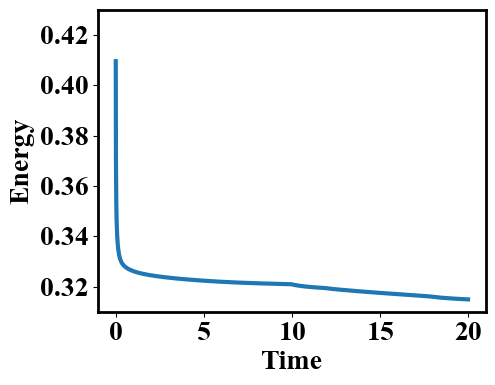}
		}
		\subfigure[Total volume change vs time.]{
			\includegraphics[width=0.30\textwidth,height=0.25\textwidth]{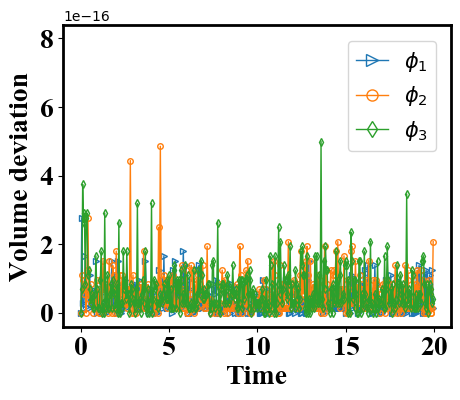}
		}
		\subfigure[$s_2$ vs time.]{
			\includegraphics[width=0.30\textwidth,height=0.25\textwidth]{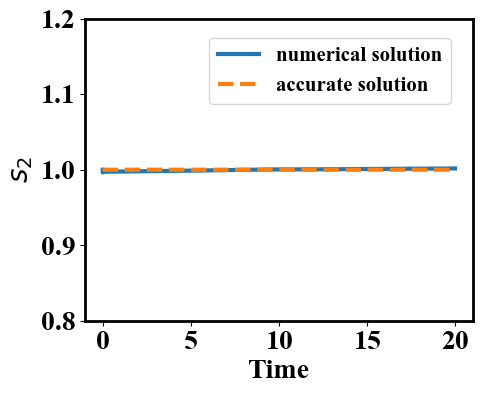}
		}	
		\caption{{\bf Example \ref{eg:3-active-particles} (\textrm{case II}):} $(a)$. Time evolution of the modified energy. $(b)$. Volume conservation for the three active particles. $(c)$. The numerical solution of $s_2(t)$ accurately approximates its exact solution.} \label{fig:invariant-tumble}
	\end{figure}	

\end{example}
%
Finally, we perform a 3D simulation to showcase dynamics of an active particle colliding with a fixed spherical obstacle. In this numerical simulation, we set the self-propelling velocity of a small active particle at $\bp_1 = (0, 0, 0.05)^T$ initially and fix a large spherical ball in the computational domain as the obstacle. After the collision, the small active particle bounces back with a randomly reflected  self-propelling velocity.   The computational domain is   $\Omega=[0, 1] \times [0, 1] \times [0, 2]$ and the initial conditions are given as follows
\begin{align}
	\begin{cases}
		\bu^0(\bx) = \b0, \\
		\phi_i^0(\bx) = 0.5 \left( 1 + \tanh \frac{r_i - \sqrt{(x - x_i)^2 + (y - y_i)^2}}{\sqrt{2} \epsilon} \right),\quad i =1, 2,
	\end{cases}
\end{align}
where $\epsilon = 0.025$, $r_1 = 0.15$, $r_2 = 0.3$, $x_1 = x_2 = y_1 = y_2 = 0.5$, $z_1 = 0.3$, $z_2 = 1.0$. We choose model parameter values as
$M = 10^{-5}$, $R_m = 0.5$, $S = 5 \epsilon$, $\epsilon_2 = 0.01$. The other parameters are the same as we used in the previous example. To conduct this 3D simulation, we use uniform meshes with spatial step size $h_x = h_y = h_z = 1/64$. The dynamical process of a particle colliding with an obstacle is depicted in Figure \ref{fig:3DFSI}, where all the plots are viewed from the same angle. We observe that the small spherical particle rises approaching  to the larger spherical obstacle due to self-propulsion, collision occurs at about $t = 5$, and then active particle bounces back. Since the reflective velocity is chosen randomly in the simulation, we notice that the position of the active particle at $t =10$ is closer to the right in stead at the middle.
\begin{figure}[H]
	\centering
	\subfigure[$t = 0$.]{
		\includegraphics[width=0.18\textwidth, height=0.29\textwidth]{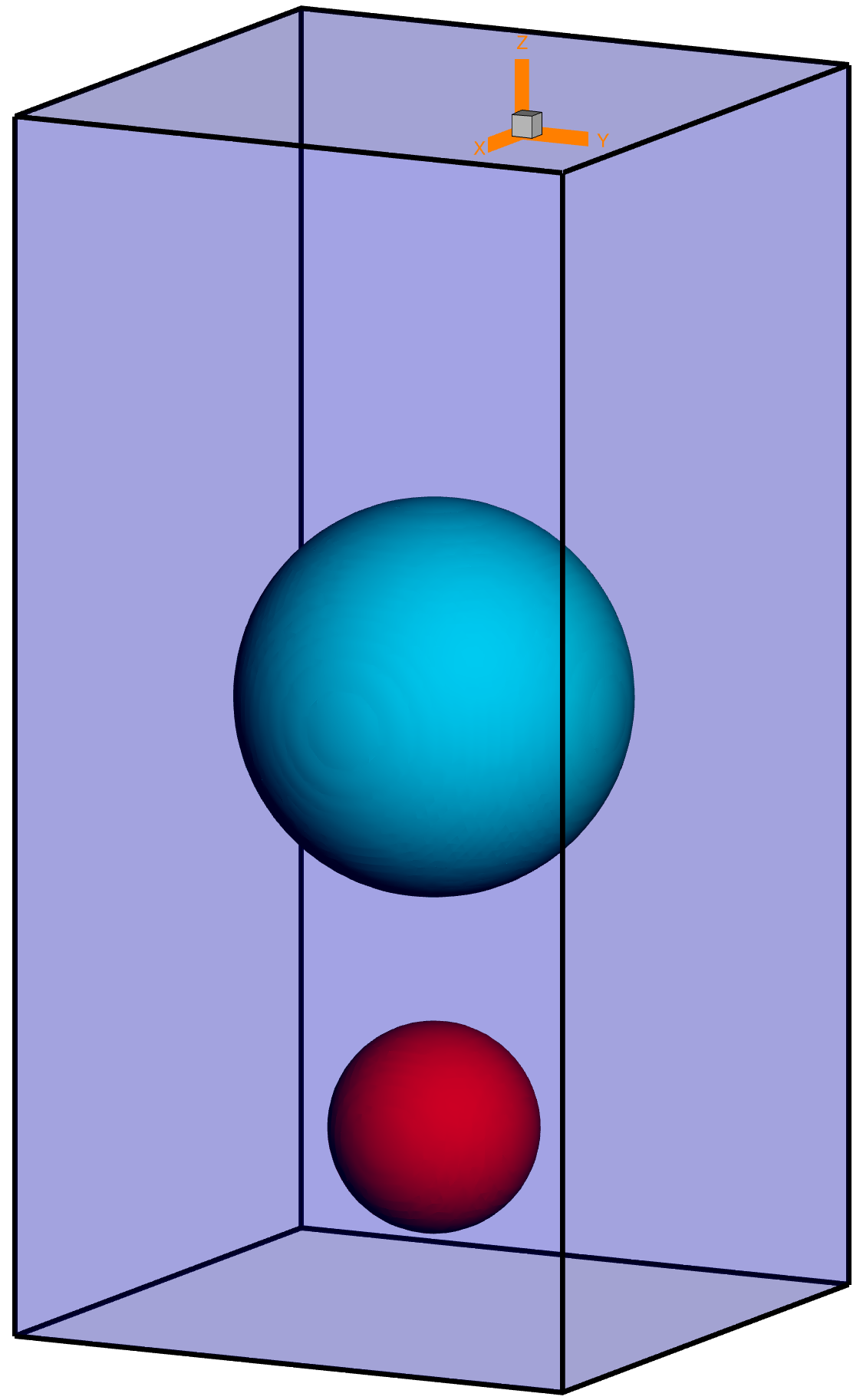}
	}\quad
	\subfigure[$t = 3$.]{	
		\includegraphics[width=0.18\textwidth, height=0.29\textwidth]{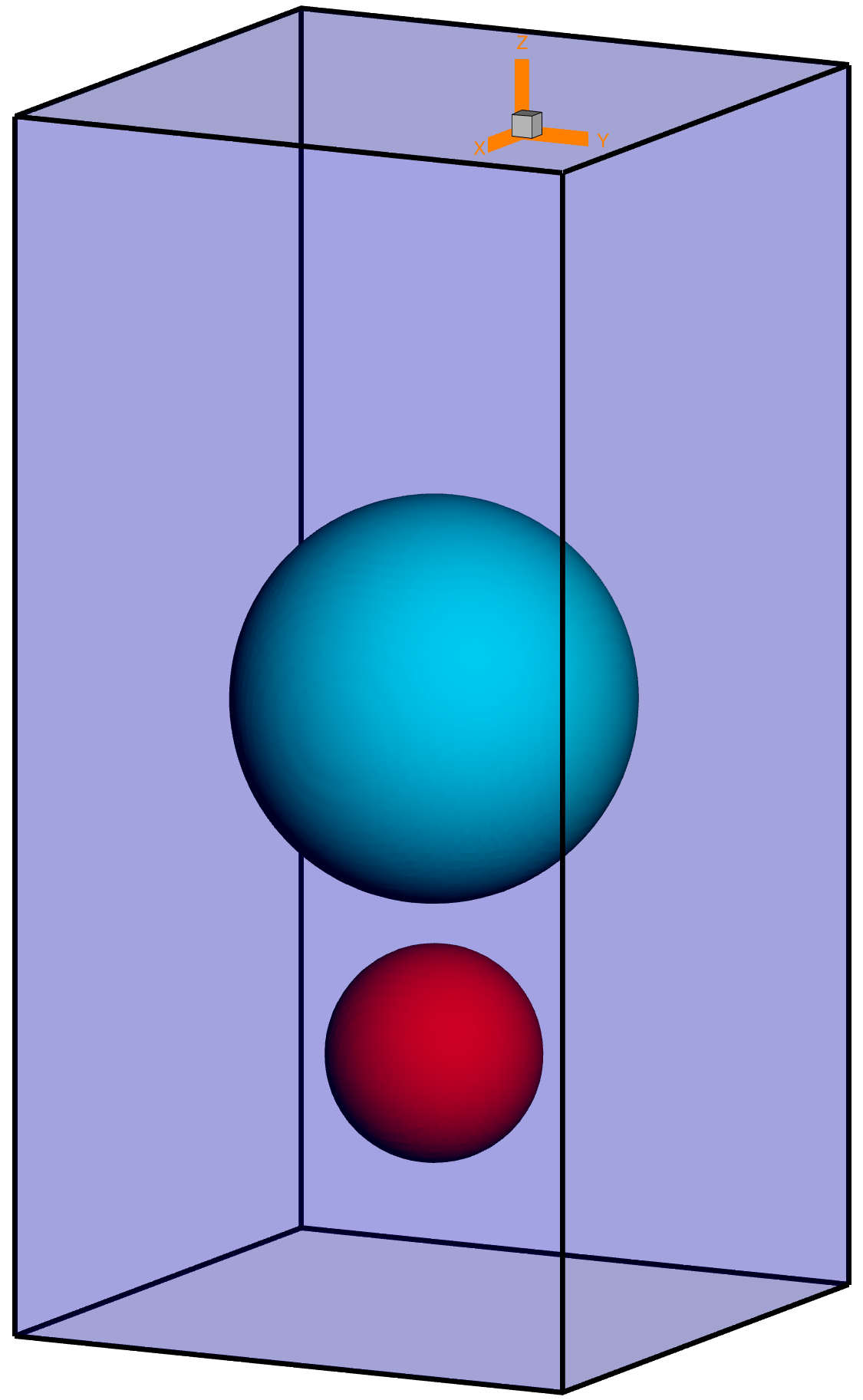}
	}\quad
	\subfigure[$t = 5$.]{	
		\includegraphics[width=0.18\textwidth, height=0.29\textwidth]{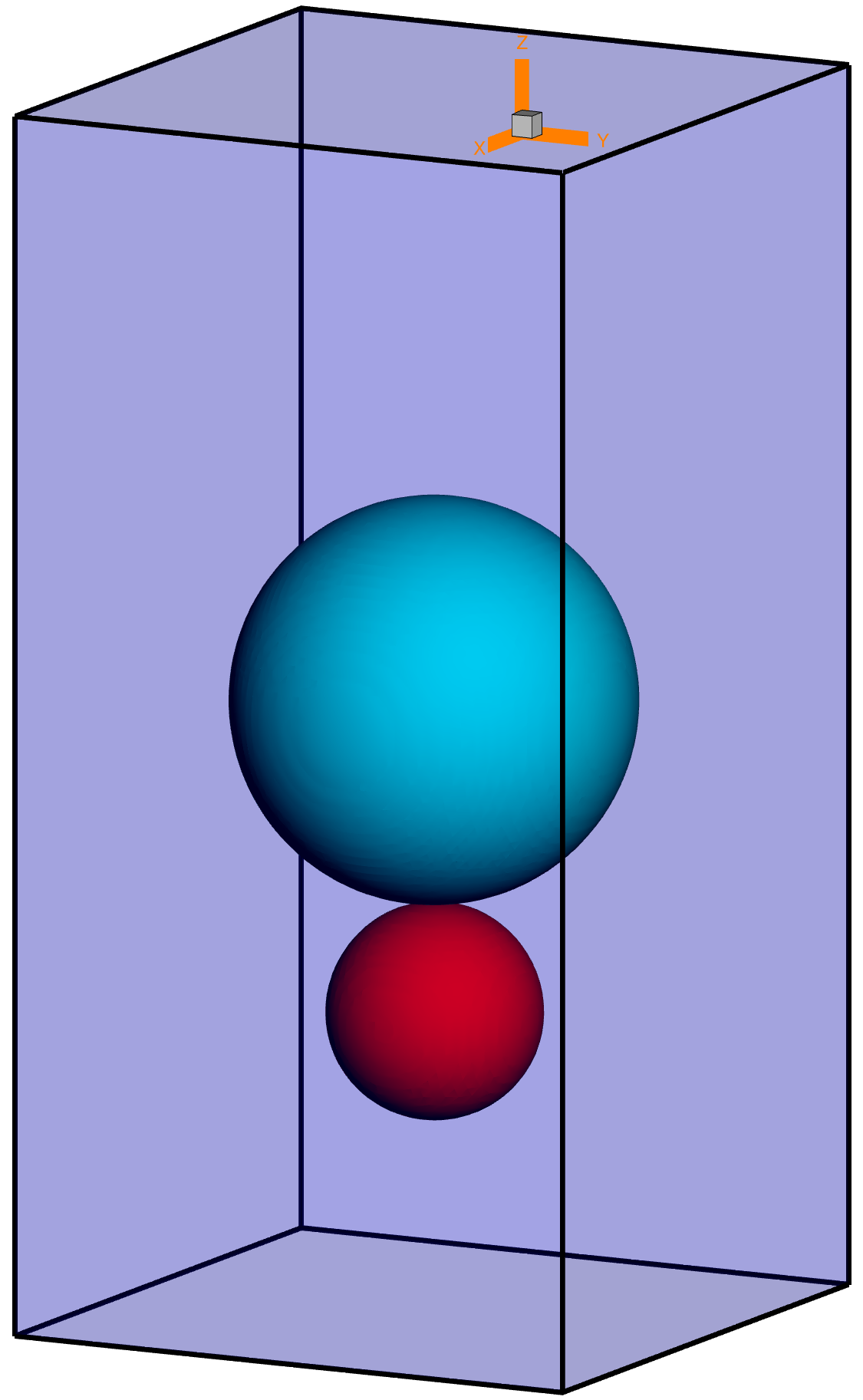}
	}\\
	\subfigure[$t = 8$.]{		
		\includegraphics[width=0.18\textwidth, height=0.29\textwidth]{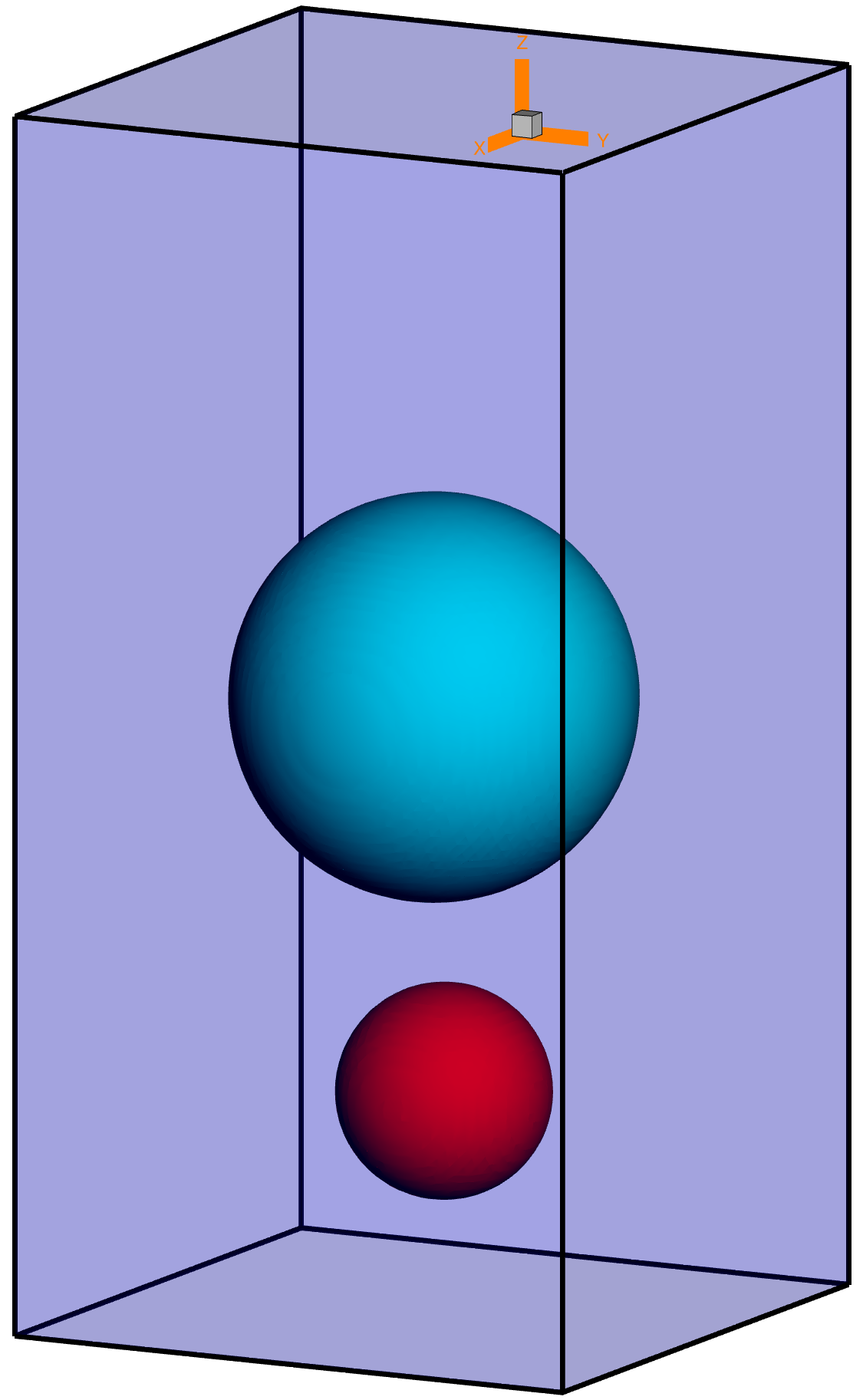}
	}\quad
	\subfigure[$t = 10$.]{	
		\includegraphics[width=0.18\textwidth, height=0.29\textwidth]{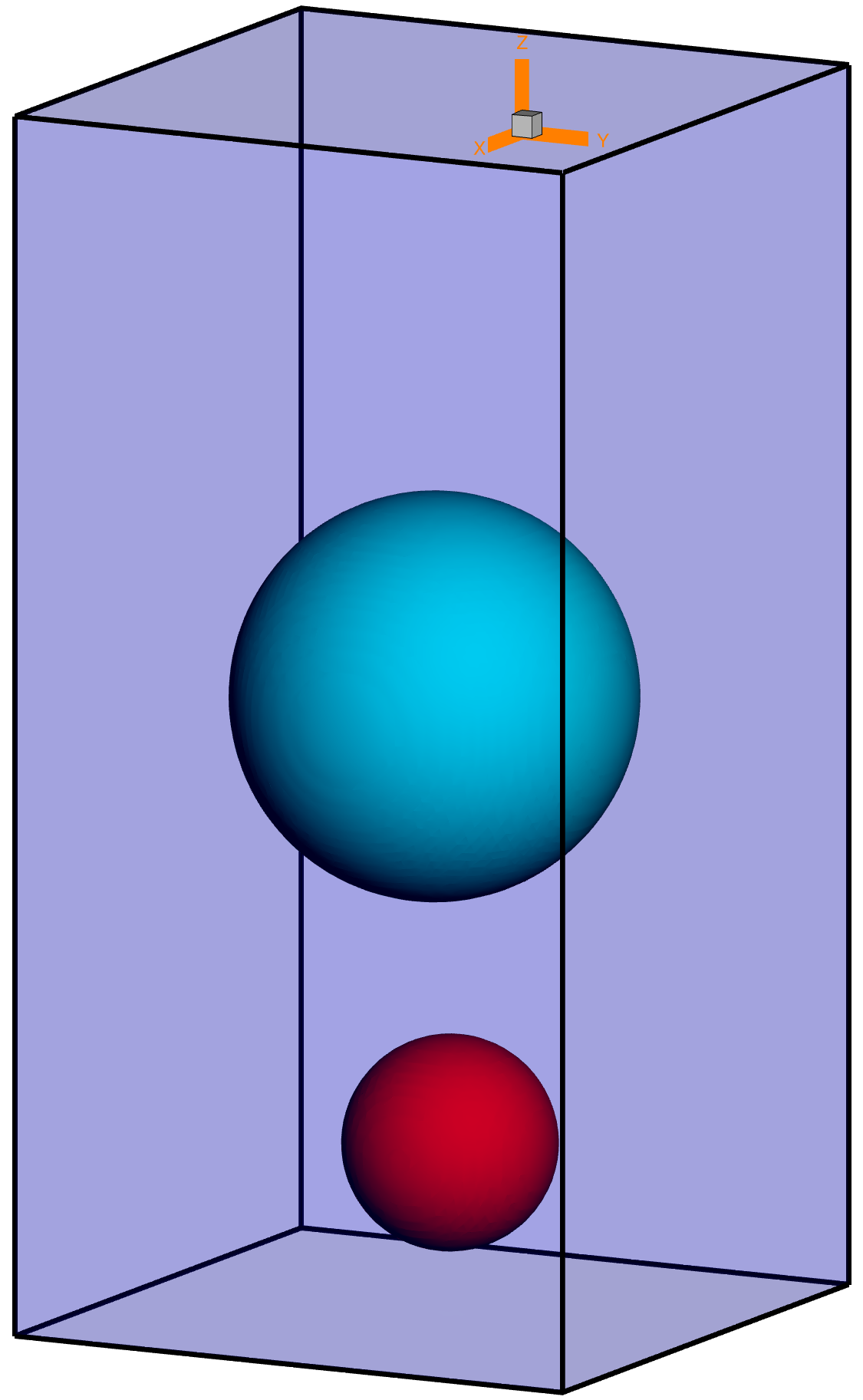}	
	}	
	\caption{Time evolution of a spherical active particle colliding with a fixed spherical obstacle by self-propulsion at different times. The isosurfaces of $\{\phi_1 = 0.5\}$ (active particle) and $\{\phi_2 = 0.5\}$ (obstacle) at $t = 0, 3, 5, 8, 10$ are shown in red, and the blue color represents the background fluid matrix.\label{fig:3DFSI}}
\end{figure}
This example demonstrates the power of the model and the numerical code in simulating complex fluid-structure interaction scenarios.

\section{Concluding remarks}

In this paper, we develop a new computational modeling framework for developing mathematical models to numerically study fluid-structure interaction with solid (rigid or elastic)  particles immersed in a viscous fluid matrix using the phase field embedding approach.
When the particle is rigid, its  state and domain is described by a zero velocity gradient tensor together with  a phase field that defines its profile. A hybrid thermodynamically consistent hydrodynamic model is then derived for the fluid-particle ensemble  by the generalized Onsager principle. When the particle is elastic, the zero velocity gradient tensor constraint is replaced by a constitutive equation valid within the particle. The hyperelastic model reduces to the one for the rigid body in the limit of an infinite elastic modulus. Subsequently, we devise two thermodynamically consistent (i.e. structure-preserving) coupled and decoupled  schemes to solving these models numerically. The newly  proposed schemes with an added  stabilization mechanism are second order accurate with an enhanced stability. Other high order structure-preserving schemes can be devised using Backward Difference (BDF) methods or Runge-Kutta methods.
Mesh refinement tests are carried out to confirm convergence rates of the new schemes.
Several 2D and 3D numerical simulations are conducted to illustrate the thermodynamically consistency and
usefulness of the newly developed model and accompanying schemes in studying flow-active particle interactions. Extension to include other particle's material properties ( viscoelastic, soft etc.) and more high order numerical strategies will be reported in sequels.

\section*{Acknowledgments}
Qi Hong' work is partially supported by the China Postdoctoral Science Foundation through Grant 2020M670116, the Foundation of Jiangsu Key Laboratory for Numerical Simulation of Large Scale Complex Systems (202001), National Natural Science Foundation of China (award 11971051 and  NSAF-U1930402).
Qi Wang' work is partially supported by National Science Foundation of US (award DMS-1815921 and  OIA-1655740) and a GEAR award from SC EPSCoR/IDeA Program.

\end{document}